\newlength\figheight
\newlength\figwidth
\theoremstyle{plain} 
\newtheorem{Theorem}{Theorem}[part]
\newtheorem{Definition}[Theorem]{Definition}
\newtheorem{Proposition}[Theorem]{Proposition}
\newtheorem{Lemma}[Theorem]{Lemma}
\newtheorem{Corollary}[Theorem]{Corollary}
\theoremstyle{plain} 
\newtheorem{Remark}[Theorem]{Remark}
\theoremstyle{definition} 
\newtheorem{Example}[Theorem]{Example}
\newtheorem{Specification}[Theorem]{Specification}
\newcommand{\cadlag}{c\`{a}dl\`{a}g}
\newcommand{\Levy}{L\'{e}vy}
\newcommand{\e}{\text{\normalfont e}} 
\def \R{\mathbb{R}}  
\def \Fc{\mathcal{F}} 
\def \M{\mathbb{M}}
\def \P{\mathbb{P}}
\def \Q{\mathbb{Q}}
\def \E{\mathbb{E}}
\def\EP{\mathbb{P}}\def\EP{\mathbb{E}^\mathbb{P}}
\def\EQ{\mathbb{Q}}\def\EQ{\mathbb{E}^\mathbb{Q}}
\def\EM{\mathbb{M}}\def\EM{\mathbb{E}^\mathbb{M}}
\def\Cov{{\rm Cov}}
\def\Var{\text{Var}}
\renewcommand{\Re}{\operatorname{Re}}
\DeclareRobustCommand{\rchi}{{\mathpalette\irchi\relax}}
\newcommand{\irchi}[2]{\raisebox{\depth}{$#1\chi$}} 
\DeclareRobustCommand{\Chi} {\scalebox{1.2}{$\rchi$} }
\newcommand{\infT}{\overline{T}}
\newcommand{\infTM}{\hat{T}}
\DeclarePairedDelimiter\norm{\lVert}{\rVert}
 \DeclarePairedDelimiter\abs{\lvert}{\rvert}
 \DeclarePairedDelimiterX\innerp[2]{\langle}{\rangle}{#1,#2} 
\DeclareMathOperator{\sign}{sign} 
\newcommand{\stochExp}{\mathcal{E}} 
\newcommand\given[1][]{\:#1\lvert\allowbreak\:\mathopen{}} 
\newcommand{\dontbreak}[1]{{}$\kern-2\mathsurround${}  \binoppenalty10000 \relpenalty10000 #1{}$\kern-2\mathsurround${}} 
\newcommand*\diff{\mathop{}\!\mathrm{d}}
\newcommand*\diffIN{\mathop{}\!\mathrm{d}}
\newcommand{\intspace}{\mathop{}\!}
\newcommand{\dd}{\mathrm{d}}
\newcommand{\quoteStrike}{\tilde{k}}
\newcommand{\normcdf}{\Phi}
\newcommand {\1}[1] {  {\bf 1 } \{#1\}}
\newcommand*{\centerfloat}{%
  \parindent \z@
  \leftskip \z@ \@plus 1fil \@minus \textwidth
  \rightskip\leftskip
  \parfillskip \z@skip}
\newcommand{\suptext}[1] {   { \mathrm{{#1}}}}
\newcommand{\subtext}[1] {   {\mathrm{#1} } }
\newcommand{\fakeRate}{r}
\DeclareMathAlphabet{\mathpzc}{OT1}{pzc}{m}{it}
\newcommand{\assumExp}{\mathpzc{E}}
\title{Rational Models for Inflation-Linked Derivatives }
\author{Henrik  Dam $ ^{1},$ Andrea Macrina $ ^{2,3}$, David Skovmand$^1$, David Sloth $ ^{4}$
\\\\
$ ^1 $ Department of Mathematical Sciences, University of Copenhagen, Denmark. \vspace{4pt}\\
$ ^2 $ Department of Mathematics, University College London, United Kingdom.\vspace{4pt}\\
$ ^3 $ African Institute of Financial Markets \& Risk Management \\ University of Cape Town, South Africa.\vspace{4pt}\\
$ ^4$ Danske Bank, Copenhagen, Denmark.
}
\begin{document}  
\thispagestyle{empty}
\date{\today}
\maketitle
\vspace{-1cm}
\begin{abstract}
We construct models for the pricing and risk management of inflation-linked derivatives. The models are rational in the sense that linear payoffs written on the consumer price index have prices that are rational functions of the state variables. The nominal pricing kernel is constructed in a multiplicative manner that allows for closed-form pricing of  vanilla inflation products suchlike zero-coupon swaps, year-on-year swaps, caps and floors,  and the exotic limited-price-index swap. We study the conditions necessary for the multiplicative nominal pricing kernel to give rise to short rate models for the nominal interest rate process. The proposed class of pricing kernel models retains the attractive features of a nominal multi-curve interest rate model, such as closed-form pricing of nominal swaptions, and it isolates the so-called inflation convexity-adjustment term arising from the covariance between the underlying stochastic drivers. We conclude with examples of how the model can be calibrated to EUR data.\footnote{The authors are grateful to L.~P.~Hughston and to participants of the JAFEE 2016 Conference (Tokyo, August 2016), the CFE 2016 Congress (Sevilla, December 2016), 2016 QMF Conference (Sydney, December 2016), and of the London Mathematical Finance Seminar held at King's College London (30 November 2017) for comments and suggestions. The authors especially acknowledge high-quality feedback provided by two anonymous reviewers.}
\\ \vspace{-0.2cm}\\
{\bf Keywords:} 
Inflation-linked derivatives, rational term structure models, convexity adjustment, calibration, pricing kernels, year-on-year swap, limited price index.
\\ \vspace{-0.2cm}\\
{\bf AMS subject classification:} 60J25, 60H30, 91G20, 91G30.
\end{abstract}
\section{Introduction} \label{RISecIntro}
The inflation market has grown in the aftermath of the 2008 financial crisis. Central banks have been conducting aggressive quantitative easing to keep inflation off the cliff of deflation, and the ensuing fears have driven hedging needs. As a consequence, the market for trading inflation has soared to the point where standard inflation derivatives are now cleared on the London Clearing House (LCH) in numbers exceeding 100 bn EUR measured by notional outstanding value in early 2017. As this number only counts linear derivatives, the total market size is likely much larger. Among the products cleared one finds the Year-on-Year swap (YoY swap), swapping annual inflation against a fixed strike, and the Zero-Coupon swap (ZC swap), which swaps cumulative inflation against a fixed strike at maturity.

 Among the OTC-traded nonlinear derivatives, the most important is arguably the YoY cap/floor, which is in principle a portfolio of calls (caplets) or puts (floorlets) with equal strike on YoY inflation. Another significant derivative is the ZC cap/floor, which is simply a call/put on the ZC swap rate. The derivatives market is dwarfed in size by the market for inflation-linked bonds. These bonds are typically government-issued debt where the principal is linked to the consumer price index (CPI) or similar. The bonds often have an embedded YoY floor protecting the principal from being adjusted downwards by deflation. Limited Price Index  (LPI) products come with both a lower and upper bound on the principal adjustment creating a path-dependent collar on inflation. Despite its exotic nature LPIs have been in high demand by pension funds. All products should ideally be priced in a consistent manner using a tractable arbitrage-free model. Cap/floor products display volatility skews and non-flat term structures of volatility, both of which the model also should be able to capture. Besides, the model should yield closed-form solutions for the price of the most traded derivatives, here the YoY and the ZC cap/floor.

\citep{Hughston1998} develops a general arbitrage-free theory of interest rates and inflation in the case where the consumer price index and the real and nominal interest rate systems are jointly driven by a multi-dimensional Brownian motion. This approach is based on a foreign exchange analogy in which the CPI is treated as a foreign exchange rate, and the ``real'' interest rate system is treated as if it were the foreign interest rate system associated with the foreign currency. The often-cited work by \citep{JarrowYildirim2003} makes use of such a setup. They consider a three-factor model (i.e., driven by three Brownian motions) in which the CPI is modelled as a geometric Brownian motion, with deterministic time-dependent volatility and the two interest rate systems are treated as extended Vasicek-type (or Hull-White) models. Similar to \citep{JarrowYildirim2003},  \citep{KainthDodgson2006}  use a short-rate approach where the nominal and the inflation rates are both modelled by Hull-White processes while discarding the idea of a real economy. A GBM-based model for the CPI provides the baseline framework for how one might understand implied volatility in such a market, but any GBM model for the CPI does not, by construction, reproduce volatility smiles.

Further development of inflation models has paralleled that of interest rates models. For example inflation counterparts to the nominal LIBOR Market Model, see for example \citep{BrigoMercurio2007}, have been studied in \citep{BelgradeEtAl2004}, \citep{Mercurio2005}, and \citep{MercurioMoreni2006}. While these models can reproduce smiles---augmented with stochastic volatility or jumps---they rely on numerically intensive algorithms or approximations for the pricing of ZC cap/floors, in particular. One may say similarly of the models by \citep{Kenyon2008}, \citep{GretarssonEtAl2012}, and \citep{MercurioMoreni2009} who in a similar manner use forward inflation, or in the case of \citep{Hinnerich2008} the forward inflation swap rate, as the model primitive. \citep{Waldenberger2017} builds an inflation counterpart to the nominal model of \citep{GrbacEtAl2015} and \citep{Keller-ResselEtAl2013}.  One also finds \citep{Ribeiro2013} in the local volatility context, \citep{Kruse2011} extending the GBM methodology with \citep{Heston1993} stochastic volatility, and \citep{SingorEtAl2013}   adding stochastic volatility to the \citep{JarrowYildirim2003} model.
Our work is inspired by the approach to nominal term structure of interest rates based on the so-called {\it rational models}. This choice is motivated by the success of the rational model framework as documented in the comprehensive empirical study of \citep{FilipovicLarssonTrolle2017} who demonstrate that linear-rational models perform as well or better than similar affine term structure models. Furthermore, the rational model framework has been extended to model multiple nominal curves and credit risk in \citep{CrepeyEtAl2016} and \citep{MacrinaMahomed2018}; it is this approach we follow. This framework allows for analytical expressions for swaptions, which is not the case for affine term structure models. In this paper, we demonstrate how rational models for inflation are constructed, which retain the tractability of the nominal counterpart and can price, in closed-form, all the relevant derivatives suchlike YoY and ZC cap/floors and LPI swaps.
  
   In Section 2, we first present the model in full generality. Following \cite{DoberleinSchweizer2001}, we study the conditions for a short rate model representation to be obtained. In Section 3 we derive option pricing formulae under different assumptions in the driving process, and in Section 4 we end with an example that shows how the model can be simultaneously calibrated to inflation derivatives and a multiple-curve nominal interest rate market.
\section{Rational term structures} \label{RISecRatTermStruc}
We adopt the pricing kernel approach, which was pioneered  by 
\citep{Constantinides1992}, \citep{FlesakerHughston1996a}, \citep{FlesakerHughston1996b}  and \citep{Rogers1997}---for a good summary see \citep{HuntKennedy2004} and, for a more recent account, \citep{GrbacRunggaldier2016}.  \citep{MacrinaMahomed2018} propose pricing kernel models to construct so-called curve-conversion factor processes, which link distinct yield-curves in a consistent arbitrage-free manner, and which give rise to the across-curve pricing formula for consistent valuation and hedging of financial instruments across curves. Applications include the pricing of inflation-linked and hybrid fixed-income securities. A property of the pricing kernel approach is the ease with which the pricing and hedging of multiple currencies can be handled. This is the property one benefits from when considering inflation-linked pricing, and nominal and real economies are introduced in analogy to domestic and foreign economies. Compared to the classical approach, in order to allow for negative short rates, we relax the paradigm  and consider  general semimartingale dynamics for the pricing kernels. The approach taken next is one where the existence of a pricing kernel model is postulated and its dynamics are modelled. It is via the pricing kernel that no-arbitrage price processes of tradable assets are generated by imposing that the asset price process, when multiplied by the pricing kernel process, be a martingale with respect to the probability measure the pricing kernel dynamics are produced. This no-arbitrage notion is one presented in textbooks suchlike, e.g., \cite{HuntKennedy2004} and \cite{Bjork2009}.
\subsection{General model} \label{RISectGenMod}
We model a financial market by a filtered probability space $(\Omega,\Fc,\P,(\Fc_t)_{0\leq t})$, where $\P$ denotes the real probability measure and $(\Fc_t)_{0\leq t}$ the market filtration satisfying the usual conditions. A finite time horizon is considered, i.e., a time line $0\le t\le T< \infT<\infty$, throughout. 
\begin{Definition}[Pricing kernel]
We call a stochastic process $(\pi_t)_{0\leq t}$ with $\pi_0=1$ a pricing kernel if it is a strictly positive, {\cadlag}, semimartingale such that $\pi_t$ has  finite expectation for all $t\geq 0$.  
\end{Definition}
Let  $\mathcal{L}_T^1(\mu;\pi) = \{ \Chi: \Omega \to \R \text{ s.t. } \Chi \text{ is } \Fc_T\text{-measurable and }  \E^{\mu} [ \abs{ \pi_T \Chi } < \infty]\}$ where $\mu$ is a probability measure on $(\Omega,\Fc)$. Let $(\pi^{\suptext{N}}_t)_{0 \leq t }$ be the (nominal) pricing kernel process. If we consider some claim  $\Chi \in \mathcal{L}_T^1(\P;\pi^{\suptext{N}}) $, then by  standard no-arbitrage theory, see e.g. \citep{HuntKennedy2004}, the process $(V^{\chi} _t  )_{0 \leq t \leq T} $, defined by
\begin{equation} \label{RIeqPricingFormula}
V^{\chi}_{t}  = \frac{1}{\pi_t^{\suptext{N}}} \EP_t  \left[  \pi_T^{\suptext{N}} \Chi\right],
\end{equation}
is an arbitrage-free price process.  The notation $\E_t[\cdot]$ is short-hand for $\E[\,\cdot\,\vert\mathcal{F}_t]$. Following \citep{NguyenSeifried2015}[Proposition 2.2], we have:
\begin{Proposition}
Consider $n$ assets with price processes $(S^1_t),\dots,(S^n_t)$ satisfying Eq.~\eqref{RIeqPricingFormula}, i.e., such that $(\pi_t^{\suptext{N}} S_t^i )$ is a $\P$-martingale for $i=1,\dots,n$. Assume the asset with strictly positive price process $(S_t^1)$ is traded. Then, the market is free of arbitrage.
\end{Proposition}
\begin{proof}
By Eq. \eqref{RIeqPricingFormula}, the process $\xi_t := S^1_t \pi^{\suptext{N}}_t /S^1_0$ is a strictly positive martingale with $\xi_0=1$. A measure $\Q$ may be defined by $\xi_t = \left. \frac{\dd \Q}{\dd \P}  \right \lvert_{\Fc _t}$ on any finite interval, and by the Bayes' rule one obtains
\begin{equation*}
\EQ_t \left [ \frac{S^i_T}{S^1_T} \right  ] = \frac{\EP_t \left [\xi_T (S^i_T/S^1_T) \right ]}{\xi_t} = \frac{\EP_t \left [\pi^{\suptext{N}}_T S^i_T \right ]}{\pi_t S^1_t}  = \frac{S^i_t}{S^1_t},
\end{equation*}
for $2\leq i \leq n$. Thus, $\Q$ is a risk-neutral measure associated with the numeraire $(S^1_t)$. The existence of a pricing kernel, here the process $(\pi^N_t)_{0\le t}$, guarantees absence of arbitrage, also in the case of uncountably many assets. Here we refer to the no-arbitrage notion of ``no asymptotic free lunch with vanishing risk'' (NAFLVR) developed in \citep{CKT2016}.
\end{proof}
We are agnostic as to how the asset with price process $(S^1_t)$ is chosen; for example it may be a zero-coupon bond. From formula \eqref{RIeqPricingFormula} it follows that, for $0\le t\le T$, the nominal zero-coupon bond price system,
\begin{equation}  \label{RIeqNomBondPriceGen}
P^{\suptext{N}}_{tT}  = \frac{1}{\pi_t^{\suptext{N}}} \EP_t  \left[  \pi_T^{\suptext{N}} \right],   
\end{equation} 
is free of arbitrage opportunities. Assuming that $P^{\suptext{N}}_{tT}$ is differentiable in $T$, the short rate process $(r^{\suptext{N}}_t)_{0 \leq t \leq \infT}$ may be obtained by the well-known relation 
$
r^{\suptext{N}}_t = -\partial_T \ln\left(P_{tT}^{\suptext{N}}\right)\rvert _{T=t}.
$
This tells that $(\pi^{\suptext{N}}_t) $  determines simultaneously the inter-temporal risk-adjustment and the discounting rate. 

The goal is to produce models, which facilitate the pricing of inflation-linked derivatives. To this end, we equip the framework with a real-market analogous to the foreign economy in the foreign-exchange analogy. If we assume that $(\pi_t^{\suptext{R}})_{0\le t}$ is a pricing kernel for the real market, then the foreign-exchange analogy establishes the relationship
\begin{equation}\label{PI-def}
C_t = \pi_t^{\suptext{R}}/\pi_t^{\suptext{N}}
\end{equation}
where $(C_t)_{0\leq t}$ denotes the CPI process that acts as an exchange rate from the nominal to the real economy, see, e.g. \cite[Proposition 17.11]{Bjork2009}.

As in \citep{FlesakerHughston1996a}, \citep{FlesakerHughston1996b}, \citep{Rutkowski1997} and  \citep{Rogers1997},  we introduce an extra degree of flexibility and model prices with respect to an auxiliary measure  $\M$. This extra degree of freedom allows for simplified calculations or more tractable modelling under the $\M$-measure while desirable statistical properties may still be captured under the $\P$-measure. In fact it is also possible to build in  terminal distributions or ``views'' under $\P$, in the spirit of \citep{BlackLitterman1992}, and as explicitly obtained in \citep{Macrina2014}. This is a feature expected by practitioners of inflation-linked trading, motivated by the fact that inflation is an area that often receives significant attention from monetary policymakers and is subject to so-called ``forward guidance''. With regard to how to induce the measure change for such a purpose, we refer to \citep{HoyleEtAl2011}, \citep{Macrina2014} for the multivariate generalisation, and \citep{CrepeyEtAl2016} for an application in a multi-curve term structure setup. We shall model the Radon-Nikodym process $(M_t)_{0 \le t}$ with $M_0 = 1$ as a strictly positive,  {\cadlag} martingale and fix some time $\infTM<\infty$.  Then,
$
\M (A) = \EP[M_{\infTM}\, \1{A}],
$
for $A\in \Fc_{\infTM}$, defines an equivalent  measure. By setting $h^{\suptext{N}}_t =\pi_t^{\suptext{N}} / M_t $, with no loss of generality, we can express the fundamental pricing equation~\eqref{RIeqPricingFormula} under $\M$ by the Bayes formula: 
\begin{align}
V^{\chi} _t   = \frac{1}{\pi_t^{\suptext{N}}} \EP_t  \left[  \pi_T^{\suptext{N}} \Chi\right] 
=\frac{1}{h_t^{\suptext{N}} M_t} \EP_t  \left[ M_T h_T^{\suptext{N}}  \Chi\right] = \frac{1}{h_t^{\suptext{N}} } \EM_t  \left[  h_T^{\suptext{N}}   \Chi\right],   \label{RIeqPricingFormulaM}
\end{align}
for $0 \leq t\leq T  < \infT \wedge \infTM$ and $\Chi \in \mathcal{L}^1_{T} ( \P;\pi^{\suptext{N}})$.  That is, $(h_t^{\suptext{N}})$ is the nominal pricing kernel under the $\M$-measure. Similarly,  the relationship $\pi_t^{\suptext{R}} = M_t h^{\suptext{R}}_t$ introduces the real pricing kernel $(h^{\suptext{R}}_t)$ under $\M$. It follows that, under $\M$, $(h_t^{\suptext{N}})$ and $ (h_t^{\suptext{R}})$   are strictly positive, semimartingales, see \citep{JacodShiryaev2003}[III Theorem 3.13],  and that $C_t h_t^{\suptext{N}} = \, h_t^{\suptext{R}}$ for all $t\geq 0$.
\\

\noindent{\bf Modelling convention}. Let $s_t:=1/C_t$ for $t\ge 0$, for modelling convenience. From the relation \eqref{PI-def}, it then follows $h^N_t=s_t h^R_t$. We model $(h^R_t)$ and $(s_t)$, where $h^R_0=1/s_0=C_0$, as strictly positive semimartingales under $\mathbb{M}$ such that $h^R_t$ and $h^N_t$ have finite expectation for $t\ge 0$. 
\begin{Definition}[Real-kernel spread model]
Let the triplet $(h_t^{\suptext{R}},s_t,M_t)_{0\leq t}$ be such that $(h_t^{\suptext{R}})_{0\leq t }$, $(s_t)_{0\leq t }$ and $(M_t)_{0 \leq t}$   are strictly-positive, {\cadlag}, $h^R_0=1/s_0=C_0$ and $M_0=1$. Furthermore assume $(h_t^{\suptext{R}})_{0\leq t }$ and $(s_t)_{0\leq t }$  are semimartingales and that  $(M_t)_{0 \leq t}$ is a martingale. Denote by $\M$ the measure induced by $(M_t)$. Assume that $h_t^{\suptext{R}}$ and $h_t^{\suptext{R}}  s_t$ have finite expectation for all $t\geq 0$ under $\M$. We call such a triplet a real-kernel spread model (RSM).
\end{Definition}
Often, the pricing of inflation-linked instruments is performed under either the nominal risk-neutral measure $\Q ^{\suptext{N}}$ or the real risk-neutral measure $\Q ^{\suptext{R}}$. In the general setting presented so far, one is not necessarily in a position to get consistent prices under these measures. In Section~\ref{RISecComOthMod}, we treat this issue in the context of some well-known models, which use from the outset a risk-neutral measure. In Section \ref{RISecFundProd} we proceed to the pricing of primary inflation-linked securities in the backdrop of a  more specific model class. In Section~\ref{RISectMChange} we discuss the change to risk-neutral measures in the same model class.  
\subsection{Comparison with other models} \label{RISecComOthMod} 
In this section, we discuss other models and in a few cases show that our specification can be regarded as a generalisation. The comparisons shall help to understand our modelling approach in that they show how our model ingredients would look in known models.

In the case of equity pricing, the benchmark model is the geometric Brownian motion specification of \citep{BlackScholes1973}. In this sense, the most natural translation of this to inflation modelling is done by \citep{KornKruse2004} specifying the inflation index under the nominal risk-neutral measure by
\begin{equation*}
\dd C_t  = C_t  ( r^{\suptext{N}}(t)  - r^{\suptext{R}}(t) )    \dd t + C_t\sigma_C  \dd W^{\suptext{C}}_t,
\end{equation*}
where $r^{\suptext{N}}$ and $r^{\suptext{R}}$ are the deterministic nominal and real interest rates and $(W^C_t)_{0\leq t}$ is a Brownian motion. Black-Scholes-type pricing formulae are derived for ZC caps with payoff function $\max[C_T/C_0-K,0]$, and \citep{Rubinstein1991} derives a pricing formula of a similar type for YoY caplets with payoff function $\max[C_{T_i}/C_{T_{i-1}}-K,0]$. We refer to \citep{Kruse2011}  for the exact formulae. The formulae for the ZC cap and YoY caplet as functions of the volatility parameter $\sigma_C$  can be inverted to implied volatilities as it is commonly done for equity options. This will be relevant in Section~\ref{RISecCaliExam} when we calibrate some specific rational pricing models. 

\citep{JarrowYildirim2003} produce an important generalisation that allows for the pricing of inflation-linked securities with stochastic interest rates. In practice, the popular model specification is to assume that the nominal and the real interest rates have Hull-White dynamics. Under the nominal risk-neutral measure, such a model specification takes the form
\begin{align*}
&\dd r^{\suptext{N}}_t =\left [ \theta_{\subtext{N}}(t) -  a_{\subtext{N}} r^{\suptext{N}} _t \right ] \dd t + \sigma_{\subtext{N}} \dd W^{{\suptext{N}}} _t,&
&\dd r^{\suptext{R}}_t =\left [ \theta_{\subtext{R}}(t) - \rho_{{\suptext{RC}} } \sigma_{\subtext{C}} \sigma_{\subtext{R}} - a_{\subtext{R}} r^{\suptext{R}}_t  \right ] \dd t + \sigma_{\subtext{R}} \dd W^{\suptext{R}}_t,& \\
&\dd C_t  =  C_t   \left ( r_t^{\suptext{N}}  - r_t^{\suptext{R}} \right  )    \dd t + C_t  \sigma_{\subtext{C}}   \dd W^{\suptext{C}}_t,&
\end{align*}
where $(W^{\suptext{N}}_t)$, $(W^{\suptext{R}}_t)$ and $(W^C_t)$ are dependent Brownian motions, and where $\theta_{\subtext{N}}(t)$ and $\theta_{\subtext{R}}(t)$ are functions chosen to fit the term-structure of interest rates, see \citep{BrigoMercurio2007}[Chapter 15] and \citep{HullWhite1990}.

\begin{Proposition} \label{RIPropGenJY}
 The \citep{KornKruse2004} and the \citep{JarrowYildirim2003} models are RSM triplets where $(h^{\suptext{R}}_t, s_t, M_t)_{0\le t}$ is given by 
\begin{align*}
 &h^{\suptext{R}}_t=\exp\left(-\int_{0}^t r^{\suptext{R}}_s  \diff s\right) \left.  \frac{ \dd \Q^{\suptext{R}} }{\dd  \Q^{\suptext{N}}}  \right \lvert_{\Fc _t},& 
&s_t =\frac{ 1 }{h_t^{\suptext{R}}} \exp\left(-\int_{0}^t r^{\suptext{N}}_s \diff s\right),&
&M_t =\left. \frac{\dd \Q^{\suptext{N}} }{\dd \P}  \right \lvert_{\Fc _t}.&
\end{align*}
The nominal pricing kernel $(h^{\suptext{N}}_t)_{0\le t}$ is determined by $h^{\suptext{N}}_t=s_t\,h^{\suptext{R}}_t$.
\end{Proposition}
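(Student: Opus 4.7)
The plan is to verify, block by block, that the triplet $(h^{\suptext{R}}_t, s_t, M_t)$ specified in the proposition satisfies the four defining properties of an RSM, and then to show that the resulting nominal kernel $\pi^{\suptext{N}}_t = s_t\, h^{\suptext{R}}_t$ reproduces the risk-neutral pricing used by Korn--Kruse and Jarrow--Yildirim. Setting $M_t\equiv 1$ makes $\M=\P$, so every Bayes-formula manipulation collapses to a statement under $\P$.

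First, for the structural properties I would observe that each discount factor $\exp(-\int_0^t r^{\suptext{R}}_s \dd s)$ and $\exp(-\int_0^t r^{\suptext{N}}_s\dd s)$ is continuous and strictly positive, and that $(\dd\Q^{\suptext{N}}/\dd\P)|_{\Fc_t}$ and $(\dd\Q^{\suptext{R}}/\dd\P)|_{\Fc_t}$ are strictly positive, unit-initialised \cadlag{} martingales under $\P$ (in both KK and JY their existence is guaranteed by Girsanov, the integrability of the exponential of the market-price-of-risk following from Novikov). Products of such objects are unit-initialised, strictly positive \cadlag{} semimartingales, so $h^{\suptext{R}}_t$, $s_t$, and $h^{\suptext{R}}_t s_t$ all meet the regularity requirements of the RSM definition.

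Second, for integrability I would apply Bayes' rule to compute
\[
\E^{\P}[h^{\suptext{R}}_t]=\E^{\Q^{\suptext{R}}}\!\bigl[e^{-\int_0^t r^{\suptext{R}}_s\dd s}\bigr]=P^{\suptext{R}}_{0t},
\]
and likewise $\E^{\P}[h^{\suptext{R}}_t s_t]=P^{\suptext{N}}_{0t}$, both finite in the KK and JY parametrisations. Finiteness of $\E^{\P}[s_t]$ reduces, again by Bayes, to the expectation of an exponential of a Gaussian (or constant) integral, which is controlled by the same Novikov bound. This closes the verification that $(h^{\suptext{R}}_t,s_t,M_t)$ is a genuine RSM.

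Third, I would reconcile the pricing formulae. Plugging $\pi^{\suptext{N}}_t=e^{-\int_0^t r^{\suptext{N}}_s\dd s}(\dd\Q^{\suptext{N}}/\dd\P)|_{\Fc_t}$ into \eqref{RIeqPricingFormula} and unwinding the density via Bayes yields the standard nominal risk-neutral formula $V^\chi_t=\E^{\Q^{\suptext{N}}}_t[e^{-\int_t^T r^{\suptext{N}}_s\dd s}\chi]$ used in both models, and the analogous identity holds on the real side with $\pi^{\suptext{R}}_t=h^{\suptext{R}}_t$. Finally, the foreign-exchange identity $C_t=\pi^{\suptext{R}}_t/\pi^{\suptext{N}}_t$ is checked by an It\^o computation: the drift of $\pi^{\suptext{R}}_t/\pi^{\suptext{N}}_t$ comes out to $r^{\suptext{N}}_t-r^{\suptext{R}}_t$, and the diffusion term is driven by the Girsanov kernels of $\dd\Q^{\suptext{N}}/\dd\P$ and $\dd\Q^{\suptext{R}}/\dd\P$, which by construction reproduce $\sigma_{\subtext{C}} \dd W^{\suptext{C}}_t$ in the KK/JY dynamics once $C_0$ is pinned down. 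The main obstacle is the Novikov-type integrability check for the JY specification, where correlated Gaussian exponentials drive both density processes and the spread $s_t$; the KK case with constant rates and deterministic $\sigma_{\subtext{C}}$ is transparent.
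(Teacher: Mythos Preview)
Your proposal is correct and follows the same underlying idea as the paper---namely, that the Girsanov measure changes to $\Q^{\suptext{N}}$ and $\Q^{\suptext{R}}$ furnished by the Korn--Kruse and Jarrow--Yildirim constructions supply the required densities, and with $M_t\equiv 1$ the RSM conditions reduce to standard facts about stochastic exponentials of Gaussian integrals. The paper's own proof is a one-line reference to \cite{JarrowYildirim2003}[Footnote 5] for the nominal measure change, with the remark that the real side is similar and the Black--Scholes case standard; you have simply spelled out the regularity, integrability, and pricing-consistency checks that the paper leaves implicit, and added the (not strictly required, but reassuring) It\^o verification of the CPI relation $C_t=\pi^{\suptext{R}}_t/\pi^{\suptext{N}}_t$.
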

\begin{proof}
The measure change to $\Q^{\suptext{N}}$ is given in \citep{JarrowYildirim2003}[Footnote 5], the measure change to $\Q^{\suptext{R}}$ is similar and, in the Black-Scholes case, the results are standard. 
\end{proof}
The choice above is not unique. One could, e.g., set $M_t = 1$ and change the other processes accordingly, which would  amount to specifying and matching the models under $\P$, instead.

\subsection{Primary inflation-linked instruments}  \label{RISecFundProd}
We now proceed to the pricing of the primary inflation-linked products, suchlike the ZC swap and the YoY swap, which serve as the fundamental hedging instruments against inflation risk and the swap rates as underlying of exotic inflation-linked derivatives. To this end, we propose a specific class of rational pricing kernels:

\begin{Definition}[Rational pricing kernel system] \label{RIExModelCons} 
Let $\M$ be a measure equivalent to $\P$ induced by a Radon-Nikodym process $(M_t)_{0\leq t}$.  Let $(A_t^{\suptext{R}})_{0\leq t }$ and $(A_t^{\suptext{S}}
)_{0\leq t}$  be positive martingales under $\M$ with $A^{\suptext{S}}_0=A^{\suptext{R}}_0=1$.  Let  $( A_t^{\suptext{R}} A_t^{\suptext{S}})$ have finite expectation under $\M$ for all $t\geq 0$. Let the real pricing kernel $(h^{\suptext{R}}_t)_{0\le t}$ be given by  $h_t^{\suptext{R}} = R(t)\left[ 1 + b^{\suptext{R}}(t) ( A_t^{\suptext{R}} -1 ) \right]$ where $R(t)\in C^1$ is a strictly positive deterministic function with  $R(0)=C_0$, and where $b^{\suptext{R}}(t)\in C^1$ is a deterministic function that satisfies $0<b^{\suptext{R}}(t)<1$. Furthermore, let $s_t = S(t) A_t^{\suptext{S}}$
where $S(t)\in C^1$ is a strictly positive deterministic function with $S(0)=1/C_0$, and set 
$
h^{\suptext{N}}_t=s_t\, h^{\suptext{R}}_t.
$
We call $( h_t^{\suptext{R}}, s_t, M_t)_{0\leq t}$ thus specified a rational pricing kernel system (RPKS).
\end{Definition}
By Ito's lemma, $(h_t^{\suptext{R}})$ and $(s_t)$ are strictly positive semimartingales. An RPKS is in particular an RSM-triplet and therefore, by Section~\ref{RISectGenMod}, it produces a nominal and a real market, both free of arbitrage opportunities. The martingale ($A^{\suptext{R}}_t$) generates the randomness in the real market, while the joint law of ($A^{\suptext{S}}_t$) and ($A^{\suptext{R}}_t$) generates the randomness in the nominal market. All  derivations throughout will be obtained under the assumption of having an RPKS.
\begin{Proposition}[Affine payoffs evaluated in an RPKS]  \label{RIPropAffinePayoff}
Assume an RPKS. The price process $(V^{\chi}_t)_{0\le t\le T}$ of a contract with payoff function $\Chi = a_1+ a_2 C_T$, for $a_1, a_2\in \R$, at the fixed date $T\ge t\ge 0$ is given by 
\begin{equation}\label{RIeqAffpay}
V^{\chi}_t  = \frac{a_2 (b_0(T) +   b_1(T) A_t^{\suptext{R}}) + a_1 (b_2(T) A_t^{\suptext{S}} +  b_3(T)  \EM_t[  A_T^{\suptext{R}} A_T^{\suptext{S}}] ) }{ b_2(t) A_t^{\suptext{S}} + b_3(t)  A_t^{\suptext{R}} A_t^{\suptext{S}}}
\end{equation}
where,  for $0\leq t \leq T $, 
$b_0(t) = R(t)  (1-b^{\suptext{R}}(t))$,    $b_1(t) =  R(t) b^{\suptext{R}}(t)$,
$b_2(t) = R(t)  (1-b^{\suptext{R}}(t))S(t)$,  $b_{3} (t) = R(t) b^{\suptext{R}}(t)S(t)$. If $a_1=0$, i.e. the payoff is linear in $C_T$,  the price process $V^{\chi}_t$ is a rational function of $A^{\suptext{R}}_t$ and $A^{\suptext{S}}_t$.
\end{Proposition}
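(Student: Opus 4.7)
The plan is to apply the $\M$-measure pricing formula \eqref{RIeqPricingFormulaM}, substitute in the RPKS parametrisation, and use the martingale property of $(A^{\suptext{R}}_t)$ and $(A^{\suptext{S}}_t)$ under $\M$.

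First I would compute $h^{\suptext{N}}_T \Chi$ for the affine payoff. Splitting $\Chi = a_1+a_2 C_T$ gives two terms. The second term is the key simplification: from $C_t=\pi^{\suptext{R}}_t/\pi^{\suptext{N}}_t= h^{\suptext{R}}_t/h^{\suptext{N}}_t$ (the $M_t$ cancels in the ratio), we have $h^{\suptext{N}}_T C_T = h^{\suptext{R}}_T$, so $h^{\suptext{N}}_T\Chi = a_1 h^{\suptext{N}}_T + a_2 h^{\suptext{R}}_T$. Hence
\begin{equation*}
V^{\chi}_t = \frac{1}{h^{\suptext{N}}_t}\EM_t\!\left[a_1 h^{\suptext{N}}_T + a_2 h^{\suptext{R}}_T\right].
\end{equation*}
This decoupling of the CPI against $h^{\suptext{N}}_T$ is the only non-mechanical step; without it one would be stuck computing $\EM_t[h^{\suptext{N}}_T C_T]$ which involves the ratio $h^{\suptext{R}}_T/s_T$.

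Next I would expand both pricing kernels in the RPKS coefficients. For the real kernel,
\begin{equation*}
h^{\suptext{R}}_T = R(T)(1-b^{\suptext{R}}(T)) + R(T)b^{\suptext{R}}(T) A^{\suptext{R}}_T = b_0(T)+b_1(T) A^{\suptext{R}}_T,
\end{equation*}
and for the nominal kernel, using $s_T=S(T)A^{\suptext{S}}_T$,
\begin{equation*}
h^{\suptext{N}}_T = s_T h^{\suptext{R}}_T = b_2(T) A^{\suptext{S}}_T + b_3(T) A^{\suptext{R}}_T A^{\suptext{S}}_T,
\end{equation*}
where the $b_i$ are precisely as stated in the proposition. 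The same expansion at time $t$ gives $h^{\suptext{N}}_t = b_2(t)A^{\suptext{S}}_t+b_3(t)A^{\suptext{R}}_t A^{\suptext{S}}_t$, which is the denominator in \eqref{RIeqAffpay}.

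Then I would take conditional $\M$-expectations, using that $(A^{\suptext{R}}_t)$ and $(A^{\suptext{S}}_t)$ are unit-initialised $\M$-martingales (so $\EM_t[A^{\suptext{R}}_T]=A^{\suptext{R}}_t$ and $\EM_t[A^{\suptext{S}}_T]=A^{\suptext{S}}_t$), while leaving the mixed term $\EM_t[A^{\suptext{R}}_T A^{\suptext{S}}_T]$ untreated since no joint structure is assumed. Integrability of each term is supplied by the RPKS assumption that $A^{\suptext{R}}_t$, $A^{\suptext{S}}_t$ and $A^{\suptext{R}}_t A^{\suptext{S}}_t$ are $\M$-integrable. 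Collecting everything yields the numerator $a_2 b_0(T)+a_2 b_1(T) A^{\suptext{R}}_t+a_1 b_2(T) A^{\suptext{S}}_t+a_1 b_3(T)\EM_t[A^{\suptext{R}}_T A^{\suptext{S}}_T]$, giving exactly \eqref{RIeqAffpay}.

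Finally, setting $a_1=0$ eliminates the term carrying $\EM_t[A^{\suptext{R}}_T A^{\suptext{S}}_T]$, so the numerator becomes a polynomial in $A^{\suptext{R}}_t$ alone while the denominator is a polynomial in $(A^{\suptext{R}}_t,A^{\suptext{S}}_t)$, whence $V^\chi_t$ is rational in those two variables as claimed. There is no genuine obstacle here; the proof is a direct computation whose only content is the observation $h^{\suptext{N}}_T C_T=h^{\suptext{R}}_T$ that bypasses having to handle the CPI explicitly.
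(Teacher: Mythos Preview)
Your proof is correct and is precisely the computation the paper has in mind; the paper's own proof is the single line ``It follows by the fundamental $\M$-pricing equation~\eqref{RIeqPricingFormulaM},'' and you have simply unpacked that statement, including the key observation $h^{\suptext{N}}_T C_T = h^{\suptext{R}}_T$.
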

\begin{proof}
It follows by the $\M$-pricing equation~\eqref{RIeqPricingFormulaM}.
\end{proof}
The price process $(P^{\suptext{N}}_{tT})_{0\le t\le T}$ of the nominal ZC bond follows from Eq.~\eqref{RIeqAffpay} for $a_1=1$ and $a_2=0$. We have,
\begin{equation} \label{RIeqNomBondPrice}
P_{tT}^{\suptext{N}} =   \frac{b_2(T) A_t^{\suptext{S}} + b_3(T)  \EM_t[  A_T^{\suptext{R}} A_T^{\suptext{S}}]  }{ b_2(t) A_t^{\suptext{S}} + b_3(t)  A_t^{\suptext{R}} A_t^{\suptext{S}}}
\end{equation}
with $b_2(t)$ and $b_3(t)$ given in Proposition~\ref{RIPropAffinePayoff}. 
It then follows that the initial nominal term structure $P^{\suptext{N}}_{0t}$, $0\le t\le T$, is given by
$
P^{\suptext{N}}_{0t}=R(t)S(t)( 1+  b^{\suptext{R}} (t)(\EM[A^{\suptext{R}}_tA^{\suptext{S}}_t] -1)).
$
In particular, the parameter function $S(t)$ appearing in both, the price processes of the nominal ZC bond and the contract~\eqref{RIeqAffpay}, can thus be used for calibrating to the market-observed prices $\overline{P}{}^{\suptext{N}}_{0t}$, $0\le t\le T$, according to
$
S(t)=\overline{P}{}^{\suptext{N}}_{0t}/[ R(t)(1+  b^{\suptext{R}} (t)(\EM[A^{\suptext{R}}_tA^{\suptext{S}}_t] -1))].
$
We note that should  $t\mapsto  \EM \left[A^{\suptext{R}}_t A^{\suptext{S}}_t\right]  $ not belong to $C^1$, one can calculate its value in all relevant time points and use a $C^1$-interpolation, and nevertheless produce the same price for any financial product whose payoff only depends on state variables at those times.

The most basic inflation-linked product is the ZC swap, which gives exposure to the CPI value at the swap maturity $T$ for an annualised fixed payment. Its price process $(V_t^{\suptext{ZCS}}) _{0\leq t\leq T}$ can be written in the form
\begin{equation} \label{RIeqSwapVal}
V^{\suptext{ZCS}}_{t} = \frac{1}{h^{\suptext{N}}_t}\EM_t\left [h^{\suptext{N}}_T  \left( \frac{C_T}{C_0} -K \right)\right]  = P_{tT}^{\suptext{IL}}  - K  P_{tT}^{\suptext{N}} 
\end{equation} 
where  
$
P_{tT}^{\suptext{IL}} =\EM_t[h^{\suptext{N}}_T   C_T/C_0]/h^{\suptext{N}}_t
$
is the price of an inflation-linked ZC bond at $t\le T$. ZC swaps are highly liquid for several maturities and therefore it is reasonable to consider an actual term-structure of ZC swaps and aim at constructing models able to calibrate to the relevant market data in a parsimonious manner. By Eq.~\eqref{RIeqSwapVal}, given a nominal term-structure,  a ZC swap term-structure is equivalent to an inflation-linked ZC bond term-structure,  and fitting either is equivalent.  The price of an inflation-linked ZC bond within an RPKS follows directly from Proposition~\ref{RIPropAffinePayoff}:
\begin{equation*}
P_{tT}^{\suptext{IL}} = \frac{1}{C_0}\frac{b_0(T) +   b_1(T) A_t^{\suptext{R}}   }{ b_2(t) A_t^{\suptext{S}} + b_3(t)  A_t^{\suptext{R}} A_t^{\suptext{S}}}
\end{equation*} 
with $b_0(t)$ and $b_1(t)$ given in Proposition~\ref{RIPropAffinePayoff}.  We see that by matching the degree of freedom $R(t)$ to the initial term structure $\overline{P}{}^{\suptext{IL}}_{0t}$ of inflation-linked bonds as implied from the market, i.e. $R(t) = \overline{P}{}^{\suptext{IL}}_{0t} C_0$, the model replicates the term structure of ZC swaps. For ZC swaps, a de-annualised fair rate is quoted, namely a number $\quoteStrike$ such that for $K=(1+\quoteStrike)^T$ the initial value of the swap is zero. Given $\overline{P}{}^{\suptext{N}}_{0T}$, the initial term structure $\overline{P}^{\,\suptext{IL}}_{0T} $ is implied from the ZC swap market fair rates $\overline{k}{}^{\,\suptext{ZC}}_{0T}$ via 
$
\overline{k}{}^{\,\suptext{ZC}}_{0T}  = \left (\overline{P}{}^{\suptext{N}}_{0T}\big/\overline{P}{}^{\,\suptext{IL}} _{0T}  \right ) ^{1/T}-1.
$
The price process $(P^{\suptext{R}}_{tT})_{0\le t\le T}$ of a real ZC bond is
\begin{equation} \label{RIeqRealBond}
P^{\suptext{R}}_{tT}  = \frac{1}{h_t^{\suptext{R}}} \EM_t\left[ h^{\suptext{R}}_T\right] = \frac{ b_0(T) + b_1(T) A_t^{\suptext{R}} }{ b_0(t) + b_1(t) A_t^{\suptext{R}}  }, 
\end{equation}
with $b_0(t)$ and $b_1(t)$ as in Proposition~\ref{RIPropAffinePayoff}.  In accordance with the foreign-exchange analogy, it holds that
$
P^{\suptext{R}}_{tT}\, C_t =\ P^{\suptext{IL}}_{tT}.   
$
\

Next, we consider the Year-on-Year swap (YoY swap) which exchanges yearly percentage increments of CPI against a fixed rate. The YoY swap can be decomposed into swaplets, so we consider first the price $V_{tT_i} ^{\suptext{YoYSL}}$ at time  $t<T_{i-1}$ of a such over the period $[T_{i-1}, T_i]$. By the pricing relation~\eqref{RIeqPricingFormulaM} we have  
\begin{align}
V_{tT_i} ^{\suptext{YoYSL}} ={} & \frac{1}{h_t^{\suptext{N}}} \EM_t \left [h_{T_i}^{\suptext{N}}  \left (  \frac{C_{T_i}}{C_{T_{i-1}}}  - K \right ) \right ] 
 =P_{tT_i}^{\suptext{IL}} S(T_{i-1}) A_t^{\suptext{S}}      + \frac{   b_2(T_i)  S( T_{i-1})    \, \Cov_t^{\M}\left[ A^{\suptext{R}}_{ T_{i-1} } ,A^{\suptext{S}}_{ T_{i-1}}\right]  } { b_2(t) A_t^{\suptext{S}} + b_3(t)  A_t^{\suptext{R}} A_t^{\suptext{S}}}  -   K  P^{\suptext{N}}_{t T_i}    \label{RIeqYoYVal}
\end{align}
with $b_2(t)$ and $b_3(t)$ as in Proposition~\ref{RIPropAffinePayoff}.
For YoY swaps the fair rate $\quoteStrike$ is quoted in financial markets such that $K=1+\quoteStrike$. The price of the swap is $V_{tT_N}^{\suptext{YoYS}} = \sum_{i=1}^N V^{\suptext{YoYSL}}_{tT_i}$, from which the fair rate $\quoteStrike^{\suptext{YoY}}$ can be extracted:
\begin{equation} \label{RIEqYoYFairRate}
\quoteStrike^{\suptext{YoY}}_{tT_N} =\frac{ 1 }{    \sum_{i=1}^N  P^{\suptext{N}}_{t T_i}    }   \sum_{i=1}^N    \left(      P_{tT_i}^{\suptext{IL}}  S(T_{i-1}) A_t^{\suptext{S}}   +\frac{   b_2(T_i)  S( T_{i-1})    \, \Cov_t^{\M}\left[ A^{\suptext{R}}_{ T_{i-1} } ,A^{\suptext{S}}_{ T_{i-1}}\right]  } { b_2(t) A_t^{\suptext{S}} + b_3(t)  A_t^{\suptext{R}} A_t^{\suptext{S}}}   \right) -1. 
\end{equation}
If independence between  $(A_t^{\suptext{R}})$ and $(A_t^{\suptext{S}})$ is assumed, the YoY swap rate at time $t=0$ becomes
\begin{equation*} 
k^{\suptext{YoY}}_{0T_N} =\frac{ 1 }{    \sum_{i=1}^N  \overline{P}{}^{\suptext{N}}_{0 T_i}    }   \sum_{i=1}^N    \left ( \frac{\overline{P}{}^{\suptext{IL}}_{0T_{i-1}}}{\overline{P}{}^{\suptext{N}}_{0T_{i-1}}}     \overline{P}_{0T_i}^{\suptext{IL}}   \right  ) -1. 
\end{equation*}
Thus, if the independence assumption is imposed, the swap rate is completely determined by the inflation-linked and nominal term structures and hence can be expressed in a model-independent fashion. The difference between the market-observed swap rate and the above expression is often referred to as the convexity correction for the YoY swap of length $T_N$.

\subsection{On short-rate representation} \label{RISectMChange} 
In the following section, we study the question of the existence of a classical savings accounts in an RPKS, which is related to the work of \cite{DoberleinSchweizer2001}. We shall see that the obtained class of nominal pricing kernels can rarely be represented in terms of a short rate model. We study the reasons behind the lack of such a property. If it is possible to decompose a pricing kernel into
\begin{equation} \label{RIeqhtdecomp}
h_t = \left . \frac{\dd \Q }{\dd \M}  \right \rvert_{\Fc_t} \e^{-\int_0^t r_u \dd u}
\end{equation}
where $(r_t)$ is a short rate process and $\Q$ is the corresponding numeraire measure, then the prices obtained in either model would be equal, i.e. the same prices could instead have been obtained using a short rate approach. To investigate whether a decomposition like \eqref{RIeqhtdecomp} exists, some technical material is needed. 

We recall that a special semimartingale is a process $(X_t)$ with a unique decomposition $X_t = X_0+ B_t+ N_t$, where $(B_t)$ is predictable and of finite variation, $(N_t)$ is a local martingale and $B_0=N_0=0$. The decomposition is called the canonical (additive) decomposition. We recall \cite[II Theorem 8.21]{JacodShiryaev2003}
\begin{Theorem} \label{RITheoMGDecomp}
Let $(X_t)$ be a semimartingale with $X_0=1$, such that $(X_t)$ and $(X_{t-})$ are strictly positive. Then,  $(X_t)$ is a special semimartingale if and only if it admits a multiplicative decomposition 
\begin{equation}
X_t =  M_t L_t, \label{RIEqMultiDecomp}
\end{equation}
 where $(M_t)$ is a strictly positive and {\cadlag} $\M$-local martingale, $(L_t)$ is a positive, predictable process with locally finite variation, and $M_0=L_0=1$. When the decomposition exists, it is unique and is given by	
\begin{align*}
&M_t=\stochExp \left (\int_0^\cdot \frac{1}{X_{s-} + \Delta B_s} \dd N_s \right )_t,&
&L_t =\stochExp \left (-\int_0^\cdot \frac{1}{X_{s-} + \Delta B_s} \dd B_s \right )_t^{-1}&
\end{align*}
where $(B_t)$ and $(N_t)$ are the processes in its canonical additive decomposition.
\end{Theorem}
Comparing Eq.~\eqref{RIeqhtdecomp} and Eq.~\eqref{RIEqMultiDecomp} in Theorem~\ref{RITheoMGDecomp} we see that a number of conditions need to be satisfied for a short rate representation to be available. The pricing kernel $(h_t)$ has to satisfy the assumptions of Theorem~\ref{RITheoMGDecomp}, $(M_t)$ has to be a true martingale and act as a measure change, $(L_t)$ needs to have the specific form in \eqref{RIeqhtdecomp} and finally the resulting ZC bond prices need to be sufficiently differentiable. More precisely, following \citep[Theorem 5 and Proposition 12]{DoberleinSchweizer2001}, when \eqref{RIEqMultiDecomp} exists for a pricing kernel, one calls $A_t = L_t^{-1}$ an implied savings account. When in addition $A_t = 1 + \int_0 ^t \phi_s \dd s $ is satisfied, where 
$(\phi_t)$ is adapted and $\int_0^t \abs{\phi_s} \dd s \in L^1(\M)$, then the forward and short rates exist, that is, \eqref{RIeqhtdecomp} holds, and $(A_t)$ is termed a classical savings account.

Our first endeavour is to characterise the real-economy risk-neutral measure $\Q^{\suptext{R}}$. In the case that the short-rate process exists, we denote it by $(r^{\suptext{R}}_t)_{0\leq t}$. If in addition $(r^{\suptext{R}}_t)$ is absolutely integrable, then the discount factor $(D^R_t)$ exists, and we define
\begin{equation*}
D_t^{\suptext{R}} = \exp\left(-\int_0^t r^{\suptext{R}}_s \diff s\right). 
\end{equation*}
We introduce the process $(I^{\suptext{R}}_t )_{0 \leq t}$, given by
\begin{equation*}
I^{\suptext{R}}_t =  \int_0^{t} \frac{   b^{\suptext{R}}(s) } {    1+ b^{\suptext{R}}(s) (  A_{s-}^{\suptext{R}} -1 ) } \diff A_s ^{\suptext{R}},  
\end{equation*}
and note that  $\Delta I^{\suptext{R}}_t > -1$, for all $t\geq 0$.
Next, we denote by $\stochExp(\cdot)$ the stochastic exponential and define
\begin{equation} \label{RIEqxiR}
\xi^{\suptext{R}}_t=\stochExp \left (  I^{\suptext{R}} \right ) _t,
\end{equation} 
which is strictly positive for all $t\geq 0$.
\begin{Lemma} \label{RILemMChangeR}
Assume an RPKS. Then $(h_t^{\suptext{R}},\M)$ has a  savings account if and only if  $(\xi_t ^{\suptext{R}})$ in \eqref{RIEqxiR} is an $\M$-martingale. In this case, 
	$\xi_t ^{\suptext{R}}  =\dd \Q^{\suptext{R}}/\dd\M\lvert _{\Fc_t}$, and the savings account is classical with short rate process  $(r_t^{\suptext{R}})$ given by
	\begin{equation}\label{shortrate-rR}
r_t^{\suptext{R}} = -\frac{1}{{h_t^{\suptext{R}}}}\left(b_0'(t)  +b_1'(t) A_{t}^{\suptext{R}}\right).
	\end{equation}
\end{Lemma}
\begin{proof}  
	By the relation~\eqref{RIeqRealBond}, we obtain expression (\ref{shortrate-rR}). Now, $h_t^\suptext{R}= b_0(t)+b_1(t)A_t^\suptext{R}$ and Ito's formula shows
	\begin{equation} \label{RIEqRealRationalAdditiveDecomp}
	\dd h_t^{\suptext{R}} = -r_{t}^\suptext{R} h_{t}^\suptext{R} \dd t  + R(t) b^{\suptext{R}}(t) \dd A_t^{\suptext{R}}, \quad h_0^{\suptext{R}} = C_0  
	\end{equation}
	which exposes the unique additive decomposition of $(h_t^{\suptext{R}})$. One can now either calculate the dynamics of $\xi_t^\suptext{R} =h_t^\suptext{R} /D_t ^\suptext{R}$ or apply the formula for the multiplicative decomposition.
\end{proof}
We now examine the nominal market processes; this endeavour is slightly more elaborate. In the case that the nominal short-rate process $(r^{\suptext{N}}_t)_{0\leq t}$ and the associated discount factor $(D^N_t)$ exist, we write
\begin{equation*}
D_t^{\suptext{N}} = \exp\left(- \int_0^t r^{\suptext{N}}_s \diff s\right).
\end{equation*}
We furthermore define 
\begin{equation*}
I^{\suptext{S}}_t  = \int_0^t \frac{1}{s_{s-}} \diff s_s = \int_0^{t} \frac{ 1 } {  A^{\suptext{S}}_{s-}} \diff A_s ^{\suptext{S}},
\end{equation*}
and $m_t(T) = \EM_t[A_T^{\suptext{R}} A_T^{\suptext{S}}]$, which is differentiable in $T=t$ in the case that the nominal short rate exists in an RPKS. In the case that $\int_0 ^t \abs {\frac{m'_s(s)b_3(s)}{h_s^N}} \dd s < \infty$, we define the stochastic exponential
$
\xi_t^{\suptext{N}} = \stochExp\left(  \int_0 ^\cdot -\frac{m'_s(s)b_3(s)}{h_s^N} \dd s + I^{\suptext{R}}+ I^{\suptext{S}}  + [ I^{\suptext{R}}, I^{\suptext{S}}] \right)_t.
$
We note that $\nobreak{\Delta ( I^{\suptext{R}}_{ t} + I^{\suptext{S}}_{t}  + [ I^{\suptext{R}}, I^{\suptext{S}}]_ {t}  ) >-1}$ $\forall  t \geq  0  $, i.e. $\xi_t^{\suptext{N}} > 0$ for all $t\geq 0$. We define the ``pseudo short-rate'' 
\begin{equation}\label{pseudorate}
\fakeRate_t =-\frac{1}{h_t^{\suptext{N}}}\left( b_2'(t) A_t^{\suptext{S}} + b_3'(t) A_t^{\suptext{R}} A_t^{\suptext{S}}\right).
\end{equation}
\begin{Lemma}  \label{RILemMChangeN}
	Assume an RPKS and that $(r_t^\suptext{N})$ exists. Consider the ``pseudo short-rate'' (\ref{pseudorate}). Then $(h_t^{\suptext{N}},\M)$ has a classical savings account with short rate $(r_t^\suptext{N})$, given by 
	\begin{equation}\label{shortrate-rN}
	r_t^\suptext{N}  = r_t - \frac{m'_t(t) b_3(t)}{h_t^\suptext{N}},
	\end{equation}
	if and only if $\int_0 ^t \abs {\frac{m'_s(s)b_3(s)}{h_s^N}} \dd s < \infty$  and $(\xi_t^{\suptext{N}}) $ is an $\M$-martingale.
\end{Lemma}
\begin{proof}
	By \eqref{RIeqNomBondPrice} we get expression (\ref{shortrate-rN}).
	Since $h_t^N = b_2(t) A_t^{\suptext{S}} + b_3(t) A_t^{\suptext{R}} A_t^{\suptext{S}}$  Ito's formula shows that
	\begin{align}
	\dd h_t^\suptext{N} = {} &
	 - h_{t}^\suptext{N} \fakeRate_{t}  \dd t  + b_3(t) A_{t-}^\suptext{S} \dd A_t^\suptext{R}  \label{RIeqhNDecomp}
	+ (b_2(t) +b_3(t) A_{t-}^\suptext{R} ) \dd A_t^\suptext{S} + b_3(t) [A^\suptext{R},A^\suptext{S}]_t   \\
	= {} & \left ( - h_{t}^\suptext{N} r_{t}^\suptext{N}  - m_{t}'(t)b_3(t)  \right ) \dd t  + b_3(t) A_{t-}^\suptext{S} \dd A_t^{\suptext{R}}
	+ (b_2(t) +b_3(t) A_{t-}^\suptext{R} ) \dd A_t^\suptext{S} +  b_3(t) [A^\suptext{R},A^\suptext{S}]_t  \nonumber,
	\end{align}
	with $h^\suptext{N}_0=1$.
	We define $\xi_t^N= \frac{h_t^N}{D_t^N}$. An application of {Ito}'s quotient rule shows that
	\begin{align*}
	\dd \xi_t ^\suptext{N} =  \xi_t^\suptext{N}  \left (  -\frac{m'_t(t) b_3(t)}{h_t^\suptext{N}} \dd t   +\dd I^\suptext{R}_t   +\dd I^\suptext{S}_t +\dd[I^\suptext{R},I^\suptext{S}]_t \right ),
	\end{align*}
	i.e. $(\xi_t ^\suptext{N})$ is the stated stochastic exponential.
\end{proof}
Lemma~\ref{RILemMChangeN} is a weaker result than Lemma~\ref{RILemMChangeR}, since $(\xi_t^\suptext{N} )$ is not necessarily a local martingale. The result is still interesting because $(r_t^N)$ is the most tempting candidate for a short-rate process in a discount factor emerging from a nominal pricing kernel. Lemma~\ref{RILemMChangeN} shows that, in general, we cannot expect $(h_t^\suptext{N},\M)$ to have a classical savings account with short rate $(r_t^\suptext{N})$. This rules out though neither the existence of a savings account nor a classical savings account with a different ``short-rate''. To apply the theory we need $(h_t^N)$ to be a special semimartingale.
\begin{Lemma}
	Assume an RPKS and that $(h_t^{\suptext{N}})$ is a special semimartingale. Then the canonical additive decomposition is given by
	$h_t^{\suptext{N}} =  1+ B_t + N_t,$
	where
\begin{align}\label{RIEqhDAdditiveDecomp}
	\dd B_t ={} & h_t^\suptext{N} \fakeRate_{t} \dd t +  b_3(t)  \dd  \langle A ^{\suptext{R}}, A^{\suptext{S}} \rangle_t, &&  (B_0=0)  \\
	\dd N_t ={} &    b_3(t) A_{t-}^\suptext{S} \dd A_t^\suptext{R}+(b_2(t) +b_3(t) A_{t-}^\suptext{R} ) \dd A_t^\suptext{S}  +b_3(t) \diff \left(  [A^{\suptext{R}},A^{\suptext{S}}]_t- \langle A^{\suptext{R}}, A^{\suptext{S}} \rangle_t \right), \quad && (N_0=0). \notag
\end{align}
The multiplicative decomposition is $h_t^N =  M_t L_t$ where
\begin{align*}
	&M_t =\stochExp \left ( \int_0 ^\cdot  \frac{1}{h_{s-}^\suptext{N} + \Delta B_s} \dd N_s \right )_t,&
&L_t =\stochExp\left ( -\int_0 ^\cdot  \frac{1}{h_{s-}^\suptext{N} + \Delta B_s} \dd B_s \right )_t ^{-1},&
\end{align*}
moreover, a savings account exists if and only if $(M_t)$ is a martingale.
\end{Lemma}
\begin{proof}
	The dynamics of $(h_t^\suptext{N})$ was derived in \eqref{RIeqhNDecomp}. By \citep[I Theorem 4.23]{JacodShiryaev2003}, 
	$\int_0^t b_3(s) \diff [A ^{\suptext{R}},A^{\suptext{S}}]_s $
	has locally integrable variation, and therefore it has a predictable compensator $$\int_0^t b_3(s) \diffIN \langle A^{\suptext{R}},A^{\suptext{S}}\rangle_s,$$ see \citep[I Theorem 3.18]{JacodShiryaev2003}. 
\end{proof}
When there are no simultanous jumps in $(A^\suptext{R}_t)$ and $(A^\suptext{S}_t)$ the situation is simpler:
\begin{Corollary}
	Assume an RPKS, and that $\Delta A_t^\suptext{R} \Delta A_t^\suptext{S} = 0$ a.s. Then $(h_t^{\suptext{N}})$ is a special semimartingale and the canonical decomposition is given by	$h_t^{\suptext{N}} =  1+ B_t + N_t,$
	where
\begin{align*}
	\dd B_t ={} & h_t^\suptext{N} \fakeRate_{t} \dd t+  b_3(t)  \dd [A^{\suptext{R}},A^{\suptext{S}}]_t, && (B_0=0) \\
	\dd N_t ={} &    b_3(t) A_{t-}^\suptext{S} \dd A_t^\suptext{R}+(b_2(t) +b_3(t) A_{t-}^\suptext{R} ) \dd A_t^\suptext{S},  \quad && (N_0=0)
\end{align*}
	and $h_t^\suptext{N} =  M_t L_t$ where
\begin{align*}
&M_t =\stochExp \left (I^\suptext{R} + I^\suptext{S} \right )_t,&
&L_t =\stochExp\left( -\int_0^\cdot  h_s^\suptext{N} \fakeRate_{s} \dd s +[I^\suptext{R},I^\suptext{S}]\right)_t ^{-1},&
\end{align*}
and a savings account exists if and only if $(M_t)$ is a martingale.  If additionally $[A^{\suptext{R}} , A^{\suptext{S}}]_t$  is absolutely continuous,  write $[A^{\suptext{R}} , A^{\suptext{S}}]_t = \int_0 ^t a_s \dd s$ and define $\lambda_t = a_t b_3(t)/h_t^N$ for $t\geq 0$. Then,
\begin{equation*}
L_t =\exp \left ( \int_0^\cdot (h_s^\suptext{N} \fakeRate_{s} - \lambda_s) \dd s \right )_t
\end{equation*}
and, if $\int_0^t  \exp(-\int_0 ^s (h_u^N \fakeRate_{u} - \lambda_u) \dd u )  \abs{h_s^N \fakeRate_{s} - \lambda_s} \dd s \in L^1(\M)$, then a classical savings account exists.
\end{Corollary}
\begin{proof}
	By assumption, we have that $[A^{\suptext{R}},A^{\suptext{S}}]_t=\langle (A^{\suptext{R}})^c, (A^{\suptext{S}})^c \rangle_t$ and thus 
	the decomposition above has a predictable bounded variation part. 
	By change of variables, $L_t^{-1} = 1+\int_0 ^t \frac{\partial}{\partial s}\exp(-\int_0 ^s (h_u^N \fakeRate_{u} - \lambda_u) \dd u ) \dd s$. That is, by \citep[Proposition 12]{DoberleinSchweizer2001}, integrability implies existence of a classical savings account.
\end{proof}
A simple example where the first condition is satisfied is if $(A_t^{\suptext{R}})$ or $(A_t^{\suptext{S}})$ is continuous. If both $(A_t^{\suptext{R}})$ and $(A_t^{\suptext{S}})$ are Ito processes then $[A^{\suptext{R}} , A^{\suptext{S}}]_t$ is absolutely continuous.

Finally we present a lemma giving a condition in an RPKS to check whether $(h_t^{\suptext{N}})$ is special, this will be particularly simple to check in the following setting.
\begin{Lemma} \label{RILemSpecialSquare}
	Assume an RPKS and that  $(A_t^{\suptext{R}})$ and $(A_t^{\suptext{S}})$  are locally square-integrable. Then $(h_t^{\suptext{N}})$ is a special semimartingale.
\end{Lemma}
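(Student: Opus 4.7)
The plan is to exhibit $h_t^{\suptext{N}}/D_t^{\suptext{N}}$ as the sum of a predictable finite-variation process and a local martingale, which by definition is what specialness means. Applying {\Ito}'s product rule to $h_t^{\suptext{N}} = s_t\,h_t^{\suptext{R}}$ and the quotient rule against the continuous strictly-positive bounded-variation process $D_t^{\suptext{N}}$ yields precisely the representation set up in the preceding lemma: $h_t^{\suptext{N}}/D_t^{\suptext{N}}$ equals $1$, plus two stochastic integrals against $A^{\suptext{R}}$ and $A^{\suptext{S}}$ with locally bounded integrands, plus the term $\int_0^t (R(s) b^{\suptext{R}}(s) S(s)/D_s^{\suptext{N}})\,\diff [A^{\suptext{R}}, A^{\suptext{S}}]_s$.

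The central step is to verify that $[A^{\suptext{R}}, A^{\suptext{S}}]$ has locally integrable variation. Since $(A_t^{\suptext{R}})$ and $(A_t^{\suptext{S}})$ are locally square-integrable $\M$-martingales, their quadratic variations $[A^{\suptext{R}}]$ and $[A^{\suptext{S}}]$ are locally integrable. The Kunita-Watanabe inequality together with Cauchy-Schwarz yields a localising sequence $(\tau_n)$ along which $\EM\!\left[\int_0^{\tau_n} \abs{\diff [A^{\suptext{R}}, A^{\suptext{S}}]_s}\right] \leq \EM[[A^{\suptext{R}}]_{\tau_n}]^{1/2}\,\EM[[A^{\suptext{S}}]_{\tau_n}]^{1/2} < \infty$. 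By \cite{JacodShiryaev2013}[I.3.15], $[A^{\suptext{R}}, A^{\suptext{S}}]$ therefore admits a predictable compensator $\langle A^{\suptext{R}}, A^{\suptext{S}}\rangle$, and the bracket splits as $\diff [A^{\suptext{R}}, A^{\suptext{S}}]_s = \diff \langle A^{\suptext{R}}, A^{\suptext{S}}\rangle_s + \diff ([A^{\suptext{R}}, A^{\suptext{S}}]_s - \langle A^{\suptext{R}}, A^{\suptext{S}}\rangle_s)$ into its predictable finite-variation and local-martingale components.

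Reassembling, the integral against $\diff \langle A^{\suptext{R}}, A^{\suptext{S}}\rangle$, carrying the locally bounded deterministic integrand $R(s) b^{\suptext{R}}(s) S(s)/D_s^{\suptext{N}}$, delivers the predictable finite-variation part $B_t$, while the two stochastic integrals against $\diff A^{\suptext{R}}$ and $\diff A^{\suptext{S}}$ together with the integral against the residual $\diff ([A^{\suptext{R}}, A^{\suptext{S}}]-\langle A^{\suptext{R}}, A^{\suptext{S}}\rangle)$ assemble into the local martingale $M_t$. Local boundedness of the remaining integrands---products of the continuous deterministic factors $R,S,b^{\suptext{R}},1/D^{\suptext{N}}$ with the left limits of the c\`adl\`ag processes $s$ and $h^{\suptext{R}}$---ensures the stochastic integrals are genuine local martingales against the locally square-integrable drivers $A^{\suptext{R}}, A^{\suptext{S}}$. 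Thus $h_t^{\suptext{N}}/D_t^{\suptext{N}} = 1 + B_t + M_t$ is a canonical special-semimartingale decomposition. The main technical obstacle is the Kunita-Watanabe bookkeeping to certify local integrability of $[A^{\suptext{R}}, A^{\suptext{S}}]$; once this is in hand, the decomposition stated in the preceding lemma is simply re-read as a proof of specialness.
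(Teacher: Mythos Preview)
Your proof is correct and follows essentially the same route as the paper's. The paper's argument is terser: it invokes \cite{JacodShiryaev2013}[I Proposition 4.50] directly to conclude that locally square-integrable martingales have a bracket $[A^{\suptext{R}},A^{\suptext{S}}]$ of locally integrable variation (so the predictable compensator $\langle A^{\suptext{R}},A^{\suptext{S}}\rangle$ exists), and then points to the decomposition in the preceding lemma. Your Kunita--Watanabe/Cauchy--Schwarz step is precisely the content of that cited proposition, and your explicit reassembly into $1+B_t+M_t$ spells out what the paper leaves implicit.
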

\begin{proof}
	By \cite[I Proposition 4.50]{JacodShiryaev2003}, the process $([A^{\suptext{R}},A^{\suptext{S}}]_t)$ has locally integrable variation and its compensator $(\langle A^{\suptext{R}}, A^{\suptext{S}}\rangle_t)$ exists. Therefore, the decomposition in \eqref{RIEqhDAdditiveDecomp} is warranted.
\end{proof}
\section{Construction of the exponential-rational class}  \label{RISecRatExpCons}
Our next goal is to derive explicit price formulae for financial derivatives based on the ZC and the YoY swap rates, and for the so-called limited price-index (LPI) swap. For its flexibility, tractability and good calibration properties, we choose to work with a sub-class among the rational pricing kernel systems, namely the exponential-rational pricing kernels. We next construct this class. 
\begin{Definition}[Exponential-rational pricing kernels] \label{RIDefERPK}
Assume an RPKS and let  $(X_t)_{0\leq t}$ be a d-dimensional stochastic process.  Assume that  $(A^{\suptext{R}}_t)_{0\leq t}$ and $(A^{\suptext{S}}_t)_{0\leq t}$ in Definition~\ref{RIExModelCons} are on the form  $A^{\suptext{R}}_t=\exp(\innerp {w_{\subtext{R}}}{X_t})$ and $A^{\suptext{S}}_t= \exp(\innerp{w_{\subtext{S}}}{X_t})$. We call this class the exponential-rational pricing kernel models. If $(X_t)$ is an additive process, we call this class the  additive exponential-rational pricing kernel models.
\end{Definition}
Remember that an RPKS requires that $(A_t^{\suptext{R}})$ and $(A_t^{\suptext{S}})$ are martingales, that $(A_t^RA_t^S)$ has finite expectation for all $t\geq 0$ and that $A_0^{\suptext{S}}=A_0^{\suptext{R}}=1$, it is implicit that $w_{\subtext{R}},w_{\subtext{S}}$ and $(X_t)$ in Definition~\ref{RIDefERPK} are chosen such that this is satisfied. 
\begin{Definition}[Additive process]  \label{RIDefAdd}
 Let  $(X_t)_{0\leq t  }$ be a d-dimensional stochastic process. Following  \citep[Definition 1.6]{Sato1999}, we say $(X_t)$ is additive if it has a.s. {\cadlag} paths, $X_0=0$, and
 \begin{enumerate}
 \item Independent increments: for any $n\geq 1$, $0\leq t_0 < t_1 < \dots<t_{n-1} <  t_n$,  the random variables $X_{t_0}, X_{t_1}- X_{t_0},\dots, X_{t_n}-X_{t_{n-1}}$ are independent,
 \item Stochastic continuity: for any $t\geq 0$ and  $\epsilon >0 $,  $\lim_{s \to t } \M ( \abs{ X_t - X_s} > \epsilon ) =0$.
 \end{enumerate}  
\end{Definition}
In the remainder of the paper, the derivations will be based on exponential-rational pricing kernel models. The additive exponential-rational pricing kernel models will be given particular attention, so we  now recall some facts about additive processes and provide some examples. The independent increments property gives a {\Levy}-{Khintchine} representation
\begin{align*}
\EM \left[ \e^{i   \innerp{z } {  X_t}  }\right] ={} & \e^{  i \psi_t(z)},  \\
\psi_t(z) = {} & i \innerp{z}{ \mu_t}  - 1/2 \innerp{z}{ \Sigma_t  z } +\int_{\R^d  }  \left(  \e^{i \innerp{z}{x}}-1 - i \innerp{z}{x}  \1{ \norm{x} \leq 1 }   \right) \intspace \nu_t(\dd x) ,
\end{align*}
where the {\Levy}-Khintchine triplet ($\mu_t$,$\Sigma_t$,$\nu_t$) is unique and satisfies a number of conditions (see \citep[Chapter 2]{Sato1999}). The {\Levy}-Khintchine triplet also  determines the sample path properties of $(X_t)_{0\leq t}$ by the {\Levy}-Ito decomposition  (see  \citep{Sato1999}[Chapter 4]). Next, we consider  examples of how additive processes can be obtained. 
\begin{Example}[Time change]  \label{RIExTimeChange}
Let $(\tau_t)_{0\leq t}$ be a continuous, increasing process with $\tau_0=0$. For $t\geq 0$, define pathwise $X_t = L_{\tau_t}$  for $(L_t)_{0\leq t}$ a {\Levy} process. Then $(X_t)_{0 \leq t}$ inherits the independent increments of $(L_t)$ and is therefore additive. A particular simple case is obtained by letting $\tau_t=\tau(t)$ be deterministic.  Then, we may write 
$
\EM \left[ \exp ( i  z   X_t ) \right] =  \exp ( \tau(t)  \psi^{\suptext{L}}(z)), 
$
where $ \psi^{\suptext{L}}(z)$ is the characteristic exponent of $(L_t)$ in $t=1$.
\end{Example}

\begin{Example}[Stacking independent additive processes]  \label{RIExStackingAdditive}
Let $(X_t^1),\dots,(X_t^N)$ be one-dimensional additive processes with characteristic exponents $ \psi^1_t ,\dots,\psi^N_t$. Then $({X}_t) = ( X^1_t,\dots,X^N_t)$  is an N-dimensional additive process with characteristic exponent 
$\psi_t({z}) = \sum_{i=1}^N \psi^i_t(z_i), $
where ${z}= (z_1,\dots,z_N)$. Furthermore, for $w\in \R^N$ $(\innerp{w}{X_t})_{0 \leq t}$ is a one-dimensional additive process with characteristic exponent $z \mapsto \psi_t(z w)$.
\end{Example}

A fact about  additive processes is that they are convenient to construct martingales. Define
\begin{equation} \label{RIeqExpMomCond}
\assumExp(X) = \left\{ z \in \mathbb{C}^d :  \int_{\{x \in \R^d \, : \,  \norm{x}> 1 \} }    \exp(\innerp{\Re z}{x})   \intspace \nu_t(\dd x) < \infty, \quad \forall t\geq 0 \right\}, 
\end{equation}
then for $z \in \assumExp(X)$ it holds that $\EM [ \abs{\e^{\innerp{z}{X_t}}}]<\infty$ and the Laplace exponent $\kappa_t(z) = \psi_t (  -i  z )$ is well-defined, see \cite{Sato1999}[Theorem 25.17].
It follows from the independent increments property that for $w\in\assumExp(X) $, one has
\begin{equation} \label{RIeqFundamentalMartingale}
\frac{\e^{\innerp{w}{X_t}} }{\EM\left[\e^{\innerp{w}{X_t}}\right] } = \e^{\innerp{w}{X_t}  - \kappa_t(w) } 
\end{equation} 
is a martingale. We can build exponential martingales by taking an additive process and let the drift absorb the mean in~\eqref{RIeqFundamentalMartingale}. This produces the condition that, if
\begin{equation}\label{RIeqExpMGCond}
\innerp{w}{\mu_t} =  -\frac{1}{2} \innerp{w}{ \Sigma_t  w }- \int_{\R^d}  \left(  \exp(\innerp{w}{x})-1 -  \innerp{w}{x} \1{ \norm{x} \leq 1  } \right)\intspace \nu_t(\dd x),  
\end{equation}
then $(\e^{\innerp{w}{X_t}})_{0\leq t }$ is a martingale. Eq.~\eqref{RIeqExpMomCond} is useful for additive-exponential rational models. Recall Definition~\ref{RIExModelCons}, Eq.~\eqref{RIeqExpMomCond}  shows that  $(A_t^R A_t^S)$ has finite expectation for all $t\geq 0$ if $w_{\subtext{S}}+w_{\subtext{R}} \in \assumExp(X)$. Similarly, recalling Lemma~\ref{RILemSpecialSquare}, $(A_t^R)$ and $(A_t^S)$ are locally square-integrable if $2 w_{\subtext{S}}\in \assumExp(X) $ and $2 w_{\subtext{R}} \in \assumExp(X)$.

We proceed to calculate a number of expressions needed in both the previous and next sections. First,   for $w\in \assumExp(X)$ and $0 \leq t\leq T$,
\begin{equation*}
\EM_t\left[ \exp ( \innerp{w} {X_T})\right] =  \exp(\innerp{w}{X_t}) \exp ( \kappa_{tT} (  w) ),
\end{equation*}
where we define  the forward Laplace exponent $\kappa_{tT}(\cdot) = \kappa_T (\cdot ) -\kappa_t(\cdot) $. For the  YoY swap~\eqref{RIeqYoYVal} we also need
\begin{align*}
\Cov^{\M}_t\left[ \exp ( \innerp{w_1} {X_T}) , \exp( \innerp{w_2} {X_T} )\right] 
									= {} &   \e^{\innerp{w_1+w_2}{X_t}}  \left (  \e^{\kappa_{tT} (w_1+w_2)   } - \e^{\kappa_{tT}(w_1) +\kappa_{tT}(w_2)  }   \right )  \\
									= {} &\e^{\innerp{w_1+w_2}{X_t}}    \e^{\kappa_{tT} (w_1+w_2)  }  \left ( 1  - \e^{\kappa_{tT}(w_1) +\kappa_{tT}(w_2)- \kappa_{tT} (w_1+w_2)   }   \right ), 
\end{align*}
where $w_1,w_2, (w_1+w_2) \in \assumExp(X)$ is assumed.
 We notice that the sign, and to some extent the magnitude of the covariance, depends on the non-linearity of $z\mapsto \kappa_{tT}(z)$.
 For the subsequent derivation of Fourier-inversion formulae, we will also need the multiperiod generalized characteristic function. 
 \begin{Lemma} \label{RILemMPCF}
Let $(X_t)_{0\leq t}$ be an additive process.  Assume $t \leq T_0\leq T_1 \leq \dots \leq T_N$, set $u = ( u_1, \dots, u_N)$ and define
 \begin{equation*}
  q_t(u) =   \EM_t \left[  \exp \left(   \, \sum_{i=1}^N  u_i \innerp{w_i}{X_{T_i}}  \, \right)  \right].
 \end{equation*}
 Set  $z_N =  u_N w_N$ and  $z_{i-1} = z_i +  u_{i-1} w_{i-1}$ for $i=2,\dots,N$. Assume that $z_i \in \assumExp(X)$ for $i=1,\dots,N $.
Then,
\begin{equation} \label{RIeqMPCF}
q_{t}(u)  =\exp( \innerp{z_1({u})}{ X_{t}} ) \exp \left ( \,  \sum_{i=1}^N  \kappa_{ T_{i-1} T_i } ( z_i({u} ) ) \, \right  ), 
\end{equation}
which is well-defined and finite.
 \end{Lemma}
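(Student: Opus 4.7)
The plan is to reduce the multiperiod expression to a product of single-period ones by expressing the exponent in terms of the non-overlapping increments $X_{T_i}-X_{T_{i-1}}$, then invoking the independent-increments property of the additive process. No induction is actually needed: a single telescoping rewrite plus one application of independence does the job.

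First I would observe that the recursion $z_N=u_Nw_N$, $z_{i-1}=z_i+u_{i-1}w_{i-1}$ gives the explicit formula $z_i(u)=\sum_{j=i}^N u_j w_j$. Writing $X_{T_i}=X_{T_0}+\sum_{k=1}^{i}(X_{T_k}-X_{T_{k-1}})$ and swapping the order of summation,
\begin{equation*}
\sum_{i=1}^N u_i \innerp{w_i}{X_{T_i}}=\innerp{z_1(u)}{X_{T_0}}+\sum_{k=1}^N\innerp{z_k(u)}{X_{T_k}-X_{T_{k-1}}}.
\end{equation*}
When $t=T_0$ the first inner product is $\Fc_t$-measurable and can be pulled outside; when $t<T_0$, one further splits $X_{T_0}=X_t+(X_{T_0}-X_t)$ and treats $X_{T_0}-X_t$ as an additional independent increment (this produces an extra factor $\exp(\kappa_{tT_0}(z_1))$, which vanishes precisely in the case $t=T_0$ intended in the application).

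Next I would invoke the defining property of additive processes: the increments $X_{T_0}-X_t,\,X_{T_1}-X_{T_0},\dots,X_{T_N}-X_{T_{N-1}}$ are mutually independent and independent of $\Fc_t$. Hence the conditional expectation of the product of exponentials factorises into the product of unconditional expectations. The hypothesis
\begin{equation*}
\int_{\{\norm{x}>1\}}\e^{\innerp{z_i}{x}}\,\nu_t(\dd x)<\infty,\qquad i=1,\dots,N,
\end{equation*}
guarantees, via the equivalence recorded in \eqref{RIeqExpMomCond}, that each factor has a finite Laplace transform, so the Laplace exponent $\kappa_t(\cdot)=\psi_t(-i\,\cdot\,)$ is well defined at each $z_i(u)$ throughout $[t,T_N]$.

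Finally, applying the single-period formula $\EM[\exp\innerp{z_i}{X_{T_i}-X_{T_{i-1}}}]=\exp(\kappa_{T_{i-1}T_i}(z_i))$ to each factor yields
\begin{equation*}
q_t(u)=\exp(\innerp{z_1(u)}{X_t})\prod_{i=1}^N\exp(\kappa_{T_{i-1}T_i}(z_i(u))),
\end{equation*}
which is \eqref{RIeqMPCF}. The only subtle step is the bookkeeping in the telescoping identity; everything else is routine once independent increments and the exponential moment hypothesis are in hand. I do not expect any serious obstacle beyond verifying that the assumed integrability on the largest exponent $z_1$ (together with the nesting $z_i=z_{i+1}+u_iw_i$) propagates cleanly to each individual factor — which is immediate from the monotonicity of $\nu_t$ in $t$ and the fact that the required bound is imposed on every $z_i$ separately.
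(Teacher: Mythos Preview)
Your argument is correct and is essentially the same as the paper's, which simply states that the result follows from iterated expectation and the independent-increments property; your telescoping rewrite plus a single factorisation is just the unrolled version of that tower-property iteration. Your remark about the case $t<T_0$ requiring an extra factor $\exp(\kappa_{tT_0}(z_1))$ is a nice catch---the paper's stated formula \eqref{RIeqMPCF} is literally correct only when $t=T_0$, which is indeed the situation in all subsequent applications.
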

 \begin{proof}
 The statement follows from iterated expectation and the independent increments property.
 \end{proof}

We next build on the ideas of the Examples~\ref{RIExTimeChange} and \ref{RIExStackingAdditive} with two concrete specifications for the  additive exponential-rational pricing kernel models. Very similar models will be calibrated in Section~\ref{RISecCaliExam}.

\begin{Specification}[Time-changed {\Levy}]    \label{RIExTimeChangeLevy}  
Let $(L_t)_{0\leq t }$ be a  {\Levy} process satisfying the condition $\Var^{\M}(L_1)=1$. Let  $(t_1,a_1) , (t_2,a_2) , \dots , (t_n,a_n)$  be given points such that both coordinates are increasing and let $\tau(t)$ be a continuous, non-decreasing  interpolation. Then   $t \mapsto \Var^{\M}(L_{\tau(t)}) = \tau(t)$  interpolates  the given points. This property can be utilised for, e.g., fitting a term-structure of at-the-money implied volatilities. Building on this motivation, we let 
\begin{equation*}
X_t = \left(L^{\suptext{R}}_{\tau^{\suptext{R}}(t)} + \mu^{\suptext{R}}(t) ,L^{\suptext{S}}_{\tau^{\suptext{S}}(t)}+ \mu^{\suptext{S}}(t)\right).
\end{equation*}
Let $(t \tilde{\mu}^{\suptext{R}},t  (\sigma^{\suptext{R}})^2,  t \nu^{\suptext{R}}) $ and $(t \tilde{\mu}^{\suptext{S}}, t (\sigma^2)^{\suptext{S}},t  \nu^{\suptext{S}})$ denote the {\Levy}-Khintchine triplets of $(L_t^{\suptext{R}})$ and $(L_t ^{\suptext{S}})$.
Then $(X_t)$ has the {\Levy}-Khintchine triplet
\begin{align*}
 & \mu_t = \left(   \mu^{\suptext{R}}(t) +\tau^{\suptext{R}}(t) \tilde{\mu}^{\suptext{R}} ,   \mu^{\suptext{S}}(t) + \tau^{\suptext{S}}(t) \tilde{\mu}^{\suptext{S}} \right),  \quad  \Sigma_t =  \begin{pmatrix} (\sigma^{\suptext{R}})^2   \tau^{\suptext{R}}(t)   & 0 \\ 0 &  (\sigma^{\suptext{S}})^2    \tau^{\suptext{S}}(t)  \end{pmatrix},  \\
& \nu_t(B)  =   \tau^{\suptext{R}}(t)   \nu^{\suptext{R}} ( B_1 )  +  \tau^{\suptext{S}}(t)   \nu^{\suptext{S}} ( B_2 )   
\end{align*}
where $B_1 = \{ x \in \R: (x,0) \in B \}$ and $B_2 =  \{ x\in \R : (0,x) \in B \}$. 
By choosing the drifts $\mu^{\suptext{R}}(t)$ and $\mu^{\suptext{S}}(t)$ according to~\eqref{RIeqExpMGCond} we can turn $A^{\suptext{R}}_t=\exp(\innerp {w_{\subtext{R}}}{X_t})$ and $A^{\suptext{S}}_t= \exp(\innerp{w_{\subtext{S}}}{X_t})$ into martingales. This construction  generalises to higher dimensions in a straightforward way.
\end{Specification}

\begin{Specification}[Time-changed Wiener process] \label{RIExTimeChangeWP}  
As a special case of Example~\ref{RIExTimeChangeLevy}, we consider  time-changing independent Wiener processes. 
We set 
\begin{align*}
X_t = \left(  W^{\suptext{R}}_{\tau^{\suptext{R}}(t)} + \mu^{\suptext{R}}(t), W^{\suptext{S}}_{\tau^{\suptext{S}}(t)} + \mu^{\suptext{S}}(t) \right) .
\end{align*}
This corresponds to having the {\Levy}-Khintchine triplet given by 
\begin{align*}
\mu_t = {} & \left( \mu^{\suptext{R}}(t), \mu^{\suptext{S}} (t) \right) , &  \; \nu_t ={} &  0 , & \; \Sigma_t =  {} &   \begin{pmatrix}    \tau^{\suptext{R}}(t)   & 0 \\ 0 &     \tau^{\suptext{S}}(t)  \end{pmatrix}.
\end{align*}
We can choose $(A^{\suptext{R}}_t)$ and $(A^{\suptext{S}}_t)$ to be martingales as in Specification~\ref{RIExTimeChangeLevy}.
\end{Specification}
\subsection{Option pricing}  \label{RISecOptPric}
By use of the exponential-rational pricing kernel models, tractable expressions can be derived for inflation-linked derivatives, such as the YoY floor and the ZC floor. Under the stronger assumption of  additive exponential-rational pricing kernel models, we can find a similarly tractable formula for the LPI swap.
\subsubsection{Year-on-Year floors} \label{RISecYoYF}
The payoff of the YoY floor can be written in terms of a series of floorlets. A floorlet has payoff function $( K- C_{T_i}/C_{T_{i-1}})^+ $ paid at time $T$, typically $\quoteStrike$ is quoted with $K=1+\quoteStrike$. In practice it is often observed that $T> T_i$  to ensure that there is a reliable observation of CPI available at maturity.  We want our framework to be able to accommodate this feature. The next theorem is a pricing formula for the YoY floorlet.
\begin{Theorem} \label{RITheoYoYFourier} 
Assume an exponential-rational pricing kernel model. Let $Y_1=c_1 +  \innerp{ w_{\subtext{S}}}{ X_{T_{i-1} } }-\innerp{  w_{\subtext{S}}}{ X_{T_i}  }$, $Y_2=\innerp{w_{\subtext{S}}}{ X_{T}}$ and $Y_3 =c_3 + \innerp{w_{\subtext{R}}}{X_T}$
where $c_1=\ln[S(T_{i-1})/(K S(T_i))]$, $c_3=\ln[b^{\suptext{R}}(T)/(1-b^{\suptext{R}}(T))]$,   
and $q_t(z)= \EM_t[\e^{ \innerp{z}{  (Y_1,Y_2, Y_3)}}]$. Let $R>0$ and assume that 
\begin{equation}  \label{RITheoYoYFourierEqIntCond}  
 q_t(-R,1,1)+q_t(-R,1,0)<\infty. 
 \end{equation}
Let  $t\leq T$ and consider
\begin{equation*}
V^{\suptext{YoYFl}}_{t} =   \frac{1  } {h_t^{\suptext{N}} }  \EM_t \left [ h_T^{\suptext{N}} \left ( K-  \frac{C_{T_i}}{C_{T_{i-1}}}   \right )^+ \right ],  
\end{equation*}
by Eq.~\eqref{RIeqPricingFormulaM}, the price of the YoY floorlet. Then we have: 
\begin{equation} \label{RIeqYoYPriceExp} 
V^{\suptext{YoYFl}}_{t}    =  \frac{c_0 K }{\pi  h_t^{\suptext{N}} } \int_{\R^+} \Re \frac{\vartheta_t  ( u )  }{(R+i u) (1+R+iu)}  \diff  u,  
\end{equation}
where $c_0  = R(T) (1-b^{\suptext{R}}(T))S(T)$ and $\vartheta_t ( u ) =q_t(-(R+iu),1,0)+ q_t(-(R+iu),1,1)$. If $T_i \leq t \leq T $, then  $V^{\suptext{YoYFl}}_t  = (K-  C_{T_i}/C_{T_{i-1}})^+ P_{tT} ^{\suptext{N}}$.
\end{Theorem}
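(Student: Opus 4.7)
The plan is to reduce both the floorlet payoff and the $T$-time nominal kernel to exponentials in $Y_1,Y_2,Y_3$, then invoke the standard Fourier representation of $(1-\e^y)^+$, and push the expectation inside the Fourier integral.

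I would first establish two algebraic identities. Since the CPI in the RPKS satisfies $C_t=1/s_t=1/(S(t)A_t^{\suptext{S}})$, inserting the definition of $c_1$ gives $C_{T_i}/C_{T_{i-1}}=K\e^{Y_1}$, so the floorlet payoff becomes $K(1-\e^{Y_1})^+$. Expanding $h_T^{\suptext{N}}=s_T h_T^{\suptext{R}}$ and using that $R(T)b^{\suptext{R}}(T)/(1-b^{\suptext{R}}(T))=\e^{c_3}$ yields
$$ h_T^{\suptext{N}} \;=\; c_0\bigl(A_T^{\suptext{S}}+\e^{c_3}A_T^{\suptext{S}}A_T^{\suptext{R}}\bigr) \;=\; c_0\bigl(\e^{Y_2}+\e^{Y_2+Y_3}\bigr), $$
so that all random ingredients at $T$ are captured by the three scalar random variables $Y_1,Y_2,Y_3$.

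Next, for any $R>0$ and $y\in\R$ one has the inversion identity
$$ (1-\e^y)^+ \;=\; \frac{1}{2\pi}\int_{-\infty}^{\infty}\frac{\e^{-(R+iu)y}}{(R+iu)(1+R+iu)}\diff u, $$
obtained by inverting the Laplace transform $1/(s(s+1))$ along $\Re(s)=R$ and closing the contour on the appropriate side in each of the cases $y<0$ and $y\ge 0$. Substituting $y=Y_1$, multiplying by $h_T^{\suptext{N}}$, applying $\EM_t$ and interchanging expectation with integration produces
$$ \EM_t\bigl[h_T^{\suptext{N}}\,K(1-\e^{Y_1})^+\bigr] \;=\; \frac{c_0K}{2\pi}\int_{-\infty}^{\infty}\frac{q_t(-(R+iu),1,0)+q_t(-(R+iu),1,1)}{(R+iu)(1+R+iu)}\diff u, $$
the numerator being $\vartheta_t(u)$ by definition. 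Since $Y_1,Y_2,Y_3$ are real, both $\vartheta_t(\cdot)$ and the denominator satisfy $f(-u)=\overline{f(u)}$, hence the integral over $\R$ equals twice the real part of the integral over $\R^+$; dividing by $h_t^{\suptext{N}}$ then yields~\eqref{RIeqYoYPriceExp}. The boundary case $T_i\le t\le T$ is immediate because $(K-C_{T_i}/C_{T_{i-1}})^+$ is $\Fc_t$-measurable and hence factors out of $\EM_t[h_T^{\suptext{N}}\,\cdot\,]/h_t^{\suptext{N}}$, leaving precisely $P_{tT}^{\suptext{N}}$.

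The one step that deserves care is the exchange of $\EM_t$ and $\int\diff u$. Since $|\e^{-(R+iu)Y_1}|=\e^{-RY_1}$, the modulus of the integrand is dominated by $\e^{-RY_1}(\e^{Y_2}+\e^{Y_2+Y_3})/|(R+iu)(1+R+iu)|$; the denominator decays like $u^{-2}$ and is therefore $u$-integrable, while its $\EM_t$-expectation equals $q_t(-R,1,0)+q_t(-R,1,1)$, which is finite by assumption~\eqref{RITheoYoYFourierEqIntCond}. So Fubini applies. I expect this integrability/Fubini verification to be the principal technical hurdle; everything else is algebraic manipulation combined with a contour-integration identity.
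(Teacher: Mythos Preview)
Your proof is correct and follows essentially the same route as the paper. The paper packages the Fourier-inversion and Fubini step into an appendix lemma (applied to the expression $\EM_t[(1+\e^{Y_3})\e^{Y_2}(1-\e^{Y_1})^+]$), whereas you write the inversion identity for $(1-\e^{y})^+$ and the dominated-convergence justification inline; the algebraic reduction of $h_T^{\suptext{N}}$ and of the payoff to exponentials in $Y_1,Y_2,Y_3$, the Hermitian-symmetry fold to $\R^+$, and the $T_i\le t\le T$ case are all handled identically.
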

\begin{proof}
By the pricing formula~\eqref{RIeqPricingFormulaM} the price at any time $t < T_{i}$   is
\begin{align*}
V^{\suptext{YoYFl}}_{t}  = &{}     \frac{1  } {h_t^{\suptext{N}} }  \EM_t \left [ h_T^{\suptext{N}} \left ( K- \frac{C_{T_i}}{C_{T_{i-1}}}   \right )^+ \right ]
	   = \frac{  c_0 K }{h_t^{\suptext{N}} } \EM_t \left[(1+\e^{Y_3}) \, \e^{Y_2}  \,  (1- \e^{Y_1})^+ \right].
\end{align*}
 We can directly apply Lemma~\ref{RILemKey}, found in the appendix, to get~\eqref{RIeqYoYPriceExp}.  Note that if $T_{i-1}\leq t<T_i$  then a part of $Y_1$ is measurable. The formula for $ T_i \leq t \leq T$ follows by observing that the payoff function is $T_i$-measurable and recalling Eq.~\eqref{RIeqNomBondPriceGen} combined with the relation~\eqref{RIeqPricingFormulaM}.
\end{proof}
A point to note about~\eqref{RIeqYoYPriceExp} is the quadratic convergence of the numerator in the integral, which makes the formula particularly tractable. 
In the case that we use an additive exponential-rational pricing kernel model, $q_t(z)$ follows from an   application of Lemma~\ref{RILemMPCF}, and the integrability~\eqref{RITheoYoYFourierEqIntCond}  is satisfied if $-R w_{\subtext{S}} \in \assumExp(X)$.
Note that the difference between a YoY caplet and floorlet is a YoY swaplet, potentially with time-lag. The price is 
\begin{equation} \label{RIEqYoYCapletFloorletDiff}
V_t^{\suptext{YoYSL}}= \frac{1}{h_t^{\suptext{N}}} \EM_t \left [ h_{T}^{\suptext{N}}  \left  (\frac{C_{T_i}}{C_{T_{i-1}}} -K \right )  \right ] =     \frac{  c_0 K}{h_t^{\suptext{N}} } \left (  q_t( 1,1,0)+q_t( 1,1,1) \right )-KP_{tT}^{\suptext{N}}, 
\end{equation}
where $q_t(z)$ and $c_0$ are given in Theorem~\ref{RITheoYoYFourier}.  This can be used in conjunction with Theorem~\ref{RITheoYoYFourier} to get the price of a caplet or used directly to price the time-lagged swaplet. 
 \subsubsection{Zero-Coupon floors} \label{RISecZCF} 
Next we focus on the pricing of the ZC floor, which together with the ZC cap and  YoY caps and floors,  are the most liquidly traded inflation-linked derivatives. The structure of this section closely follows   the previous one, since the calculations   are   similar.  The payoff at time $T$ of the ZC floor can  be written in the form $( K-C_{T_i}/C_0)^+$ with $T \geq T_i$ akin to the YoY floor. Typically, the strike quoted is $\quoteStrike$, where $K=(1+\quoteStrike)^T$.
\begin{Theorem} \label{RITheoZCFourier}
Assume an exponential-rational pricing kernel model. Let $Y_1= c_4 +  \innerp{ w_{\subtext{S}}}{ X_{T_i } }$, $Y_2= \innerp{  w_{\subtext{S}}}{ X_{T}  }$ and   $Y_3=c_3 +\innerp{  w_{\subtext{R}}}{ X_{T} }$, where
\begin{equation*}
   c_4={}   \ln  \left(\frac{1   }{ K S(T_i) }\right)        , \enskip  c_3= {}   \ln \left(\frac{b^{\suptext{R}}(T)}{1-b^{\suptext{R}}(T)}\right),   
\end{equation*}
and $q_t(z)= \EM_t [\e^{ \innerp{z}{  (Y_1,Y_2, Y_3)}}]$. 
Assume $R>0$ and
\begin{equation} \label{RITheoZCFourierEqAss}
q_t( -R,1,0)+q_t(-R,1,1) < \infty  . 
\end{equation}
Consider  $T_0 \leq t\leq T_{i}\leq T$ and let
\begin{equation*}
V^{\suptext{ZCF}}_{t}  =   \frac{1  } {h_t^{\suptext{N}} }  \EM_t \left [ h_T^{\suptext{N}} \left (  K -\frac{C_{T_i}}{C_{T_0}}  \right )^+  \right ]  
\end{equation*}
 be the price at time $t$ of a ZC floor. Then we have 
\begin{equation*}
V^{\suptext{ZCF}}_{t}    =  \frac{c_0 K }{\pi  h_t^{\suptext{N}} } \int_{\R^+} \Re \frac{\vartheta_t  ( u )  }{(R+i u) (1+R+iu)}  \diff u, 
\end{equation*}
where $c_0=R(T) \left(1-b^{\suptext{R}}(T)\right) S(T)$ and
$\vartheta_t  ( u ) =q_t\left(-(R+iu),1,0\right)+ q_t\left(-(R+iu),1,1\right)$. If $T_i \leq t \leq T $, then  $V^{\suptext{ZCF}}_t  = \left( K- C_{T_i}/C_{T_0}   \right)^+ P_{tT} ^{\suptext{N}}$. 
\end{Theorem}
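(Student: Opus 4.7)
The plan is to mirror the proof of Theorem~\ref{RITheoYoYFourier}, reducing the payoff to the canonical form treated by Lemma~\ref{RILemKey}. Starting from the fundamental $\M$-pricing formula~\eqref{RIeqPricingFormulaM}, I would substitute $h_T^{\suptext{N}} = s_T h_T^{\suptext{R}}$ and use the exponential representations $A^{\suptext{R}}_t = \e^{\innerp{w_{\subtext{R}}}{X_t}}$, $A^{\suptext{S}}_t = \e^{\innerp{w_{\subtext{S}}}{X_t}}$ to rewrite
$$h_T^{\suptext{N}} = R(T)(1-b^{\suptext{R}}(T))S(T)\,(1 + \e^{Y_3})\,\e^{Y_2} = c_0\,(1+\e^{Y_3})\,\e^{Y_2},$$
where the constant $c_3$ inside $Y_3$ is chosen precisely to absorb the factor $b^{\suptext{R}}(T)/(1-b^{\suptext{R}}(T))$ in front of $A^{\suptext{R}}_T$, exactly as in the YoY derivation.

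For the payoff, I would use $C_t = \pi_t^{\suptext{R}}/\pi_t^{\suptext{N}} = 1/s_t$, so that with $C_0 = 1/s_0 = 1$ one has $C_{T_i}/C_0 = 1/(S(T_i) A^{\suptext{S}}_{T_i})$, and therefore
$$\left(K-\frac{C_{T_i}}{C_0}\right)^+ = K\!\left(1 - \frac{1}{K S(T_i) A^{\suptext{S}}_{T_i}}\right)^{\!+} = K\,(1-\e^{Y_1})^+,$$
with $Y_1$ as in the statement (the constant $c_4$ absorbs $\ln(1/(K S(T_i)))$). Combining the two factorisations gives
$$V^{\suptext{ZCF}}_t = \frac{c_0 K}{h_t^{\suptext{N}}}\,\EM_t\!\left[(1+\e^{Y_3})\,\e^{Y_2}\,(1-\e^{Y_1})^+\right],$$
which is precisely the structure that Lemma~\ref{RILemKey} is designed to handle.

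The next step is to invoke Lemma~\ref{RILemKey} from the appendix. The integrability hypothesis~\eqref{RITheoZCFourierEqAss}, $q_t(-R,1,0) + q_t(-R,1,1) < \infty$ with $R>0$, is the condition needed so that the Fourier representation of $x \mapsto (1-\e^x)^+$ can be inserted and Fubini applied to interchange the $\M$-expectation with the inverse-Fourier integral. Identifying $\vartheta_t(u) = q_t(-(R+iu),1,0) + q_t(-(R+iu),1,1)$ and using the standard kernel $1/((R+iu)(1+R+iu))$ produced by the lemma yields the announced integral formula.

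For the post-fixing case $T_i \leq t \leq T$, the payoff $(K-C_{T_i}/C_0)^+$ is already $\mathcal{F}_{T_i}$-measurable and hence $\mathcal{F}_t$-measurable, so it factors out of the conditional expectation, leaving $(1/h_t^{\suptext{N}})\,\EM_t[h_T^{\suptext{N}}] = P_{tT}^{\suptext{N}}$ by combining~\eqref{RIeqNomBondPriceGen} with~\eqref{RIeqPricingFormulaM}. There is no genuine obstacle here: the argument is entirely parallel to the YoY case, with the simplification that $Y_1$ depends on a single time point $T_i$ rather than two. All the genuine analytic work is carried by Lemma~\ref{RILemKey}; the only task is to verify that the ZC floor payoff has been cast in exactly the form the lemma requires.
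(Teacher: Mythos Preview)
Your proposal is correct and follows exactly the approach the paper intends: the paper's own proof reads in full ``Exactly as for the YoY case, see Theorem~\ref{RITheoYoYFourier}'', and you have faithfully unpacked that cross-reference by reducing $h_T^{\suptext{N}}(K-C_{T_i}/C_{T_0})^+$ to the canonical form $c_0K\,(1+\e^{Y_3})\e^{Y_2}(1-\e^{Y_1})^+$ and invoking Lemma~\ref{RILemKey}. One small caution: your computation gives $\e^{Y_1}=1/(KS(T_i)A^{\suptext{S}}_{T_i})$, which corresponds to $Y_1=c_4-\innerp{w_{\subtext{S}}}{X_{T_i}}$ rather than the $+\innerp{w_{\subtext{S}}}{X_{T_i}}$ printed in the statement---this is a sign slip in the theorem statement (compare the YoY case), not in your argument.
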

\begin{proof}
Exactly as for the YoY case, see Theorem~\ref{RITheoYoYFourier}.
\end{proof}

If we assume the additive exponential-rational pricing kernel,  $q_t(z)$ follows directly from \eqref{RIeqMPCF}, and the assumption~\eqref{RITheoZCFourierEqAss} is satisfied if  $-R w_{\subtext{S}} \in \assumExp(X)$. Analogous to the YoY cap we  have the time-lagged ZC swap has price
\begin{equation} \label{RIEqValDiffZCCapFloor}
V_t^{ZCS}  = \frac{1}{h_t^{\suptext{N}}} \EM_t \left [ h_T^{\suptext{N}} \left ( \frac{C_{T_i}}{C_{T_0}}-K \right )  \right ]= \frac{c_0}{h_t^{\suptext{N}}} \left ( q_t(1,1,0)+ q_t(1,1,1) \right ) - K P_{tT}^{\suptext{N}},  
\end{equation} 
where $c_0$ and $q_t(z)$ are given in Theorem~\ref{RITheoZCFourier}. Which can also be used to obtain the price of ZC caps. 
\subsubsection{Limited price index swap} \label{RISecLPISwap}
The tractability of the model specification we have used so far allows to find semi-closed-form price formulae for the  exotic limited price index (LPI) swap. This, contrary to the previous theorems, does rely on the assumption that the driving stochastic process $(X_t)_{0\leq t}$ is additive.
The LPI is defined by 
\begin{equation*}
C^{\suptext{LPI}} _ {T_k}  = C^{\suptext{LPI}} _ {T_{k-1}}   \, \mathrm{mid} \left( 1+K_f , \frac{C_{T_k}}{C_{T_{k-1}}} , 1+K_c\right) ,
\end{equation*}
where $k=1,\dots,N$ and $T_k$ is a periodic fixed date, typically yearly. The contracts have maturities up to 30 years. Similar to the ZC swap, the LPI swap  has payoff  $C^{\suptext{LPI}} _{T_N} -K$. We will consider the payoff to be settled at the fixed time $T\geq T_N$. The pricing relation~\eqref{RIeqPricingFormulaM} gives the swap price at time $t$:
\begin{equation*}
V^{\suptext{LPIS}}_t  =\frac{1}{h_t^{\suptext{N}}} \EM_t\left[ h_{T}^{\suptext{N}}  C^{\suptext{LPI}} _{T_N} -K\right] = P_{t T}^{\suptext{LPI}}  - K P_{tT}^{\suptext{N}}
\end{equation*}
where $P_{t T}^{\suptext{LPI}} =\EM_t [ h_T^{\suptext{N}} C^{\suptext{LPI}}_{T_N}  ]/h_t^{\suptext{N}} $ is the price process of the  LPI-linked ZC bond. We therefore need to derive the price at time $t\leq T$ of the LPI-linked  ZC bond.
\begin{Theorem} \label{RITheoLPIBondFourier}
Assume an additive exponential-rational pricing kernel model.  Assume, without loss of generality, that $T_0 \leq t < T_1< T_2 <\dots < T_N \leq T$.
Let, for $k=1,\dots, N$, 
\begin{align*}
q^{1k} ( z_1,z_2) = {} & \EM \left[  \exp \left  ( (z_1+z_2) \innerp{w_{\subtext{S}}}{X_{T_k  } -X_{T_{k-1} \vee t  }}  \right )\right],   \\
q^{2k} ( z_1,z_2) = {} & \EM  \left[  \exp \left (  z_1 \innerp{w_{\subtext{S}}}{X_{T_k  } -X_{T_{k-1} \vee t    }  } + z_2  \innerp{w_{\subtext{R}}+w_{\subtext{S}}}{X_{T_k  } -X_{T_{k-1} \vee t}  } \right  )\right] ,  
\end{align*}
and $R_k>0$  be such that 
\begin{equation*}
\sum_{k=1}^N \left (q^{1k} ( -R_k,1)+q^{2k} ( -R_k,1) \right ) < \infty. 
\end{equation*}
Then,
\begin{equation*}
P_{tT}^{\suptext{LPI}}  = \tfrac{ 1}{h_{t}^{\suptext{N}}}  C^{\suptext{LPI}}_ {{T}_0} \left (   c_0 V_t^{11} \prod_{k=2}^N V^{1k} + c_5 A_t^{\suptext{R}}  V_t^{21} \prod_{k=2}^N V^{2k}  \right )   A_t^{\suptext{S}} 
\end{equation*}
where 
\begin{align*}
c_0 = {} &   R(T) \left(1-b^{\suptext{R}}(T)\right) S(T)  ,  & c_5 ={} & R(T) b^{\suptext{R}}(T) S(T)  \exp\left( \kappa_{ T_N T } (   w_{\subtext{R}}+w_{\subtext{S}}  )\right).
\end{align*}
Furthermore, for $j=1,2$
\begin{align*}
V_t ^{j1}= {} & \beta_c \exp \left (  \kappa_{ t T_1   } (   \omega^j )  \right ) +   \frac{1}{\pi} \int_{\R ^+ }  \Re \frac{ \beta_c \, \vartheta^{j1}(\alpha^{1c}_t,u) + \beta_f \, \vartheta^{j1}(\alpha^{1f}_t,u) }{(R+iu) (1+R+iu) }  \diff u,  \\
V ^{jk}= {} & \beta_c \exp \left (  \kappa_{  T_{k-1} T_k   } (  \omega^j )  \right ) +   \frac{1}{\pi} \int_{\R ^+ }  \Re \frac{ \beta_c \, \vartheta^{jk}(\alpha^{kc},u) + \beta_f \, \vartheta^{jk}(\alpha^{kf},u) }{(R+iu) (1+R+iu) }  \diff u,  
\end{align*}
where $\vartheta^{jk} (\alpha, u) =     \alpha^{-(R+iu  ) } q^{jk}   ( -(R+iu) ,1  )$ ,  and
\begin{align*}
 \alpha^{1c}_t = {} &   \beta_c^{-1} \frac{  S(T_1) }{  S( T_0 )} \frac{ A^{\suptext{S}}_{T_0 } }{ A_t^{\suptext{S}}    }  ,   &   \alpha^{1f}_{t} ={} & \beta_f^{-1} \frac{  S(T_1) }{  S( T_0 )} \frac{ A^{\suptext{S}}_{T_0 } }{ A_t^{\suptext{S}}    } ,   \\
  \alpha^{kc} = {} &   \beta_c^{-1} \frac{ S(T_{k-1}  )}{S(T_{k})} , &    \alpha^{kf} ={} &  \beta_f^{-1}\frac{ S(T_{k-1}  )}{S(T_{k})} ,  \quad \text{for } k=2,\dots,N, \\
 \beta_c ={} & (1+K_c) ,    &  \beta_f ={} & (1+K_f), \\
 \omega^1 ={} &  w^{\suptext{S}} , & \omega^2 = {} & w^{\suptext{S}}+w^{\suptext{R}} .
\end{align*}
\end{Theorem}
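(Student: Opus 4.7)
My plan is to prove the theorem in three steps: a tower-property reduction from $T$ down to $T_N$, a factorisation of the resulting expectation into a product of one-period expectations via the independent-increments property, and finally a Fourier inversion for each one-period factor.

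\textbf{Step 1 (reduce from $T$ to $T_N$).} Writing out the definition of $h_T^{\suptext{N}}$ yields
\begin{equation*}
h_T^{\suptext{N}}  = R(T)(1-b^{\suptext{R}}(T))S(T)\, A_T^{\suptext{S}}+R(T)b^{\suptext{R}}(T)S(T)\, A_T^{\suptext{R}}A_T^{\suptext{S}}.
\end{equation*}
Because $C^{\suptext{LPI}}_{T_N}$ is $\mathcal{F}_{T_N}$-measurable, I condition on $\mathcal{F}_{T_N}$ and use that $(A^{\suptext{S}}_t)$ is an $\M$-martingale (so $\kappa_{T_N T}(w_{\subtext{S}})=0$) together with the additive-process identity $\EM_{T_N}[e^{\innerp{z}{X_T}}]=e^{\innerp{z}{X_{T_N}}}e^{\kappa_{T_NT}(z)}$ to obtain
\begin{equation*}
\EM_{T_N}\!\left[h_T^{\suptext{N}}\right]= c_0\, A_{T_N}^{\suptext{S}} + c_5\, A_{T_N}^{\suptext{R}}A_{T_N}^{\suptext{S}},
\end{equation*}
with $c_0$ and $c_5$ as in the statement. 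This reduces $P^{\suptext{LPI}}_{tT}$ to the computation of $\EM_t[(c_0 A^{\suptext{S}}_{T_N}+c_5 A^{\suptext{R}}_{T_N}A^{\suptext{S}}_{T_N})\prod_{k=1}^N \mathrm{mid}(\beta_f, C_{T_k}/C_{T_{k-1}},\beta_c)]$ multiplied by $C^{\suptext{LPI}}_{T_0}/h^{\suptext{N}}_t$.

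\textbf{Step 2 (factorisation over periods).} Set $\Delta_k = X_{T_k}-X_{T_{k-1}\vee t}$; these increments are mutually independent and independent of $\mathcal{F}_t$ by the additivity of $(X_t)$. Since $C_t=1/(S(t)A_t^{\suptext{S}})$, each ratio $C_{T_k}/C_{T_{k-1}}$ depends only on $\innerp{w_{\subtext{S}}}{\Delta_k}$ and on $\mathcal{F}_t$-measurable constants, and the telescoping
\begin{equation*}
A^{\suptext{S}}_{T_N}=A^{\suptext{S}}_t\prod_{k=1}^N e^{\innerp{w_{\subtext{S}}}{\Delta_k}},\qquad A^{\suptext{R}}_{T_N}A^{\suptext{S}}_{T_N}=A^{\suptext{R}}_t A^{\suptext{S}}_t\prod_{k=1}^N e^{\innerp{w_{\subtext{R}}+w_{\subtext{S}}}{\Delta_k}}
\end{equation*}
makes the two terms split into products where factor $k$ depends only on $\Delta_k$. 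Conditional independence then turns the conditional expectation into a product of one-period factors $V_t^{j1}\prod_{k=2}^N V^{jk}$ for $j=1,2$, pulling out $A_t^{\suptext{S}}$ and $A_t^{\suptext{R}}$ as prefactors. This already yields the outer shape of the formula.

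\textbf{Step 3 (Fourier inversion per period).} Each factor is of the form $\EM[e^{\innerp{\omega^j}{\Delta_k}}\mathrm{mid}(\beta_f, C_k e^{-\innerp{w_{\subtext{S}}}{\Delta_k}},\beta_c)]$, for $C_k$ an explicit constant (or an $\mathcal{F}_t$-measurable constant when $k=1$). Using the decomposition $\mathrm{mid}(\beta_f,x,\beta_c)=\beta_c-(\beta_c-x)^+ + (\beta_f-x)^+$ and the identity $e^Y(\beta-Ce^{-Y})^+=(\beta e^Y-C)^+=\beta (e^Y-C/\beta)^+$, each put-type expectation reduces to an option on $e^{\innerp{w_{\subtext{S}}}{\Delta_k}}$ with strike $\alpha^{k\cdot}$ (or $\alpha^{1\cdot}_t$ for $k=1$). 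I then apply the Fourier inversion already used in Theorems~\ref{RITheoYoYFourier} and~\ref{RITheoZCFourier} (i.e.\ Lemma~\ref{RILemKey}) with the joint transform $q^{jk}$, which re-expresses these option expectations as the integrals displayed in the statement; the constant term $\beta_c\exp(\kappa_{T_{k-1}T_k}(\omega^j))$ comes from the $\beta_c\,\EM[e^{\innerp{\omega^j}{\Delta_k}}]$ piece of the mid decomposition (equal to $\beta_c$ when $\omega^j=w_{\subtext{S}}$). The integrability assumption on $R_k$ in the statement is exactly what legitimises the Fourier application.

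\textbf{Main obstacle.} The routine pieces are the tower step and the factorisation; the delicate part is the $k=1$ factor, where the relevant increment runs from $t$ rather than $T_0$ and an $\mathcal{F}_t$-measurable ratio $A^{\suptext{S}}_{T_0}/A^{\suptext{S}}_t$ must be absorbed cleanly into the strikes $\alpha^{1c}_t,\alpha^{1f}_t$ without disturbing the Fourier argument. Careful bookkeeping is also needed to verify that the signs produced by the mid decomposition combine with the $\alpha^{-(R+iu)}$ scaling so that both $\beta_c$- and $\beta_f$-terms enter the integrand with the same sign, as written.
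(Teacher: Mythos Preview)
Your approach is essentially identical to the paper's: the same mid decomposition $Z_{T_k}=(1+K_c)-(1+K_c-C_{T_k}/C_{T_{k-1}})^{+}+(1+K_f-C_{T_k}/C_{T_{k-1}})^{+}$, the same tower/independent-increments peel-off that factors the conditional expectation into one-period pieces (the paper makes this explicit by conditioning down one step at a time, which simultaneously produces the $c_0,c_5$ constants you obtain in Step~1), and the same Fourier inversion applied to each factor. One minor correction: the relevant appendix result for each one-period factor is the two-variable Lemma~\ref{RILemKey2} (or equivalently Lemma~\ref{RILemKey3}, given $R_k>0$), not the three-variable Lemma~\ref{RILemKey} you cite, since each factor involves only the pair $(\innerp{w_{\subtext{S}}}{\Delta_k},\innerp{\omega^j}{\Delta_k})$; the paper invokes Lemma~\ref{RILemKey2}.
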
 
\begin{proof}
First  we  write 
\begin{align*}
 C^{\suptext{LPI}}_ {T_N} = {} & C^{\suptext{LPI}}_ {{T}_0}\prod_{k=1}^N  Z_{T_k},   \\
 Z_{T_k} ={} &     (1+K_c )  -\left (1+ K_c - \frac{C_{T_{k}}}{C_{T_{k-1}}}  \right )^{+}+ \left ( 1+ K_f -  \frac{C_{T_{k}}}{C_{T_{k-1}}} \right )^+ .
\end{align*}
Note that $Z_{T_k} $ is $\Fc_{T_k}$-measurable and independent of $\Fc_{T_{k-1}  }$. 
Using the tower property and the independent increments property we have: 
\begin{align}
P_{t{T_N}}^{\suptext{LPI}}   = {} &   C^{\suptext{LPI}}_ {T_0} \frac{1 }{h_{t}^{\suptext{N}}} \EM_t  \left  [ h_{T} ^{\suptext{N}}  \prod_{k=1}^N  Z_{T_k}  \right ]  \notag \\
							  = {} & C^{\suptext{LPI}}_ {T_0} \frac{c_0 }{h_{t}^{\suptext{N}}}   \EM_t  \left  [  A_{T_{N-1}}^{\suptext{S}}  \prod_{k=1}^{N-1}  Z_{T_k}   \right   ]  \EM  \left [  \frac{ A_{T_N}^{\suptext{S}} }{A_{T_{N-1}}^{\suptext{S}} } Z_{T_N}  \right  ]     \notag \\
							   &\hspace{4.5cm}+C^{\suptext{LPI}}_ {T_0} \frac{c_5 }{h_{t}^{\suptext{N}}}    \EM_t \left  [ A_{T_{N-1}}^{\suptext{R}}   A_{T_{N-1}}^{\suptext{S}}  \prod_{k=1}^{N-1}  Z_{T_k}    \right  ]  \EM  \left  [  \frac{ A_{T_N}^{\suptext{R}}  A_{T_N}^{\suptext{S}} }{ A_{T_{N-1}}^{\suptext{R}} A_{T_{N-1}}^{\suptext{S}}} Z_{T_N} \right  ]      \notag  \\
\begin{split} \label{RITheoLPIEqExp}  
=  &  {}   C^{\suptext{LPI}}_ {T_0} \frac{c_0 }{h_{t}^{\suptext{N}}} A^{\suptext{S}}_t  \EM_t  \left  [  \frac{ A_{T_1}^{\suptext{S}} }{A_{ t}^{\suptext{S}} } Z_{T_1} \right  ]   \prod_{k=2}^N   \EM  \left  [  \frac{ A_{T_k}^{\suptext{S}} }{A_{T_{k-1}}^{\suptext{S}} } Z_{T_k} \right  ]    \\    
					     &\hspace{4.5cm}+ C^{\suptext{LPI}}_ {T_0} \frac{c_5 }{h_{t}^{\suptext{N}}}   A^{\suptext{R}}_{t} A^{\suptext{S}}_t \EM_t \left  [  \frac{  A_{T_1}^{\suptext{R}} A_{T_1}^{\suptext{S}} }{ A_{ t}^{\suptext{R}} A_{ t}^{\suptext{S}} } Z_{T_1} \right  ] \prod_{k=2}^N   \E^{\M}  \left  [  \frac{ A_{T_k}^{\suptext{R}}  A_{T_k}^{\suptext{S}}  }{ A_{   T_{k-1}}^{\suptext{R}} A_{  T_{k-1}}^{\suptext{S}} } Z_{T_k} \right  ]   . 
\end{split}
\end{align}
All the expectations can be calculated by Lemma~\ref{RILemKey2}, in the appendix, to obtain
\begin{align*}
\EM_t \left  [     \frac{  A_{T_1}^{\suptext{R}} A_{T_1}^{\suptext{S}} }{ A_{ t}^{\suptext{R}} A_{ t}^{\suptext{S}} } Z_{T_1}  \right   ]
 &=\EM  \left . \left   [  \frac{  A_{T_1}^{\suptext{R}} A_{T_1}^{\suptext{S}} }{ A_{ t}^{\suptext{R}} A_{ t}^{\suptext{S}} }   \left ( (1+K_c )  -\left (1+ K_c -x \frac{C_{T_1}}{C_t} \right )^{+}+ \left ( 1+ K_f -x \frac{C_{T_1}}{C_t}\right )^+ \right )  \right   ]    \right   \lvert_{x= C_t/C_{T_0}}  \\
&=\beta_c  \exp\left ( \kappa_{t  T_1} (  w_{\subtext{S}} +w_{\subtext{R}}) \right ) +   \frac{1}{\pi} \int_{\R ^+ }  \Re \frac{ \beta_c \, \vartheta^{21}(\alpha^{1c}_t,u) +\beta_f \, \vartheta^{21}(\alpha^{1f}_t,u) }{(R+iu) (1+R+iu) }  \diff u.
\end{align*}
The remaining remaining  expectations are calculated in the same way. 
\end{proof} 
We note that each $V$ is calculated like a ZC floor or YoY caplet, i.e. the evaluation is no more complicated than for a YoY cap. To price  multiple LPI-linked ZC bonds, the shorter maturity bond prices can be found from the factors needed for the longer maturity ones.
 
\subsection{Gaussian formulae} \label{RISecGaussForm}
In this section we  derive the results equivalent to  Theorems~\ref{RITheoYoYFourier}, \ref{RITheoZCFourier}  and \ref{RITheoLPIBondFourier} under the assumption of the model in Specification~\ref{RIExTimeChangeWP}. The results will be Black-Scholes-style formulae.

\begin{Proposition}  \label{RIPropYoYLN}   
Assume the additive exponential-rational pricing kernel where $(X_t)_{0 \leq t}$ is the time-changed Wiener process of Specification~\ref{RIExTimeChangeWP}. Assume that $t \leq T_{i-1} < T_i$ and denote by  $V^{\suptext{YoYFl}}_{t} $ the price of the floorlet, as in Theorem~\ref{RITheoYoYFourier}. Then 
\begin{equation*}
\begin{aligned}
V^{\suptext{YoYFl}}_{t} &=\frac{c_0 K }{h^{\suptext{N}}_t   }   A_t^{\suptext{S}}  \left (  \e^{\delta_1} \normcdf  ( - d_1 )  -   \alpha  \e^{\delta_1 + \mu_y + \tfrac{1}{2} ( 1+ 2b_1 )  \sigma_y^2 } \normcdf ( -( d_1+ \sigma_y ) )   \right )      \\ 
					      &+\frac{c_6 K }{h_t^{\suptext{N}}}  A_t^{\suptext{R}}  A_t^{\suptext{S}} \left (  \e^{\delta_2  } \normcdf  (  -d_2)   -   \alpha    \e^{\delta_2 + \mu_y + \tfrac{1}{2} ( 1+ 2b_2 )  \sigma_y^2  } \normcdf  ( - ( d_2 +\sigma_y ))   \right ) ,  
\end{aligned}
\end{equation*}
where 
\begin{align*} 
  \begin{aligned}
&   \mu_{x_1} ={}   \innerp{w_{\subtext{S}}}{\mu_{T_i} - \mu_{T_{i-1}}} ,  \\   
   &   \sigma_{x_1}   = {}   \innerp{w_{\subtext{S}}}{ ( \Sigma_{T_i} -\Sigma_{T_{i-1}}) w_{\subtext{S}}} ,  \\   
    &   \mu_{y}  = {}   - \innerp{w_{\subtext{S}}}{\mu_{T_i} - \mu_{T_{i-1}}} ,\\
    &  \sigma_{x_1 y}   = {}   -\innerp{w_{\subtext{S}}}{  ( \Sigma_{T_i} -\Sigma_{T_{i-1}} ) w_{\subtext{S}}} ,
  \end{aligned}
  &&
  \begin{aligned}
&  \mu_{x_2}  = {}   \innerp{w_{\subtext{R}}+w_{\subtext{S}}}{\mu_{T_i} - \mu_{T_{i-1}}} , \\
 &         \sigma_{x_2}   = {}   \innerp{w_{\subtext{R}}+w_{\subtext{S}}}{ ( \Sigma_{T_i} -\Sigma_{T_{i-1}}) (w_{\subtext{R}}+w_{\subtext{S}  } ) },  \\
 &       \sigma_{ y}   = {}   \innerp{w_{\subtext{S}}}{ ( \Sigma_{T_i} -\Sigma_{T_{i-1}} ) w_{\subtext{S}}} ,   \\
 &         \sigma_{x_2 y}   = {}  - \innerp{w_{\subtext{S}}}{ ( \Sigma_{T_i} -\Sigma_{T_{i-1}}) ( w_{\subtext{R}}+w_{\subtext{S}  } ) },
  \end{aligned}
 \end{align*}
 see Specification~\eqref{RIExTimeChangeWP} for the values of $\mu_t$ and $\Sigma_t$, 
 and, for $j=1,2$,
$\delta_j =  a_j + b_j \mu_y + (b_j \sigma_y)^2/2$, $d_j =\tfrac{1}{\sigma_y}  (  \ln \alpha + b_j\sigma_y^2 + \mu_y)$,
$a_j =\mu_{x_j} - \frac{\sigma_{x_j y}}{\sigma_y^2}\mu_y - (\sigma_{x_j}^2-\tfrac{\sigma^2_{x_j y}}{\sigma_y^2})/2$, $b_j=\tfrac{\sigma_{x_j y}}{\sigma_y^2}$.
Moreover,
 \begin{alignat*}{2}
\alpha = {} & \frac{ S(T_{i-1})   } { K  S(T_i)    } ,   \quad c_0= {}  R(T)  \left(1-b^{\suptext{R}}(T)\right)   S(T),   \\
c_6 ={} & R(T) b^{\suptext{R}}(T) S(T)  \exp \left ( \kappa_{ t T_{i-1} } (   w_{\subtext{R}}+w_{\subtext{S}}  ) \right ) \exp \left ( \kappa_{ T_{i} T } (   w_{\subtext{R}}+w_{\subtext{S}}  ) \right ).
  \end{alignat*}   
\end{Proposition}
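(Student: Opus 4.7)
The plan is to specialize the Fourier-based Theorem~\ref{RITheoYoYFourier} to the continuous Gaussian setting of Specification~\ref{RIExTimeChangeWP} and carry out the relevant expectations explicitly, thereby bypassing the Fourier inversion altogether. I would start directly from the intermediate representation obtained in the proof of Theorem~\ref{RITheoYoYFourier},
\begin{equation*}
V^{\suptext{YoYFl}}_{t} \;=\; \frac{c_0 K}{h_t^{\suptext{N}}}\,\EM_t\!\left[\bigl(1+\e^{Y_3}\bigr)\e^{Y_2}\bigl(1-\e^{Y_1}\bigr)^{+}\right],
\end{equation*}
split into the two additive pieces corresponding to $1$ and $\e^{Y_3}$, and then unwind $Y_1,Y_2,Y_3$ into increments of $(X_t)$ over the three disjoint intervals $[t,T_{i-1}]$, $[T_{i-1},T_i]$, $[T_i,T]$.

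In the second step I would exploit the independent-increments property (inherent to the additive specification) together with the fact that, under Specification~\ref{RIExTimeChangeWP}, each projected increment $\innerp{w}{X_{u}-X_{v}}$ is Gaussian with parameters readable from $(\mu_t,\Sigma_t)$. Conditioning on $\Fc_t$ lets me pull out the $\Fc_t$-measurable exponentials $A_t^{\suptext{S}}$ (respectively $A_t^{\suptext{R}}A_t^{\suptext{S}}$) as prefactors. For the first integrand the exponent is $w_{\subtext{S}}$, and because $(A^{\suptext{S}}_t)$ is an $\M$-martingale the expectations over the increments on $[t,T_{i-1}]$ and $[T_i,T]$ are both equal to $1$; for the second integrand the exponent is $w_{\subtext{R}}+w_{\subtext{S}}$, which is not the martingale exponent, and these two expectations now contribute $\exp(\kappa_{tT_{i-1}}(w_{\subtext{R}}+w_{\subtext{S}}))\exp(\kappa_{T_iT}(w_{\subtext{R}}+w_{\subtext{S}}))$, precisely the factor hidden inside $c_6$. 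What remains in each case is a bivariate Gaussian expectation of the form $\EM\!\left[\e^{x_j}(1-\alpha \e^{y})^{+}\right]$, where $x_j$ is the relevant projection of $X_{T_i}-X_{T_{i-1}}$ and $y=-\innerp{w_{\subtext{S}}}{X_{T_i}-X_{T_{i-1}}}$; the means, variances and covariances listed in the statement are just the corresponding Gaussian parameters.

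The core computation is a standard correlated-Black--Scholes calculation. Using the orthogonal regression $x_j=a_j+b_j y+\epsilon_j$ with $b_j=\sigma_{x_j y}/\sigma_y^{2}$ and $\epsilon_j$ independent of $y$, and absorbing $\EM[\e^{\epsilon_j}]$ into the intercept, I reduce the problem to
\begin{equation*}
\EM\!\left[\e^{x_j}(1-\alpha\e^{y})^{+}\right]
=\e^{a_j+\frac{1}{2}(\sigma_{x_j}^{2}-b_j^{2}\sigma_y^{2})}\Bigl(\EM\!\left[\e^{b_j y}\mathbf{1}_{y<-\ln\alpha}\right]-\alpha\,\EM\!\left[\e^{(b_j+1)y}\mathbf{1}_{y<-\ln\alpha}\right]\Bigr),
\end{equation*}
where the two one-dimensional Gaussian integrals evaluate by the identity $\EM[\e^{\beta y}\mathbf{1}_{y<k}]=\e^{\beta\mu_y+\beta^{2}\sigma_y^{2}/2}\normcdf((k-\mu_y-\beta\sigma_y^{2})/\sigma_y)$. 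Setting $k=-\ln\alpha$ produces exactly the pair of normal--CDF arguments $-d_j$ and $-(d_j+\sigma_y)$ listed in the proposition, and regrouping exponents yields the compact constant $\e^{\delta_j}$ for the $\normcdf(-d_j)$ term and $\alpha\,\e^{\delta_j+\mu_y+\frac{1}{2}(1+2b_j)\sigma_y^{2}}$ for the $\normcdf(-(d_j+\sigma_y))$ term.

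Finally I would collect the $j=1$ and $j=2$ contributions, multiply by the prefactors $c_0 K A_t^{\suptext{S}}/h_t^{\suptext{N}}$ and $c_6 K A_t^{\suptext{R}}A_t^{\suptext{S}}/h_t^{\suptext{N}}$ respectively, and recognize the result as the claimed expression. The only mildly delicate step is the careful bookkeeping: tracking which of the three independent increments of $(X_t)$ supplies a martingale factor of $1$, a Laplace-exponent factor $\e^{\kappa_{\cdot\cdot}(\cdot)}$, or the actual bivariate Gaussian pair $(x_j,y)$; everything else is routine algebra of Gaussian exponents and normal integrals.
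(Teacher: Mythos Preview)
Your proposal is correct and follows essentially the same route as the paper: use independent increments to factor the conditional expectation into the two terms carrying the prefactors $c_0 K A_t^{\suptext{S}}/h_t^{\suptext{N}}$ and $c_6 K A_t^{\suptext{R}}A_t^{\suptext{S}}/h_t^{\suptext{N}}$, and then evaluate each remaining bivariate Gaussian expectation $\EM[\e^{x_j}(1-\alpha\e^{y})^{+}]$. The only cosmetic difference is that the paper packages your regression-and-complete-the-square computation into a standing lemma (Lemma~\ref{RILemBS2}) and simply invokes it, whereas you carry it out inline.
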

\begin{proof}
Using the independent increments property 
 \begin{align*}
V^{\suptext{YoYFl}}_{t}  =  &  {}   \frac{1}{h^{\suptext{N}}_t} \EM_t \left [ R(T)  \left( 1+ b^{\suptext{R}}(T) ( A_T^{\suptext{R}} -1)\right) S(T) A_T^{\suptext{S}} \left  ( K- \frac{ C_{T_i} }{C_{T_i-1}}  \right  )^+ \right  ] \notag{} \\
			   = &  {}     \frac{c_0 K }{h^{\suptext{N}}_t   }  A_t^{\suptext{S}}  \EM \left  [    \frac{ A_{T_i}^{\suptext{S}}}{A_{T_{i-1}}^{\suptext{S}}} \left (  1- \frac{1}{K } \frac{ C_{T_i} }{C_{T_i-1}}  \right )^+ \right ] +  \frac{c_6 K }{h_t^{\suptext{N}}}  A_t^{\suptext{R}}    A_t^{\suptext{S}} \EM  \left  [  \frac{  A_{T_i}^{\suptext{R}} A_{T_i}^{\suptext{S}} }{A_{T_{i-1}}^{\suptext{R}} A_{T_{i-1}}^{\suptext{S}} }   \left (1-    \frac{1}{K } \frac{ C_{T_i} }{C_{T_i-1}} \right  )^+ \right ]  
\end{align*}
Now we apply  Lemma~\ref{RILemBS2}  to each term to obtain the result.
\end{proof}
  The case where $T_{i-1} < t < T_{i}$ is derived similarly, see Proposition~\ref{RIPropLPILN}. 
The price formula for the ZC floor is derived analogously.
\begin{Proposition}  \label{RIPropZCLN} 
Assume the additive exponential-rational pricing kernel where $(X_t)_{0 \leq t}$ is the time-changed Wiener process of Specification~\ref{RIExTimeChangeWP}.
Let  $T_0 \leq t < T_i \leq T$, then the price of the ZC floor is
\begin{equation*}
\begin{aligned}
V^{\suptext{ZCFl}}_{t}  =  &  {}   \frac{ c_0}{h^{\suptext{N}}_t}\frac{1}{S(T_i)}  \left(    \alpha_t \,\e^{ \mu_y + \tfrac{1}{2}  \sigma_y^2  } \normcdf  (d^1_t + \sigma_y)  -\normcdf   (  d^1_t )    \right)  +   \frac{ c_7}{h^{\suptext{N}}_t}\frac{1}{S(T_i)}   A_{t}^{\suptext{R}}    \left (     \alpha_t \,\e^{\delta + \mu_y + \tfrac{1}{2} (1+2 b) \sigma_y^2  } \normcdf  (   d^2_t +\sigma_y  ) - \e^{\delta  } \normcdf  (  d^2_t)   \right )   
\end{aligned}
\end{equation*}
where
$\mu_{x}  =   \innerp{w_{\subtext{R}}}{\mu_{T_i} - \mu_t}$, $\sigma_{x}   =   \innerp{w_{\subtext{R}}}{ ( \Sigma_{T_i}  -\Sigma_t)  w_{\subtext{R}}}$,  $\mu_{y}  =    \innerp{w_{\subtext{S}}}{\mu_{T_i} - \mu_t}$, $\sigma_{ y}   =   \innerp{w_{\subtext{S}}}{ ( \Sigma_{T_i}  -\Sigma_t ) w_{\subtext{S}}}$, $\sigma_{x y}   =  \innerp{w_{\subtext{R}}}{ ( \Sigma_{T_i} -\Sigma_t ) w_{\subtext{S}}}$,
 see Specification~\eqref{RIExTimeChangeWP} for the values of $\mu_t$ and $\Sigma_t$. Furthermore,
  $d^1_t = \tfrac{1}{\sigma_y}  (  \ln \alpha_t +\mu_y)$,  $d^2_t =\tfrac{1}{\sigma_y}  (  \ln \alpha_t + b \sigma_y^2 + \mu_y)$, $a=\mu_{x} - \frac{\sigma_{x y}}{\sigma_y^2}     \mu_y - \frac{1}{2} (\sigma_{x}^2-\tfrac{\sigma^2_{xy}}{\sigma_y^2})$,  $b=\tfrac{\sigma_{xy}}{\sigma_y^2}$,  $\delta = a+ b \mu_y + \frac{(b \sigma_y)^2}{2}$,     
$\alpha_t =K S(T_i) A_t^{\suptext{S}}$,  $c_0=   R(T) \left(1-b^{\suptext{R}}(T)\right) S(T)$,  $c_7 = R(T) b^{\suptext{R}}(T)   S(T)  \e^{\kappa_{T_i T} ( w_{\subtext{R}}+w_{\subtext{S}} ) }$.    
\end{Proposition}
\begin{proof}
Using the properties of the conditional expectation and the independence of the increments, we may write 
 \begin{align*}
V^{\suptext{ZCF}}_{t}   =  &  {} \frac{1}{h^{\suptext{N}}_t} \EM_t \left [ R(T)  \left( 1+ b^{\suptext{R}}(T) ( A_T^{\suptext{R}}-1) \right) S(T) A_T^{\suptext{S}} \left ( K- C_{T_i}    \right )^+  \right ] \notag{} \\
			  = &  {}  \left.   \frac{ c_0}{h^{\suptext{N}}_t}\frac{1}{S(T_i)}    \EM \left [ \left (  \frac{1}{x}   K   \frac{ C_t }{C_{T_i}} -1 \right  )^+\right  ] \right    \lvert_{x=C_t}  +\left. \frac{ c_7}{h^{\suptext{N}}_t}\frac{ A_{t}^{\suptext{R}} }{S(T_i)}    \EM \left [  \frac{A_{T_i}^{\suptext{R}}}{A_t^{\suptext{R}} } \left (   \frac{1}{x}   K   \frac{ C_t }{C_{T_i}} -1 \right  )^+ \right ]    \right \lvert_{x= C_t} .    
\end{align*}
Now applying  Lemma~\ref{RILemBS}  to each term yields the result.
\end{proof}
The formula for the price process of the LPI-linked ZC bond follows in the same way.
\begin{Proposition}[Limited price index bond] \label{RIPropLPILN}
Assume the additive exponential-rational pricing kernel with $(X_t)_{0 \leq t}$ the time-changed Wiener process of Specification~\ref{RIExTimeChangeWP}.   Assume without loss of generality that $T_0 \leq t < T_1< T_2 <\dots < T_N \leq T$. 
Then
\begin{equation*}
P_{tT}^{\suptext{LPI}}  = \tfrac{ 1}{h_{t}^{\suptext{N}}}  C^{\suptext{LPI}}_ {{T}_0} \left (   c_0 V^{11}_t  \prod_{k=2}^N V^{1k}  + c_5 A_t^{\suptext{R}} V_t^{21}  \prod_{k=2}^N V^{2k}  \right )  A_t^{\suptext{S}} 
\end{equation*}
where 
$c_0=R(T) (1-b^{\suptext{R}}(T) ) S(T)$,  $c_5 =R(T) b^{\suptext{R}}(T) S(T)\exp\left( \kappa_{ T_N T } (   w_{\subtext{R}}+w_{\subtext{S}}  ) \right),$ and for $k=2,\dots,N$, $j=1,2$
\begin{equation*}
\begin{aligned}
V^{j1}_t= {} & \left (1+K_c \right )  \e^{\kappa_{t  T_1} ( \omega^j) } +     \e^{\delta^{j1}  } \normcdf  \left ( - d_t^{j1c}  \right  )    - \alpha_t^{1c}    \e^{\delta^{j1} + \mu^1_y + \tfrac{1}{2} \left  (1+2b^{j1}  \right )  \left ( \sigma^1_y \right )^2 } \normcdf \left  (   - d_t^{j1c}  -\sigma^1_y  \right  )   \\
{} & +     \e^{\delta^{j1}  } \normcdf  \left ( -  d_t^{j1f}  \right  )    - \alpha_t ^{1f}    \e^{\delta^{j1} + \mu^1_y + \tfrac{1}{2}  \left (1+2b^{j1}  \right )   \left ( \sigma^1_y \right )^2 } \normcdf  \left (  -  d_t^{j1f}  - \sigma^1_y    \right )    \\
V^{jk}= {} & \left (1+K_c \right )  \e^{\kappa_{ T_{k-1}  T_k}  ( \omega ^j )} +     \e^{\delta^{jk}  } \normcdf  \left  ( -  d^{jkc}  \right  )    - \alpha^{kc}    \e^{\delta^{jk} + \mu^k_y + \tfrac{1}{2} \left  (1+2b^{jk} \right )  ( \sigma^k_y)^2  } \normcdf  \left (  -    d^{jkc}  - \sigma^k_y  \right  )   \\
{} & +     \e^{\delta^{jk}  } \normcdf  \left  ( -   d^{jkf}  \right )   - \alpha^{kf}    \e^{\delta^{jk} + \mu^k_y +\tfrac{1}{2} \left (1+2b^{jk} \right  )  ( \sigma^k_y)^2  } \normcdf  \left  (  -  d^{jkf}  -\sigma^k_y  \right), 
\end{aligned}
\end{equation*}
where, for $k=1,\dots,N$,
\begin{align*} 
  \begin{aligned}
  & \mu^k_{x_1} =   \innerp{w_{\subtext{S}}}{\mu_{T_k} - \mu_{T_{k-1} \vee  t}},   \\   
   &   \sigma^k_{x_1}   =  \innerp{w_{\subtext{S}}}{ ( \Sigma_{T_k} -\Sigma_{T_{k-1}\vee  t}) w_{\subtext{S}}},   \\   
    &   \mu^k_{y}  =  - \innerp{w_{\subtext{S}}}{\mu_{T_k} - \mu_{T_{k-1} \vee  t}} ,\\
    &  \sigma^k_{x_1 y}   =   -\innerp{w_{\subtext{S}}}{ ( \Sigma_{T_k} -\Sigma_{T_{k-1}\vee  t} ) w_{\subtext{S}}},  \\
    &  \omega^1   =  w^{\suptext{S}}, 
  \end{aligned}
  &&
  \begin{aligned}
  & \mu^k_{x_2}  =   \innerp{w_{\subtext{R}}+w_{\subtext{S}}}{\mu_{T_k} - \mu_{T_{k-1} \vee  t }}  ,\\
    &      \sigma^k_{x_2}   =   \innerp{w_{\subtext{R}}+w_{\subtext{S}}}{  ( \Sigma_{T_k} -\Sigma_{T_{k-1} \vee  t } ) (  w_{\subtext{R}}+w_{\subtext{S}} ) } , \\
     &   \sigma^k_{ y}   =  \innerp{w_{\subtext{S}}}{  ( \Sigma_{T_k} -\Sigma_{T_{k-1}\vee  t} ) w_{\subtext{S}}}  ,   \\
      &    \sigma^k_{x_2 y}   =  - \innerp{w_{\subtext{S}}}{ (\Sigma_{T_k} -\Sigma_{T_{k-1}\vee  t} )  ( w_{\subtext{R}}+w_{\subtext{S}} ) } , \\
          &  \omega^2   = w^{\suptext{S}}+ w^{\suptext{R}}, 
  \end{aligned}
 \end{align*}
   see Specification~\eqref{RIExTimeChangeWP} for the values of $\mu_t$ and $\Sigma_t$. For $j=1,2$, $k=1,\dots,N$, we have
\begin{align*}
  &\delta^{jk}  = a^{jk} + b^{jk}  \mu^k_y + \frac{(b^{jk}  \sigma^k_y)^2}{2}, &b^{jk} =   \frac{\sigma^k_{x_jy}}{(\sigma_y^k)^2},& &a^{jk}  = \mu^k_{x_j} - \frac{\sigma^k_{x_j y}}{(\sigma^k _y)^2}\mu^k_y - \frac{1}{2} \left ((\sigma^k_{x_j} ) ^2-\tfrac{( \sigma^k_{x_j y})^2}{(\sigma^k_y)^2} \right ),&
\end{align*} 
 for $k=2,\dots,N$ and  $l = c,f$
\begin{align*}
 d_t^{j1l}  = {} &  \tfrac{1}{\sigma^k_y}  \left ( \ln \alpha^{1l}_t  +   b^{1j}  (\sigma^1_y)^2 + \mu^1_y \right ) ,  &   
\alpha^{1l}_t = {} & \frac{  S(T_1) }{ (1+K_l)  S( T_0 )} \frac{ A^{\suptext{S}}_{T_0 } }{ A_t^{\suptext{S}}    } ,  \\
d^{jkl}  = {} &  \tfrac{1}{\sigma^k_y}  \left (\ln \alpha^{kl} + b^{jk}   (\sigma^k_y)^2 + \mu^k_y \right )   , 
 &  \alpha^{kl}  = {} & \frac{ S(T_{i-1})   } {  (1+K_l)   S(T_i)    } .
\end{align*}
\end{Proposition}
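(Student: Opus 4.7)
The plan is to mirror the proof of Theorem~\ref{RITheoLPIBondFourier} but replace the Fourier-inversion step with a direct Gaussian computation via the Black-Scholes-type Lemma~\ref{RILemBS2}, exactly as was done in Propositions~\ref{RIPropYoYLN} and~\ref{RIPropZCLN}. First I would repeat the product decomposition $C^{\suptext{LPI}}_{T_N}=C^{\suptext{LPI}}_{T_0}\prod_{k=1}^N Z_{T_k}$ with
\begin{equation*}
Z_{T_k}=(1+K_c)-\left(1+K_c-\tfrac{C_{T_k}}{C_{T_{k-1}}}\right)^+ +\left(1+K_f-\tfrac{C_{T_k}}{C_{T_{k-1}}}\right)^+,
\end{equation*}
noting that each $Z_{T_k}$ is $\Fc_{T_k}$-measurable and independent of $\Fc_{T_{k-1}}$ because $(X_t)$ is additive. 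Splitting $h_T^{\suptext{N}}$ into its two rational pieces and using the independent-increments factorisation as in Eq.~\eqref{RITheoLPIEqExp}, I would arrive at
\begin{equation*}
P_{tT}^{\suptext{LPI}}=\tfrac{1}{h_t^{\suptext{N}}}C^{\suptext{LPI}}_{T_0}A_t^{\suptext{S}}\left(c_0\, U_t^{1}\prod_{k=2}^N U^{1k}+c_5 A_t^{\suptext{R}}\, U_t^{2}\prod_{k=2}^N U^{2k}\right),
\end{equation*}
where each factor $U^{jk}$ is an expectation of the form $\EM\bigl[\tfrac{A_{T_k}^{\omega^j}}{A_{T_{k-1}}^{\omega^j}}Z_{T_k}\bigr]$ with $\omega^1=w_{\subtext{S}}$, $\omega^2=w_{\subtext{R}}+w_{\subtext{S}}$, and the $k=1$ factor is a conditional expectation at time $t$.

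Next I would reduce each $U^{jk}$ to three pieces: a deterministic $(1+K_c)\,\EM[A_{T_k}^{\omega^j}/A_{T_{k-1}}^{\omega^j}]=(1+K_c)\e^{\kappa_{T_{k-1}T_k}(\omega^j)}$ piece, a call-style piece for the $(1+K_c-\cdot)^+$ term, and a call-style piece for the $(1+K_f-\cdot)^+$ term. Under Specification~\ref{RIExTimeChangeWP} the pair $\bigl(\innerp{\omega^j}{X_{T_k}-X_{T_{k-1}\vee t}},-\innerp{w_{\subtext{S}}}{X_{T_k}-X_{T_{k-1}\vee t}}\bigr)$ is bivariate Gaussian with the means and covariances listed as $\mu^k_{x_j},\mu^k_y,\sigma^k_{x_j},\sigma^k_y,\sigma^k_{x_jy}$. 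Lemma~\ref{RILemBS2}, applied separately with $\alpha=\alpha^{kc}$ and $\alpha=\alpha^{kf}$, then yields the two $\normcdf$ terms in the stated expression for $V^{jk}$, with $a^{jk}$, $b^{jk}$, $\delta^{jk}$ and the log-moneynesses $d^{jkc},d^{jkf}$ exactly matching those in the statement (by the standard conditioning formula for bivariate Gaussians used also in Proposition~\ref{RIPropYoYLN}).

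For the $k=1$ factor I would condition on $\Fc_t$ and pull $A_t^{\suptext{S}}$ (respectively $A_t^{\suptext{R}}A_t^{\suptext{S}}$) inside the ratio; this is what produces the $t$-dependent strikes $\alpha^{1c}_t,\alpha^{1f}_t$ and why $\mu_{T_{k-1}\vee t},\Sigma_{T_{k-1}\vee t}$ enters the variance/covariance quantities (replacing $T_{k-1}$ by $t$ precisely when $k=1$). A final bookkeeping step then assembles all the $V^{jk}$ factors, together with the deterministic constants $c_0,c_5$ exported out of the outer expectation, to give the stated expression for $P_{tT}^{\suptext{LPI}}$.

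The routine but delicate step is the algebraic identification of the constants $a^{jk},b^{jk},\delta^{jk}$ emerging from Lemma~\ref{RILemBS2} with those in the statement, and the $k=1$ boundary case where one must carefully distinguish $T_{k-1}\vee t$ from $T_{k-1}$; no new probabilistic ideas beyond those used in Theorem~\ref{RITheoLPIBondFourier} and Proposition~\ref{RIPropYoYLN} are needed. Integrability, and hence the applicability of the lemma for each $(\omega^j,-w_{\subtext{S}})$, is automatic from finiteness of the Gaussian Laplace transforms of $(X_t)$ under Specification~\ref{RIExTimeChangeWP}.
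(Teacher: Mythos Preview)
Your proposal is correct and follows exactly the route the paper takes: reuse the product decomposition and independent-increments factorisation from Theorem~\ref{RITheoLPIBondFourier} up to Eq.~\eqref{RITheoLPIEqExp}, and then evaluate each of the resulting expectations via Lemma~\ref{RILemBS2} under the Gaussian Specification~\ref{RIExTimeChangeWP}. The paper's own proof is the one-line statement of precisely this, so there is no substantive difference.
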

\begin{proof}
We may use the proof of Theorem~\ref{RITheoLPIBondFourier} up to  the expectations in Eq.~\eqref{RITheoLPIEqExp} which can be calculated by Lemma~\ref{RILemBS2} in the Appendix.  
\end{proof}

\subsection{Nominal products}   \label{RISecNomProd}
An important nominal linear interest rate derivative is the swap which pays the difference between a fixed rate and a floating rate. Loosely speaking we refer to this rate as the  LIBOR. Suppose we have a sequence of  time points $T_0<T_1<\dots< T_N$, and let $\delta_i=T_i-T_{i-1}$.  A payer's swap pays  $\delta_i(L(T_i,T_{i-1},T_i)-K)$ at each $T_i$, where $L(T_i,T_{i-1},T_i)$ is the \textit{LIBOR} spot rate. We assume for ease of exposition that payments on the fixed leg $K$ and floating leg $L(T_i,T_{i-1},T_i)$ both occur at time $T_i$. It follows that the price of the swap at time $t\leq T_0$ is given by 
\begin{equation} \label{RIEqswapmodelfree}
V^{\suptext{Sw}}_{t} =\sum_{i=1}^N \delta_i\left(  \EM_t\left[ \frac{h_{T_i}^{\suptext{N}}}{h_{t}^{\suptext{N}}}L(T_i,T_{i-1},T_i)\right ] -KP^{\suptext{N}}_{tT_i}\right ) .  
\end{equation}
\begin{Definition}[Single-curve setup]
If LIBOR rates are spanned by a single system of nominal bonds for all tenors, we say that we are in the single-curve setup.
\end{Definition}  
We refer to \citep{GrbacRunggaldier2016} for an overview of single- and multi-curve interest rate models.  Within the single-curve setup, we have the no-arbitrage relation 
\begin{equation*}
L(T_i,T_{i-1},T_i)= \frac{1}{\delta_i}\left(\frac{1}{P^{\suptext{N}}_{T_{i-1} T_i}}-1\right).
\end{equation*}
It follows that
\begin{equation*}
\EM_t\left[ \frac{h_{T_i}^{\suptext{N}}}{h_{t}^{\suptext{N}}}L(T_i,T_{i-1},T_i)\right]=\frac{1}{\delta_i}\left(P_{t T_{i-1}}^{\suptext{N}}-P^{\suptext{N}}_{tT_i}\right),
\end{equation*}
and thus the swap price is given  by
\begin{equation} \label{RIeqSwap1Curve}
V^{\suptext{Sw}}_t =\sum_{i=1}^N \left[ P^{\suptext{N}}_{t T_{i-1}} -(1+\delta_i K)P^{\suptext{N}}_{tT_i}\right]. 
\end{equation}

\subsubsection{Swaptions}  \label{RISecSwaptions} 
A swaption is an option to enter  a swap at  some future time. If we let this point in time be  $T_k$ and denote the maturity of the underlying swap by $T_N$, the swaption price at $t\leq T_k$ is given by Eq.~\eqref{RIeqPricingFormulaM} and we have
\begin{equation} \label{RIEqSwaptionPrice}
V^{\suptext{Swn}}_{t } = \frac{N }{h_t^{\suptext{N}}}\EM_t \left [ h^{\suptext{N}} _{T_k}   \left( V^{\suptext{Sw}} _{T_k}\right) ^+ \right],  
\end{equation}
where $N$ is the notional.

\begin{Proposition} \label{RIPropSingleCurveSwaption}
Assume an additive exponential-rational pricing kernel model and assume the single curve setup.  Let  $V^{\suptext{Swn}}_{t}  $,  as in  \eqref{RIEqSwaptionPrice}, be the swaption price at time $0\leq t \leq T_k$. 
Let $T_{k+1}<T_{k+2}<\dots <T_N$ denote the payment dates of the underlying swap. Set 
\begin{align*}
c_0 ={} &  \sum_{i=k+1}^N  \left(   R(T_{i-1}) \left(1-b^{\suptext{R}}(T_{i-1}) \right) S(T_{i-1}) - (1+ \delta_i K )R(T_{i})\left(1-b^{\suptext{R}}(T_{i})\right)     S(T_{i })    \right),  \\
c_1 = {} &  \sum_{i=k+1}^N      \Bigl (    R(T_{i-1})  b^{\suptext{R}}(T_{i-1} )   S(T_{i-1}) \e ^{ \kappa_{T_kT_{i-1} } (  w_{\subtext{R}}+w_{\subtext{S}}) } - (1+ \delta_i K ) R(T_{i})  b^{\suptext{R}}(T_{i} )  S(T_{i })    \e ^{ \kappa_{T_kT_{i} } (  w_{\subtext{R}}+w_{\subtext{S}})  }    \Bigr ). 
\end{align*}
If $c_0<0$ and $c_1<0$, then   $ V^{\suptext{Swn}}_{t}=0$, and if $c_0>0$ and $c_1>0$, then  
\begin{equation*}
V^{\suptext{Swn}}_{t} =\frac{1}{h_t^{\suptext{N}}}  A_t^{\suptext{S}} \left ( c_0 +A_t^{\suptext{R}}  c_1 \e^{\kappa_{tT_k} (  w_{\subtext{R}} + w_{\subtext{S}}  ) } \right )  .
\end{equation*} 
If $\sign(c_0 ) \neq \sign (c_1)$, define
$Y_1=    \innerp{  w_{\subtext{S}}}{ X_{T_k} }$, $Y_2=    \innerp{  w_{\subtext{R}} }{ X_{T_k} }$ and $q_t(z)= \EM_t[\e^{  \innerp{ z}{  (Y_1,Y_2) } }] $.
Let  $R<-1$ if $c_0<0 $  and  $R>0$ if $c_0>0$. Assume that  $q_t(-R,1) < \infty$.
Then
\begin{equation*}
 V^{\suptext{Swn}}_{t} = \frac{\abs{c_0}  }{\pi h_t^{\suptext{N}}}  \int_{\R^+}    \Re \frac{  \vartheta_t(u) }{ (R+iu)(1+R+iu)}   \diff u  
\end{equation*}
where
$
 \vartheta_t(u)  = \alpha^{-(R+iu)}   q_t(-(R+iu) ,1 ) 
$
and $\alpha = \abs{c_1/c_0}$.
\end{Proposition}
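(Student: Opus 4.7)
The plan is to first evaluate the swap value $V^{\suptext{Sw}}_{T_k}$ explicitly, and then split the pricing of the swaption according to the sign pattern of the coefficients $(c_0,c_1)$. Starting from the single-curve swap formula~\eqref{RIeqSwap1Curve} applied to the underlying swap with payment dates $T_{k+1},\dots,T_N$, I substitute the bond formula~\eqref{RIeqNomBondPrice} at time $T_k$. Because $(X_t)$ is additive, the conditional expectation $\EM_{T_k}\!\left[A^{\suptext{R}}_{T_i}A^{\suptext{S}}_{T_i}\right]$ factorises as $A^{\suptext{R}}_{T_k}A^{\suptext{S}}_{T_k}\exp(\kappa_{T_kT_i}(w_{\subtext{R}}+w_{\subtext{S}}))$ by Lemma~\ref{RILemMPCF} (with $N=1$). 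Collecting the $A^{\suptext{S}}_{T_k}$ and $A^{\suptext{R}}_{T_k}A^{\suptext{S}}_{T_k}$ terms yields
\begin{equation*}
V^{\suptext{Sw}}_{T_k}\;=\;\frac{A^{\suptext{S}}_{T_k}}{h^{\suptext{N}}_{T_k}}\bigl(c_0+c_1 A^{\suptext{R}}_{T_k}\bigr),
\end{equation*}
with exactly the $c_0$ and $c_1$ stated in the proposition.

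Because $A^{\suptext{S}}_{T_k}, A^{\suptext{R}}_{T_k}, h^{\suptext{N}}_{T_k}>0$ a.s., the sign of $V^{\suptext{Sw}}_{T_k}$ coincides with the sign of $c_0+c_1A^{\suptext{R}}_{T_k}$. If $c_0,c_1<0$, this is strictly negative, so $(V^{\suptext{Sw}}_{T_k})^+=0$ and the swaption is worthless. If $c_0,c_1>0$, the option never expires out of the money, and the positive-part can be removed; substituting into~\eqref{RIEqSwaptionPrice}, the $h^{\suptext{N}}_{T_k}$ cancels and I obtain $V^{\suptext{Swn}}_t=\tfrac{1}{h^{\suptext{N}}_t}\EM_t\!\left[c_0 A^{\suptext{S}}_{T_k}+c_1 A^{\suptext{R}}_{T_k}A^{\suptext{S}}_{T_k}\right]$, which evaluates to the stated closed form using the martingale property of $(A^{\suptext{S}}_t)$ and again Lemma~\ref{RILemMPCF} for the product term.

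The main work is in the mixed-sign case. Here I factor out $c_0$ to write $(c_0+c_1A^{\suptext{R}}_{T_k})^+=|c_0|\,(1-\alpha A^{\suptext{R}}_{T_k})^+$ when $c_0>0>c_1$ (a put-type payoff in $A^{\suptext{R}}_{T_k}$ with log-strike $-\ln\alpha$), and $|c_0|(\alpha A^{\suptext{R}}_{T_k}-1)^+$ when $c_0<0<c_1$ (a call-type payoff), where $\alpha=|c_1/c_0|$. After cancelling $h^{\suptext{N}}_{T_k}$ in~\eqref{RIEqSwaptionPrice}, the required expectation is $\EM_t[e^{Y_2}(1-\alpha e^{Y_1})^+]$ (or its call analogue), fitting the template of the appendix's Fourier-inversion result (Lemma~\ref{RILemKey} used in Theorems~\ref{RITheoYoYFourier} and~\ref{RITheoZCFourier}). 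The prescribed dampening parameter $R>0$ (put) or $R<-1$ (call) is precisely what places the integration contour in the strip of analyticity of the Mellin transform $s\mapsto 1/(s(s+1))$ of the payoff, and the integrability hypothesis $q_t(-R,1)<\infty$ ensures the conditional moment generating function is finite along the contour. Applying the lemma and using the symmetry of the integrand under $u\mapsto -u$ to fold the integral onto $\R^+$ yields the stated representation, with the factor $\alpha^{-(R+iu)}$ arising from the log-strike shift and $q_t(-(R+iu),1)$ being the conditional joint Laplace transform of $(Y_1,Y_2)$ evaluated along the Bromwich contour.

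The delicate step is the third case: one must correctly identify the Carr--Madan-type representation, check that the given $R$-condition lies in the admissible strip, and verify that the signs of $c_0$ and $c_1$ are absorbed consistently into $\alpha$ and into the overall prefactor $c_0/(\pi h^{\suptext{N}}_t)$ (so that for $c_0<0$ the factor $c_0$ supplies the correct sign for the call, matching the orientation of the dampening exponent $R<-1$). Everything else reduces to bookkeeping of the constants $b_2,b_3$ and the Laplace exponents $\kappa_{tT_k}$.
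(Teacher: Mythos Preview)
Your proof is correct and follows essentially the same route as the paper: compute $h^{\suptext{N}}_{T_k}V^{\suptext{Sw}}_{T_k}$ from the bond formula~\eqref{RIeqNomBondPrice} and the single-curve swap formula~\eqref{RIeqSwap1Curve}, use additivity to factor $\EM_{T_k}[A^{\suptext{R}}_{T_i}A^{\suptext{S}}_{T_i}]$, collect into $A^{\suptext{S}}_{T_k}(c_0+c_1A^{\suptext{R}}_{T_k})$, and then split by the sign pattern of $(c_0,c_1)$. The only imprecision is your reference to Lemma~\ref{RILemKey}: the paper invokes the two-variable Lemmas~\ref{RILemKey3} (put case, $R>0$) and~\ref{RILemKey2} (call case, $R<-1$) directly, whereas Lemma~\ref{RILemKey} carries an extra $(1+\e^{Y_3})$ factor not present here—but your description of the Carr--Madan mechanism and the role of the dampening parameter is accurate.
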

\begin{proof}
From the nominal bond pricing formula~\eqref{RIeqNomBondPrice} we have that
\begin{equation*}
h_{T_k}^{\suptext{N}} P_{T_k T_i}^{\suptext{N}} = R(T_i ) S(T_i)\left ( A_{T_k}^{\suptext{S}} \left(1-b^{\suptext{R}}(T_i)\right)+  b^{\suptext{R}}(T_i) \EM_{T_k} \left[A^{\suptext{R}}_{T_i} A_{T_i}^{\suptext{S}}\right] \right) .
\end{equation*}
Then, inserting this into the swap formula~\eqref{RIeqSwap1Curve}, we obtain 
\begin{align*}
h_{T_k}^{\suptext{N}} V^{\suptext{Sw}}_{T_k}  = {} & \sum_{i=k+1}^N  \bigg (R(T_{i-1}) S(T_{i-1})\left( A_{T_k}^{\suptext{S}}  \left(1-b^{\suptext{R}}(T_{i-1})\right)  + b^{\suptext{R}}(T_{i-1}) \EM_{T_k} \left[A^{\suptext{R}}_{T_{i-1}} A_{T_{i-1}}^{\suptext{S}}\right]\right)\\
{} &  - (1+\delta_i K )   R(T_{i}) S(T_{i})  \left( A_{T_k}^{\suptext{S}} \left(1-b^{\suptext{R}}(T_{i})\right)  + b^{\suptext{R}}(T_{i}) \EM_{T_k} \left[A^{\suptext{R}}_{T_i} A_{T_i}^{\suptext{S}} \right]\right)\bigg) .
\end{align*}
We may write 
\begin{equation} \label{RIEqExpecProdAdd}
\EM_{T_k} \left[A^{\suptext{R}}_{T_i} A_{T_i}^{\suptext{S}}\right] =A^{\suptext{R}}_{T_k}   A^{\suptext{S}}_{T_k}  \e^{ \kappa_{T_kT_i } (  w_{\subtext{R}} +w_{\subtext{S}}) }. 
\end{equation} 
Collecting terms and using the fact that $A^{\suptext{S}}_{t}>0$ for any $t \geq 0$, we arrive at
\begin{equation*}
\left(h^{\suptext{N}}_{T_k} V^{\suptext{Sw}}_{T_k}\right)^+  = A_{T_k}^{\suptext{S}} \left( c_0    + c_1  A_{T_k}^{\suptext{R}}   \right)^+ .
\end{equation*}
 If $c_0>0$ and $c_1<0$, then 
\begin{equation} \label{RIEqSCSwaption1}
  \left( c_0    + c_1  A_{T_k}^{\suptext{R}} \right)^+  =  \abs{  c_0 }    \left( 1    - \alpha  A_{T_k}^{\suptext{R}}   \right)^+, 
\end{equation}
where we recall that  $\alpha = \abs{c_1/c_0}$. The result follows from Lemma~\ref{RILemKey3}.
If $c_0<0$ and $c_1>0$, then
\begin{equation} \label{RIEqSCSwaption2}
  \left( c_0    + c_1  A_{T_k}^{\suptext{R}}   \right)^+  = \abs{  c_0 }  \left( \alpha A_{T_k}^{\suptext{R}}  -1   \right)^+ 
\end{equation}
and the result follows from  Lemma~\ref{RILemKey2}. The two remaining cases are straightforward.
\end{proof}
The independent increments property of $(X_t)$ is only used to obtain Eq.~\eqref{RIEqExpecProdAdd}.

\begin{Remark}
Under the assumption of  Specification~\ref{RIExTimeChangeWP} the counterpart to Proposition~\ref{RIPropSingleCurveSwaption} is obtained by applying  Lemma~\ref{RILemBS} and \ref{RILemBS2} to  Eq.~\eqref{RIEqSCSwaption1} and Eq.~\eqref{RIEqSCSwaption2}. 
\end{Remark}

\subsubsection{Multi-curve interest rate setting} \label{RISecMultiCurve}
We can, at a relatively low cost, allow our model to incorporate multi-curve-features. This is done by modelling~\eqref{RIEqswapmodelfree} as a rational function of state variables not fully spanned by the ones driving the nominal bonds. We model  the  forward LIBOR by
\begin{equation*}
L(t,T_{i-1},T_{i}):= \frac{1}{h_t^{\suptext{N}}} \EM_t\left[ h_{T_i}^{\suptext{N}}  L(T_i,T_{i-1},T_i)\right].
\end{equation*}
\citep{CrepeyEtAl2016} propose the following definition, which we shall adopt.
\begin{Definition}[Rational multi-curve setup]
Let 
\begin{equation} \label{RIeqMultiCurveL}
L(t,T_{i-1},T_{i}):= \frac{L(0,T_{i-1},T_i)+b^{\suptext{L}}(T_{i-1},T_i) ( A_t^{\suptext{L}}-1 ) }{h_t^{\suptext{N}}}, 
\end{equation}
where $(A_t^{\suptext{L}})_{0\leq t}$ is an $\M$-martingale with $A_0^{\suptext{L}}=1$, and where $b^{\suptext{L}}(\cdot,\cdot)$ and  $L(0,\cdot,\cdot)$ are deterministic functions.
\end{Definition}

We consider
$
A^{\suptext{L}} _t = \exp( \innerp{w_{\subtext{L}}}{X_t}),
$
where $w_{\subtext{L}}$ is chosen such that $(A^L_t)$ is a martingale. This is analogous to how $(A_t^{\suptext{R}})$ and $(A_t^{\suptext{S}})$ are modelled. Adding a multi-curve dimension to the nominal markets has no effect  on any of the formulae derived for the inflation products.  It does though impact the swaption formula.
\begin{Proposition} \label{RIPropMultiCurveSwaption} 
Assume an additive exponential-rational pricing kernel model and the multi-curve setup.  Consider a swaption with maturity $T_k$ written on a swap with payments dates   $T_{k+1}<T_{k+2}<\dots <T_N$.  The swaption price $V^{\suptext{Swn}}_{t}  $ at  $0 \leq t\leq T_k$  is given by
\begin{equation} \label{RIeqSwaptionMean}
V^{\suptext{Swn}} _{t }  = \frac{1}{h_t^{\suptext{N}}}  \int_{\R^n} H^{\suptext{M}}(x) ^+ \intspace  m_{T_k} (\dd x)   
\end{equation}
where $m_{T_k}$ is the distribution of $X_{T_k}$ and
\begin{equation*}
H^{\suptext{M}}(x) =      c_0 + c_{\subtext{L}}  \exp\left ( \innerp{w_{\subtext{L}}}{ x} \right ) +  c_{\subtext{S}} \exp \left ( \innerp{w_{\subtext{S}}}{x} \right  ) + c_{SR}  \exp \left ( \innerp{w_{\subtext{R}} +w_{\subtext{S}}}{ x} \right ).
\end{equation*}
Furthermore,
\begin{align*}
&c_0=   \sum_{i=k+1}^N   \delta_i   \left (L(0,T_{i-1},T_i)-b^{\suptext{L}}(T_{i-1},T_i) \right ),&  &c_{\subtext{L}} = \sum_{i=k+1}^N    \delta_i   b^{\suptext{L}}(T_{i-1},T_i)&  \\
&c_{\subtext{S}} =  - \sum_{i=k+1}^N    \delta_i K       R(T_{i}) S(T_{i}) \left (1-b^{\suptext{R}}(T_{i}) \right),&  
&c_{\subtext{SR}} = -\sum_{i=k+1}^N    \delta_i K    R(T_{i})  b^{\suptext{R}}(T_{i})   S(T_{i})  \e^{\kappa_{T_k T_i} (  w_{\subtext{R}} +w_{\subtext{S}})  }.&
\end{align*}
\end{Proposition}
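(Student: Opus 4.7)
The plan is to compute $h_{T_k}^{\suptext{N}} V^{\suptext{Sw}}_{T_k}$ in closed form as a function of $X_{T_k}$ and then invoke the fundamental swaption pricing formula~\eqref{RIEqSwaptionPrice} (with notional normalised to one). Since $h_{T_k}^{\suptext{N}}>0$, the positive part commutes with multiplication, giving $h_{T_k}^{\suptext{N}}(V^{\suptext{Sw}}_{T_k})^+ = (h_{T_k}^{\suptext{N}} V^{\suptext{Sw}}_{T_k})^+$, so the target reduces to $V^{\suptext{Swn}}_t = \tfrac{1}{h_t^{\suptext{N}}}\EM_t[(h_{T_k}^{\suptext{N}} V^{\suptext{Sw}}_{T_k})^+]$. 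Rewriting the resulting conditional expectation as an integral against the $\Fc_t$-conditional law $m_{T_k}$ of $X_{T_k}$ is then the only remaining step.

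Starting from~\eqref{RIEqswapmodelfree} evaluated at time $T_k$ and substituting the multi-curve LIBOR specification~\eqref{RIeqMultiCurveL} directly yields
\begin{equation*}
h_{T_k}^{\suptext{N}} V^{\suptext{Sw}}_{T_k} = \sum_{i=k+1}^N \delta_i\Big[L(0,T_{i-1},T_i) + b^{\suptext{L}}(T_{i-1},T_i)\big(A_{T_k}^{\suptext{L}}-1\big) - K\, h_{T_k}^{\suptext{N}} P^{\suptext{N}}_{T_k T_i}\Big].
\end{equation*}
The nominal-bond factors $h_{T_k}^{\suptext{N}} P^{\suptext{N}}_{T_k T_i}$ are handled exactly as in the proof of Proposition~\ref{RIPropSingleCurveSwaption}: combining~\eqref{RIeqNomBondPrice} with the additive-process identity $\EM_{T_k}[A^{\suptext{R}}_{T_i} A^{\suptext{S}}_{T_i}] = A^{\suptext{R}}_{T_k} A^{\suptext{S}}_{T_k}\,\e^{\kappa_{T_k T_i}(w_{\subtext{R}}+w_{\subtext{S}})}$ gives
\begin{equation*}
h_{T_k}^{\suptext{N}} P^{\suptext{N}}_{T_k T_i} = R(T_i)S(T_i)\Big[\big(1-b^{\suptext{R}}(T_i)\big) A_{T_k}^{\suptext{S}} + b^{\suptext{R}}(T_i)\, A_{T_k}^{\suptext{R}} A_{T_k}^{\suptext{S}}\,\e^{\kappa_{T_k T_i}(w_{\subtext{R}}+w_{\subtext{S}})}\Big].
\end{equation*}
Collecting terms in the four building blocks $1$, $A_{T_k}^{\suptext{L}}$, $A_{T_k}^{\suptext{S}}$ and $A_{T_k}^{\suptext{R}} A_{T_k}^{\suptext{S}}$ produces, term by term, the coefficients $c_0$, $c_{\subtext{L}}$, $c_{\subtext{S}}$ and $c_{\subtext{SR}}$ stated in the proposition. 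Substituting the exponential representations $A_{T_k}^{\suptext{L}} = \e^{\innerp{w_{\subtext{L}}}{X_{T_k}}}$, $A_{T_k}^{\suptext{S}} = \e^{\innerp{w_{\subtext{S}}}{X_{T_k}}}$ and $A_{T_k}^{\suptext{R}} A_{T_k}^{\suptext{S}} = \e^{\innerp{w_{\subtext{R}}+w_{\subtext{S}}}{X_{T_k}}}$ then identifies $h_{T_k}^{\suptext{N}} V^{\suptext{Sw}}_{T_k}$ with $H^{\suptext{M}}(X_{T_k})$, and expressing the resulting expectation as an integral against $m_{T_k}$ delivers~\eqref{RIeqSwaptionMean}.

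The argument is essentially bookkeeping and I do not anticipate a conceptual obstacle; the main friction is just keeping the signs and the sum over $i$ tidy while regrouping the four building blocks. What is worth flagging is a structural point rather than a technical one: unlike the single-curve case of Proposition~\ref{RIPropSingleCurveSwaption}, the payoff $H^{\suptext{M}}$ now carries three distinct exponentials with arguments $w_{\subtext{L}}$, $w_{\subtext{S}}$ and $w_{\subtext{R}}+w_{\subtext{S}}$, so one cannot generically collapse the expression onto a single log-asset and reduce it to a one-dimensional Fourier-inversion formula; the integral against $m_{T_k}$ is therefore the natural stopping point, with further specialisation (e.g.\ to Specification~\ref{RIExTimeChangeWP}) required for a more explicit evaluation.
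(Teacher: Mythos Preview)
Your proposal is correct and follows essentially the same approach as the paper: compute $h_{T_k}^{\suptext{N}} V^{\suptext{Sw}}_{T_k}$ by inserting the multi-curve LIBOR specification~\eqref{RIeqMultiCurveL} and the bond expression~\eqref{RIeqNomBondPrice} (using the additive identity~\eqref{RIEqExpecProdAdd}), then collect terms into the four exponential building blocks. The paper's own proof is in fact considerably terser---it writes out the same expansion of $h_{T_k}^{\suptext{N}} V^{\suptext{Sw}}_{T_k}$ and concludes with ``Collecting terms the result follows''---so your version is a faithful and more explicit rendering of the same argument.
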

\begin{proof}
 Using~\eqref{RIeqMultiCurveL} we may write
\begin{align*}
h_{T_k}^{\suptext{N}} V^{\suptext{Sw}} _{T_k}  = {} & \sum_{i=k+1}^N  \delta_i\bigg(L(0,T_{i-1},T_i) + b^{\suptext{L}}(T_{i-1},T_i) (A^{\suptext{L}}_{T_k} -1) \\ 
&\hspace{3cm} -  K  R(T_{i})  S(T_{i}) \left( A_{T_k}^{\suptext{S}} \left(1-b^{\suptext{R}}(T_{i})\right)  + b^{\suptext{R}}(T_{i}) \EM_{T_k} \left[A^{\suptext{R}}_{T_i} A_{T_i}^{\suptext{S}} \right]\right)\bigg). 
\end{align*}
By collecting the terms, the result follows.
\end{proof}
We note that, analogous to the single-curve setup, the independent increments of $(X_t)$ are only used to evaluate $\EM_{T_k} [A^{\suptext{R}}_{T_i} A_{T_i}^{\suptext{S}}]$. In our applications, since  $(X_t)$ is bi-variate, we may apply  the two-dimensional cosine method of \citep{RuijterOosterlee2012}. 
The most immediate method for handling~\eqref{RIeqSwaptionMean} in higher dimensions is in the style of \citep{SingletonUmantsev2002}, where 
$
H(x)^+ \approx  H(x)  \1{G}  .
$
with  $G = \{  H(x) \geq  0 \}$ being exact.  If $G = \{ \innerp{\omega}{X_{T_k}} > \alpha\}$, this leads to a one-dimensional integral, see \citep{Kim2014} and  \citep{CuchieroEtAl2019}. If $G = \{ \beta_1  \exp( \innerp{ \omega_1}{x} ) + \beta_2 \exp ( \innerp{ \omega_2}{x})  > \alpha \}$  the inversion formula becomes a two-dimensional \citep{HurdZhou2010}-type formula.
\section{Calibration examples} \label{RISecCaliExam}
In this section, we show the calibration properties of the models on real data.  We consider EUR data from Bloomberg from 1 January 2015. The necessary data consists of OIS zero-yields constructed from EONIA overnight indexed swaps, LIBOR discrete curves based on EURIBOR and a term structure of ZC forward rates, as well as YoY cap and floor prices and EURIBOR swaptions. There is no LPI traded on EUR data. The OIS and EURIBOR curves are constructed directly in the Bloomberg system. We then set the nominal curve equal to the OIS curve, and the initial real (or equivalently the initial inflation-linked) curve is implied from the OIS curve and zero-coupon inflation forward rate using the methodology described in Section~\ref{RISecFundProd}. The prices for YoY caps and floors are available to us for maturities 2, 5, 7, 10, 12, 15, 20 and 30 years. The strikes for the floors range from -1\% to 3\% and caplets from 1\% to 6\%. Quotes for YoY swap rates are not available to us, but the overlap in strikes for YoY caps and floors allows us to use put-call (cap-floor) parity to imply YoY swap rates consistent with the option prices.

\begin{figure}[htbp!]
\centering
\includegraphics[scale=0.49]{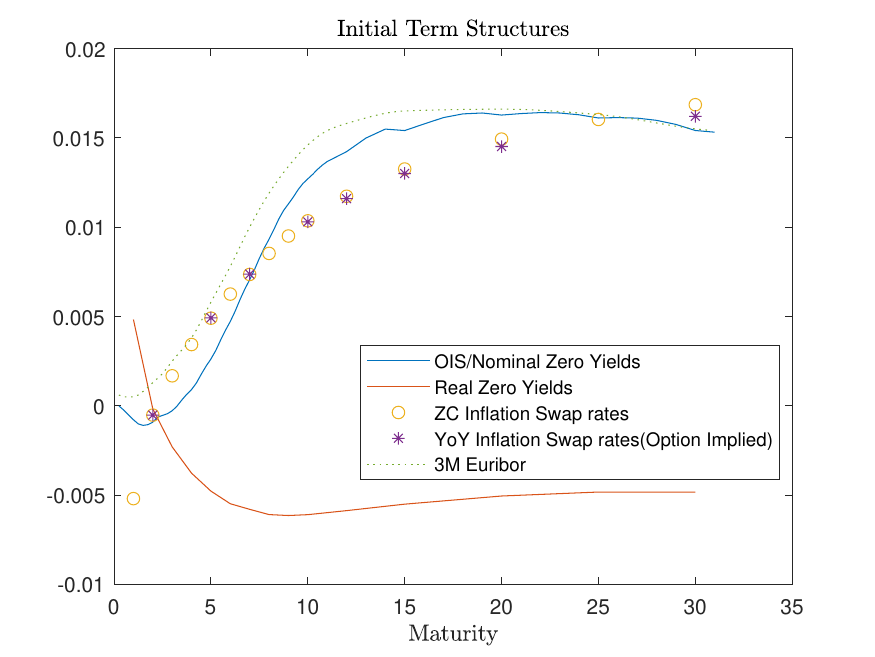}\includegraphics[scale=0.5]{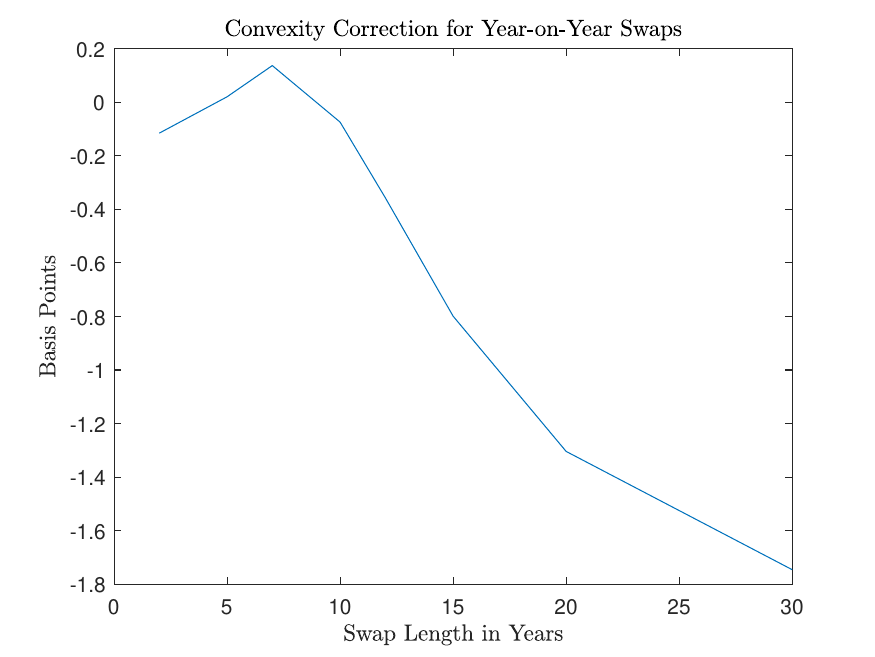}
\caption{Left: The initial curves. Right: Convexity corrections for YoY swap rates. 1 January 2015.  }
\label{RIf:Discrete rates}
\end{figure}
In Figure~\ref{RIf:Discrete rates}, all the curves are plotted on the left-hand side. The real curve is plotted as zero-coupon rates, and we note that on this day, there is a consistently negative real curve with a widening gap to the nominal as the maturity increases. The 3m EURIBOR and OIS curve are plotted as discrete forward rates with 3m increments to be directly comparable, and we can note a significant spread between the two curves in the short and most liquid end of the maturity spectrum, which warrants the use of a multi-curve model to price nominal products. Finally, we observe that the option-implied YoY swap rates are close to the ZC swap rates. This relation implies only small levels of the convexity correction as seen directly in the right-hand-side of Figure~\ref{RIf:Discrete rates} where the convexity correction, as described in Section~\ref{RISecFundProd}, is plotted for different swap lengths.  
An implied lognormal volatility surface is constructed from the prices of these options (selecting out-of-the-money options where available) using a geometric Brownian motion model for the CPI index as described in Section~\ref{RISecComOthMod}. Two of the prices for the two year maturity are identically zero and are thus removed from the dataset. We find the at-the-money implied volatility of the YoY cap using the piecewise constant hermite interpolation. The surface is plotted in the left hand side of Figure~\ref{RIf:Vols}, and one can see a significant volatility smile, but also volatility levels that are quite low, around only 1.5-3\%. Finally we consider a EURIBOR term structure of swaptions with maturities ranging from 3m, 6m, 1Y, 2Y, 3Y, 5Y, 7Y, 10Y, 15Y, 20Y to 30Y. Since the focus of the paper is on the inflation component we limit our modelling to one curve---the 3m tenor curve. We thus calibrate only to swaptions with a one-year underlying swap length, since this swaption, by EUR market convention, contains payments involving only 3m EURIBOR. We refer to \citep{CrepeyEtAl2016} for a more extensive calibration involving matching the volatility of both the 3m and 6m EURIBOR curves in a rational model resembling this one, but without the inflation component.  Due to the lognormal assumption for swap rates precluding negative interest, rates it is now customary to quote swaption prices in normal or Bachelier implied volatility as opposed to lognormal. This is done on the right panel in Figure~\ref{RIf:Vols}. The data is from 1 January 2015, and we calibrate directly to these volatilities. 
\begin{figure}[H]
\centering
\includegraphics[scale=0.55]{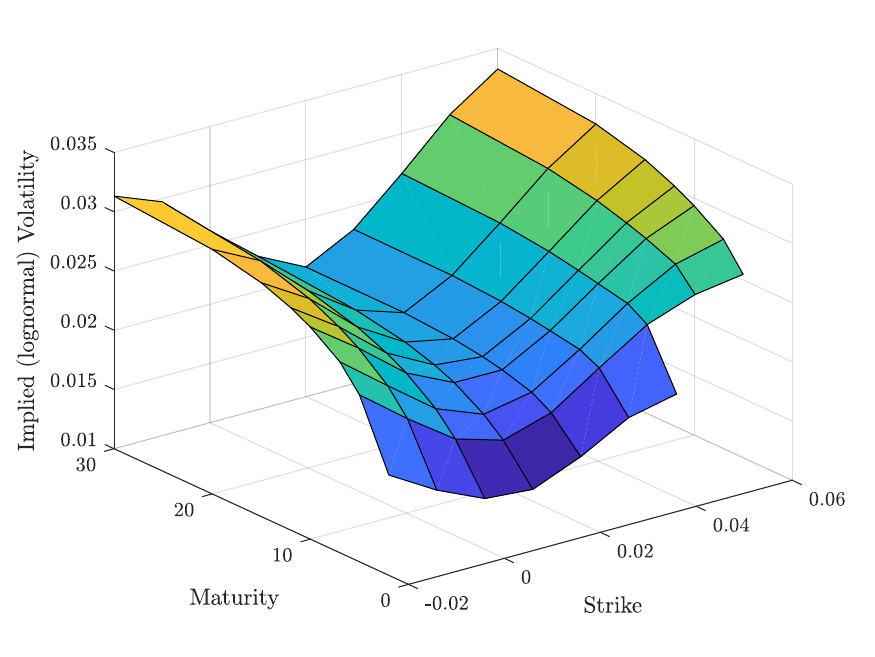}\includegraphics[scale=0.49]{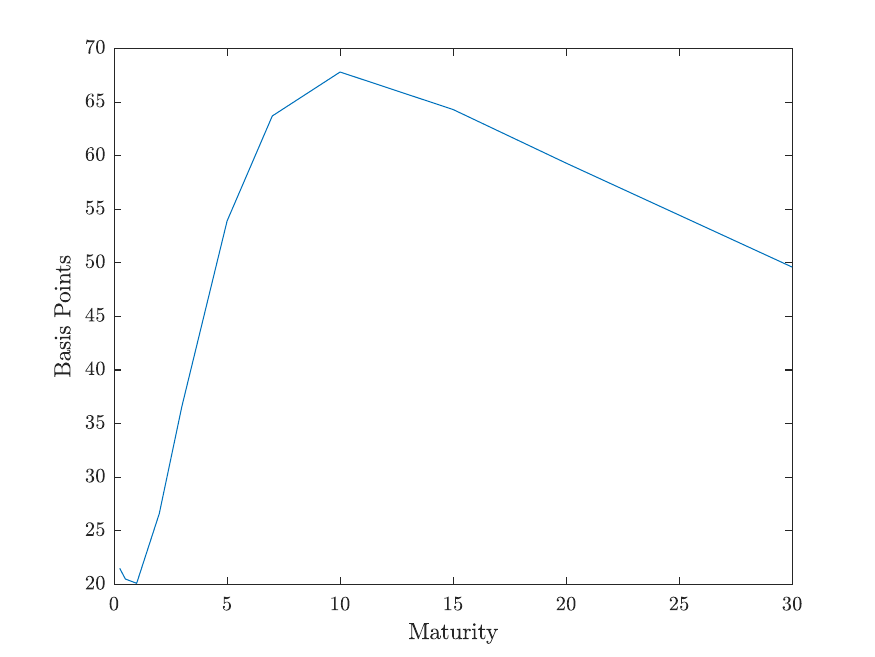}
\caption{On the left: Lognormal implied volatility surface. On the right: Implied normal (Bachelier) volatility in basis points for swaptions on 1Y swaps. The data is from 1 January 2015.}
\label{RIf:Vols}
\end{figure}
Very similar to Specification~\ref{RIExTimeChangeLevy}, our model setup is the following:
\begin{equation*}
X_t= \left  (X_t^{\suptext{R}} +\mu^{\suptext{R}}(t) ,X_{\tau^{\suptext{S}}(t)}^{\suptext{S}} + \mu^{\suptext{S}}(t), \mu^{\suptext{L}}(t) \right ),
\end{equation*}
where  $\mu^i(t)$ for $i=L,R,S$ are deterministic martingalizing functions and thus the model is a two factor model. We assume that $(X^{\suptext{R}}_t,X^{\suptext{S}}_t)$ is a two-dimensional {\Levy} process with independent marginals  and that $\tau^{\suptext{S}}(t)$ is a deterministic time-change. The two independent {\Levy} processes are defined by their Laplace exponents 
\begin{equation*}
\kappa_i(z)=\ln \left (\EM\left[\e^{z X^i_1}\right] \right ), \quad (i=\suptext{R,S}) 
\end{equation*}
at $t=1$. Thus we are in the additive exponential-rational pricing kernel setup with
\begin{equation*}
A_t^{\suptext{R}}=\e^{\innerp {w_{\subtext{R}} }{X_t}} ,\quad A_t^{\suptext{S}}=\e^{\innerp {w_{\subtext{S}}}{X_t}} ,\quad A_t^{\suptext{L}}=\e^{\innerp{w_{\subtext{L}}}{X_t}}.
\end{equation*}
We set  $w_{\subtext{S}}=(0,1,0)$, $w_{\subtext{R}} =(a_{\subtext{R}}   ,b a_{\subtext{R}} ,0)$ and $w_{\subtext{L}}=a_{\subtext{L}} w_{\subtext{R}}  + (0,0,1)$. This means that the $b$ parameter determines the dependence between the $(A^{\suptext{R}}_t)$ and $(A^{\suptext{S}}_t)$ and it furthermore means that the randomness in $(A^{\suptext{L}}_t)$ is merely a (log)-linear transformation of the randomness in $(A^{\suptext{R}}_t)$. As in \eqref{RIeqFundamentalMartingale}, when $w_\subtext{R},w_\subtext{S},w_\subtext{L} \in \assumExp(X)$ we can solve for the martingalizing drifts to obtain
\begin{align*}
\mu_{\subtext{S}}(t) =& -\tau^{\suptext{S}}(t)\kappa_{\subtext{S}}(1),\quad \mu_{\subtext{R}} (t) =  -\tau^{\suptext{S}}(t)\left( \frac{\kappa_{\subtext{S}}(a_{\subtext{R}} b)}{a_{\subtext{R}} }-b\kappa_{\subtext{S}}(1)\right)-t\kappa_{\subtext{R}}  (a_{\subtext{R}} ),\\
\mu_{\subtext{L}}(t)=& -\tau^{\suptext{S}}(t) \left (\kappa_{\subtext{S}}(a_{\subtext{L}} a_{\subtext{R}} b)-a_{\subtext{L}}\kappa_{\subtext{S}}(a_{\subtext{R}} b) \right )  -t \left ( \kappa_{\subtext{R}} (a_{\subtext{L}} a_{\subtext{R}} )-a_{\subtext{L}}\kappa_{\subtext{R}} (a_{\subtext{R}} ) \right ).
\end{align*}
We set the deterministic time-change $\tau^{\suptext{S}}(t) =\int_0^t a(s) \diffIN s$, where $a(t)$ is a piecewise constant function
\begin{equation*}
a(t) = a_k,  t\in (T_{k-1},T_k].
\end{equation*}
Here $\{T_0,T_1,\dots,T_8\}=\{0,2,5,7,10,12,15,20,30\}$, is the set of maturities quoted in the YoY option market. We calibrate the constants $a_1,\dots,a_8$ starting from the smallest to the largest maturity by matching to the YoY cap/floor volatility surface allowing a perfect fit to at least one strike per maturity. The dependence structure between the $\mathrm{R}$ and $\mathrm{S}$ component is fully determined by the parameter $b$ thus reducing the model to a two-factor setup where the calculated expressions for YoY caplets, YoY swap prices and swaption prices can be applied directly without approximation. 

The nominal and the real curve are fitted by construction, but fitting the term-structure of YoY swap rates is less straightforward, since the swap rate depends on the full parameter set of the model, see Eq.~\eqref{RIEqYoYFairRate}. We choose to calibrate the $b^{\suptext{R}}(t)$ function to this term structure. There is enough flexibility in the $b^{\suptext{R}}(t)$ function to fit the YoY swap rates without error, but direct calibration results in a quite volatile $b^{\suptext{R}}(t)$ function which is hardly desirable. Therefore we instead fit an eight-knot Hermite polynomial with a non-smoothness penalty -- a similar choice is made in \citep{GretarssonEtAl2012} -- and we find that the loss of accuracy when doing this is insignificant. The flexible shape means that the correlation parameter $b$ and volatility parameter $a_{\subtext{R}} $ in practice cannot be identified simultaneously with $b^{\suptext{R}}(t)$ from the YoY swap curve. We solve this issue by simply fixing the $b$ and the $a_{\subtext{R}} $ parameters before calibration. In practice one needs only to avoid setting these parameters too low because the convexity correction becomes zero, by construction, if $b=0$ or $a_{\subtext{R}} =0$. In both of our calibration examples we fix these values at $b=30$ and $a_{\subtext{R}} =0.25$. 

Since swap rates are determined not only by the $b^{\suptext{R}}(t)$ function, but the full parameter set of the model, one cannot calibrate $b^{\suptext{R}} (t)$ independently of $a(t)$ and the parameters determining the $(X_t)$ process. On the other hand, YoY cap and floor prices are primarily affected by the $(A_t^{\suptext{S}})$ component and thus not very sensitive to the changes in values of $b^{\suptext{R}}(t)$ unless the correlation between $(X_t^{\suptext{R}})$ and $(X_t^{\suptext{S}})$ is very high, which means this dual identification problem is in fact easily solved in practice. The overall calibration algorithm can be reduced to:
\begin{enumerate}
\item Set $b^{\suptext{R}}(t)=1$, and calibrate $a_1,\dots,a_8$ and the parameters of determining the law of $(X_t^{\suptext{S}})$ to YoY cap/floor implied volatilities.
\item Calibrate $b^{\suptext{R}}(t)$ to the curve of YoY swap rates rates using least squares minimization with a penalty for $b_{\subtext{R}} (t)\notin(0,1)$.
\item Repeat Step 1 using instead the updated values of $b^{\suptext{R}}(t)$. 
\item Calibrate  $b_{\subtext{L}}(\cdot,\cdot)$ to swaption prices.
\end{enumerate}
The swaption calibration is done by calibrating the $b_{\subtext{L}}(\cdot,\cdot)$ function sequentially. This means that the $a_{\subtext{L}}$ parameter, which also determines overall variance cannot be calibrated at the same time and we therefore fix it at $a_{\subtext{L}}=1.3$ below. We parametrise the $b_{\subtext{L}}(\cdot,\cdot)$ function by setting $b_{\subtext{L}}(t,t+3m)=P^{\suptext{N}}(0,t)L(0,t,t+3m)+\tilde{b}_{\subtext{L}}(t)$. The function $\tilde{b}_{\subtext{L}}(t)$ is then piecewise constant in relation to the swaption maturities we can observe, i.e.
\begin{equation*}
\tilde{b}_{\subtext{L}}(t)=\tilde{b}_k,  t\in (T_{k},T_{k+1}]
\end{equation*} 
with $\{T_0,T_1,\dots,T_{11}\}=\{3m,6m,1Y,2Y,3Y,5Y,7Y,10Y,15Y,20Y,30Y,31Y\}$.
\subsection*{Gaussian example}
We first assume that $(X^{\suptext{R}}_t)$ and $(X^{\suptext{S}}_t)$ are independent standard Brownian motions with Laplace exponent at $t=1$ $\kappa_{\subtext{R}}(z)=\kappa_{\subtext{S}} (z)=\frac{1}{2} z^2$, i.e. in the spirit of Specification~\ref{RIExTimeChangeWP}. We would not expect a Gaussian or log-normal model to be well suited to reproduce implied volatility smiles, but we nevertheless believe that a Gaussian setup is illustrative as a benchmark case of study.

As discussed above, we first fix the $b=30$ and $a_{\subtext{R}} =0.25$, and then proceed with the calibration algorithm described above. In Step 1 we choose to calibrate $a_1,\dots,a_8$ to at-the-money implied volatility. This is done sequentially starting with calibrating $a_1$ to the two-year YoY implied volatility and $a_2$ to the five-year YoY implied volatility, and so forth. We note that the value of $a_1$ affects not just the two-year maturity but all YoY option maturities (larger than two years) since we are calibrating directly to caps, which have annual payments every year until maturity. Thus the sequential nature of the calibration of these parameters is key. 
The result of this calibration can be seen in Figure~\ref{RIf:GaussianATM} where we plot at-the-money implied volatility from the market. We furthermore plot an example of the model smile in the for a fixed maturity of five years. While the model smile is not completely flat, the Gaussian structure is, by construction, not suited for smile fitting. Finally the YoY swap rates are fitted without error by adjusting the $b_{\subtext{R}} (t)$ function. YoY swap rates requires only mild adjustment of the $b_{\subtext{R}} (t)$ function away from its default value of one.\footnote{ All parameter values are available upon request. }
\begin{figure}[!htbp]
\centering
\includegraphics[scale=0.4]{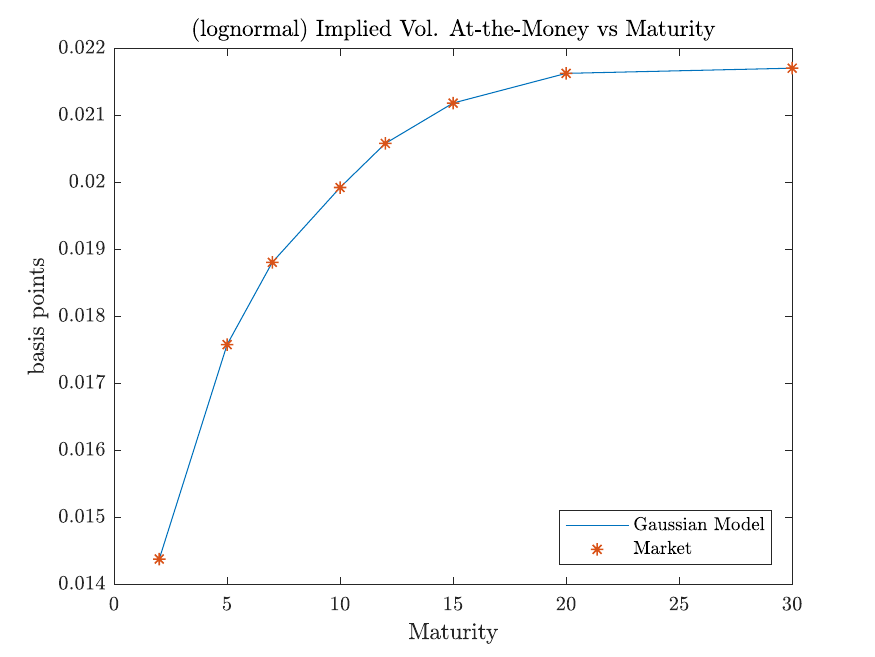}
\includegraphics[scale=0.4]{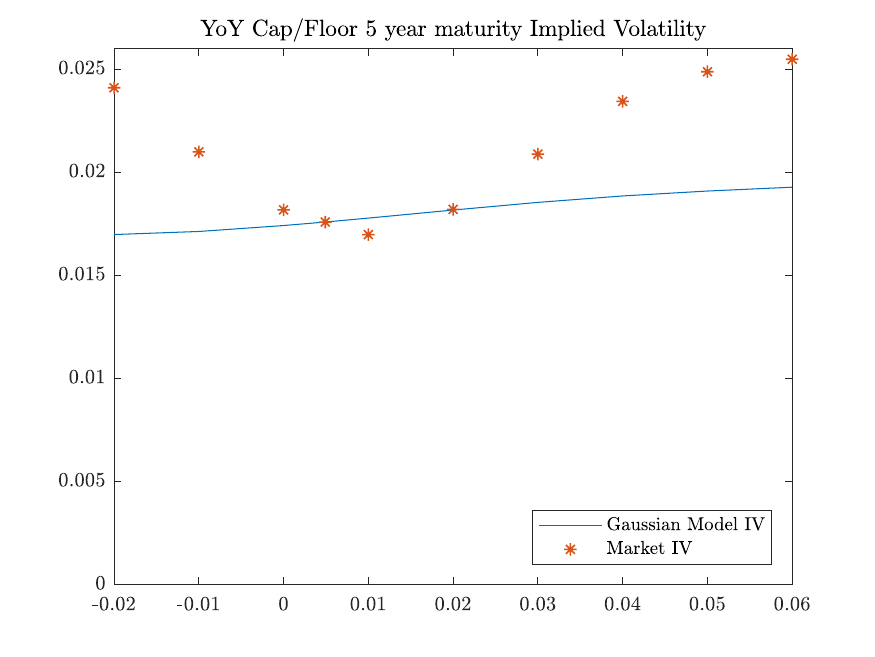}\\
\includegraphics[scale=0.4]{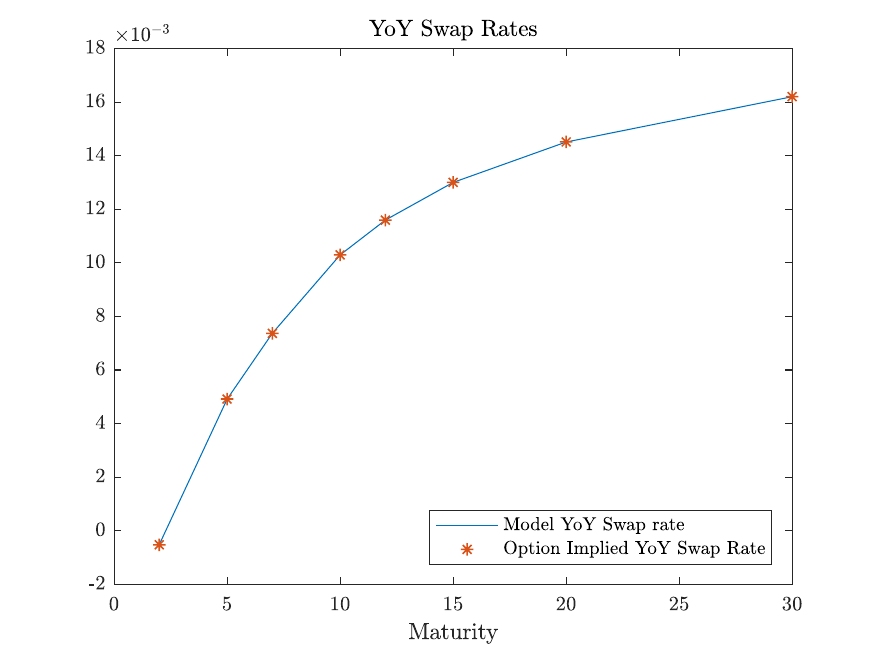}
\includegraphics[scale=0.4]{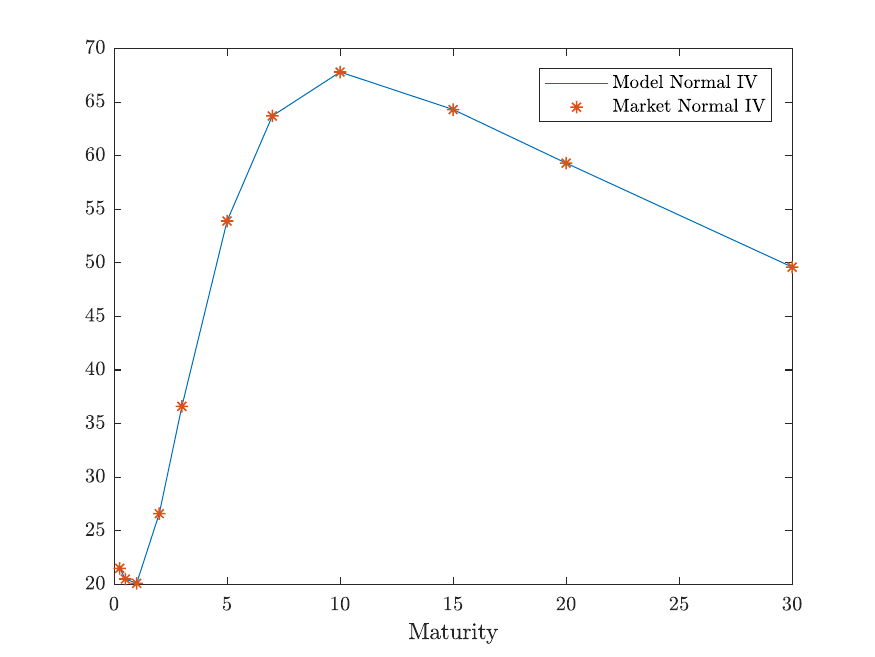}
\caption{\footnotesize Upper Left: At-the-money lognormal implied volatility of YoY caps model vs market using data from 1 January 2015. Upper Righ: YoY cap/floor implied volatility. Lower Left: YoY option implied swap rates, market vs. (Gaussian) model. Lower Right: At-the-money normal swaption implied volatility. 1st of January 2015.}
\label{RIf:GaussianATM}
\end{figure}
 
When fitting to swaptions we fix $a_{\subtext{L}}=1.3$ as explained above. Then we sequentially fit the $b_{\subtext{L}}(t)$ function directly to swaption normal implied volatility starting from the three-month maturity up to the thirty-year maturity. The model is made to fit at-the-money, swaptions only, and the results are plotted in the lower right quadrant of Figure~\ref{RIf:GaussianATM}. 

\subsection*{NIG example}
 To produce a model more in line with the volatility smile, we instead assume that $(X^{\suptext{R}}_t,X^{\suptext{S}}_t)$ are independent Normal Inverse Gaussian (NIG) processes, see for example \citep{Barndorff-Nielsen98}. We have that the Laplace exponent at $t=1$ is given by 
\begin{equation*}
\kappa_i (z) = -\nu_i \left(\sqrt{\nu_i^2 - 2 z\theta_i - z^2 \sigma_i^2} -\nu_i\right), i=R,S,
\end{equation*}
expressed in terms of the parametrisation $(\nu_i  , \theta_i ,\sigma_i )$ as
where $\nu_i , \sigma_i  > 0$ and $\theta_i \in \R.$ Since we want to control variance primarily using the time-change $\tau^{\suptext{S}}(t)$, we set $\sigma_i = \sqrt{1-\theta_i^2/\nu_i^2}$ so that $X^{\suptext{R}}_1$ and $X^{\suptext{S}}_1$ both have variance of 1. Since we are only calibrating to the YoY cap/floor smile the full distribution of both marginals in  $(X_t^{\suptext{R}}, X_t^{\suptext{S}})$ is not identified by the data. For simplicity, we also set $\nu_{\subtext{S}}=\nu_{\subtext{R}} $ and $\theta_{\subtext{S}}=\theta_{\subtext{R}} $. As in the Gaussian case we prefix  $b=30$ and  $a_{\subtext{R}} =0.25$. In the NIG case we split Step 1 in the calibration process by first fixing the rate of time at a constant, i.e. $a_i=a$, and then calibrate $a,\nu_{\subtext{S}},\theta_{\subtext{S}}$ to the whole YoY cap/floor implied volatility surface using the lsqnonlin algorithm in Matlab. Thereafter, the individual $a_1,\dots,a_8$ are calibrated sequentially such that the model fits the at-the-money implied volatilities without error. The rest of the algorithm is followed exactly like in the Gaussian case.
\begin{figure}[H]
\centerfloat
\includegraphics[scale=0.75]{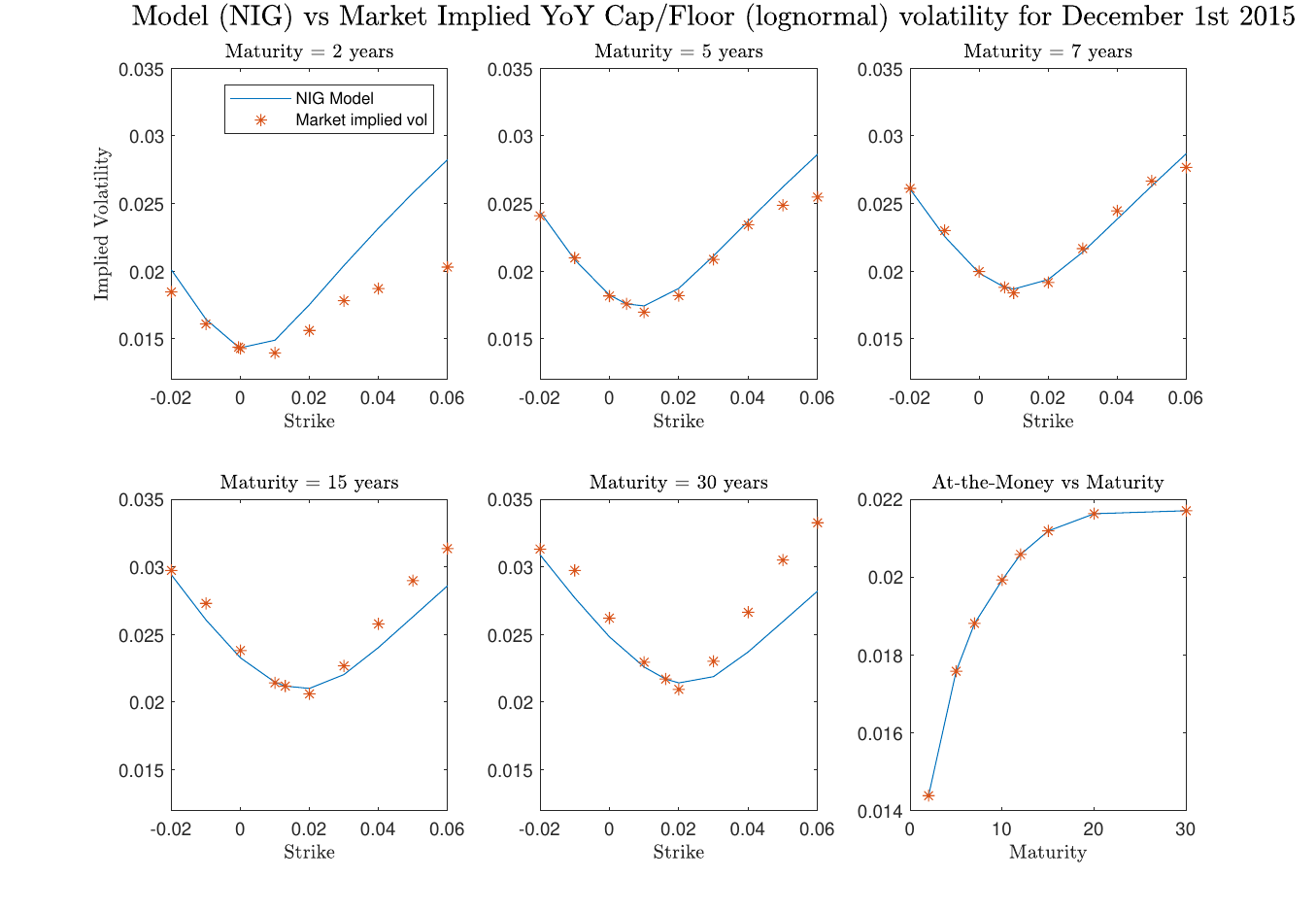}
\caption{Lognormal implied volatility for YoY cap/floors model vs market using data from 1 January 2015.}
\label{RIf:NIGYoYIV}
\end{figure}
The fit to YoY swap rates is indistinguishable from the graph in Figure~\ref{RIf:GaussianATM} and is not plotted again. We set $a_{\subtext{L}}=1.3$ and fit the $b_{\subtext{L}}(t)$ function to the same dataset of swaptions on one-year underlying swaps. The resulting fit is again indistinguishable from the fit in Figure~\ref{RIf:GaussianATM} and the calibrated $b_{\subtext{L}}(t)$ function is available upon request.
In Figure~\ref{RIf:NIGYoYIV}, we plot model vs market volatility smiles for select maturities. We have only used one time-dependent scaling, so the model fits the at-the-money level without error. The remaining option prices are in principle fitted using only two parameters $\nu_{\subtext{S}}$ and $\theta_{\subtext{S}}$. Thus we would not expect a perfect fit for all maturities. In general, any {\Levy} process is well known to exhibit a flattening smile as maturities are increased which often results in a slightly too steep smile in the short end and too flat in the long end. These problems could be resolved by introducing further time-inhomogeneity or by applying stochastic time-changes, but with the virtue of model simplicity taken into account, we view the calibrated setup as satisfactory.
\section{Conclusions}
This paper focuses primarily on the theoretical development of stochastic, rational term-structure models using pricing kernels suitable for the pricing of nominal and inflation-linked financial instruments. We demonstrate how this model class can be constructed with a view towards calibration to market data. We furthermore show how the models extend the classical short rate approach to inflation modelling. We expect future research to be focused more on the numerics of risk management within the model as well as calibration to a broader set of market instruments such as joint calibration of year-on-year and zero-coupon caps, as well as including time-series information in the calibration problem. 
\section*{Appendix}
\appendix
\section{Lemmas}
This section contains a number of lemmas used for the derivation of the formulae for option pricing. 
\begin{Lemma} \label{RILemKey3} 
Let $q_t(z) = \E_t [ \exp ( \innerp{z}{Y})]$ be the moment generating function of $Y = [Y_1,Y_2]$, a random vector with conditional distribution $m_t$. Assume $R>0$, 
$q(-R,1)<\infty$ and $\alpha >0$. Then, 
\begin{equation*}
\E_t \left [ \e^{Y_2} (1- \alpha \e^{Y_1 }  )^+ \right ]	=  \frac{1}{ \pi} \int_{\R^+} \Re \frac{ \alpha ^{-(R+ i u) }   q_t(-(R+iu),1)}{(R+i u) (1+R+iu)}   \diff u. 
\end{equation*} 
\end{Lemma}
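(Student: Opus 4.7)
The plan is to obtain the formula by a damped Fourier representation of the payoff $(1-\alpha \mathrm{e}^{y})^+$ and then apply Fubini.

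First, I would derive the representation
\begin{equation*}
(1-\alpha \mathrm{e}^{y})^+ \;=\; \frac{1}{2\pi}\int_{-\infty}^{\infty} \frac{\alpha^{-(R+iu)}\,\mathrm{e}^{-(R+iu)y}}{(R+iu)(1+R+iu)}\,\mathrm{d} u,
\end{equation*}
valid for any $R>0$. This is verified by computing the Fourier transform of $g(y):=\mathrm{e}^{Ry}(1-\alpha\mathrm{e}^{y})^+$, which is integrable since the damping factor $\mathrm{e}^{Ry}$ controls the tail as $y\to-\infty$ while the payoff vanishes for $y>-\ln\alpha$. Splitting the integral $\int_{-\infty}^{-\ln\alpha}\mathrm{e}^{(R+iu)y}(1-\alpha\mathrm{e}^{y})\,\mathrm{d} y$ into two elementary pieces gives $\hat g(u)=\alpha^{-(R+iu)}[(R+iu)^{-1}-(1+R+iu)^{-1}]=\alpha^{-(R+iu)}/[(R+iu)(1+R+iu)]$, and Fourier inversion yields the stated representation.

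Next, I would multiply by $\mathrm{e}^{Y_2}$, take the conditional expectation, and interchange expectation and integration:
\begin{equation*}
\E_t\!\left[\mathrm{e}^{Y_2}(1-\alpha\mathrm{e}^{Y_1})^+\right]
= \frac{1}{2\pi}\int_{-\infty}^{\infty}\frac{\alpha^{-(R+iu)}}{(R+iu)(1+R+iu)}\,\E_t\!\left[\mathrm{e}^{-(R+iu)Y_1+Y_2}\right]\mathrm{d} u.
\end{equation*}
The integrand inside the expectation has absolute value $\mathrm{e}^{-R Y_1+Y_2}$, so by the assumption $q_t(-R,1)<\infty$ together with the quadratic decay of $1/[(R+iu)(1+R+iu)]$, the Fubini theorem applies. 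Recognising $\E_t[\mathrm{e}^{-(R+iu)Y_1+Y_2}]=q_t(-(R+iu),1)$ gives the two-sided integral
\begin{equation*}
\frac{1}{2\pi}\int_{-\infty}^{\infty}\frac{\alpha^{-(R+iu)}\,q_t(-(R+iu),1)}{(R+iu)(1+R+iu)}\,\mathrm{d} u.
\end{equation*}

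Finally, since the left-hand side is real and the integrand at $-u$ is the complex conjugate of the integrand at $u$ (because $q_t(-(R-iu),1)=\overline{q_t(-(R+iu),1)}$ and the rational factor behaves the same way), the two-sided integral collapses to $\pi^{-1}\int_0^{\infty}\Re\{\cdot\}\,\mathrm{d} u$, which is the asserted formula. The only genuine technical point is the Fubini interchange, handled by the damping exponent $R$ together with the hypothesis $q_t(-R,1)<\infty$; everything else is routine Fourier analysis.
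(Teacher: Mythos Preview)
Your proposal is correct and follows essentially the same route as the paper's own proof: introduce the damped payoff $g(y)=\mathrm{e}^{Ry}(1-\alpha\mathrm{e}^{y})^{+}$, compute its Fourier transform $\hat g(u)=\alpha^{-(R+iu)}/[(R+iu)(1+R+iu)]$, invert, apply Fubini using $q_t(-R,1)<\infty$ and the quadratic decay of $\hat g$, and then fold the two-sided integral onto $\mathbb{R}^{+}$ via Hermitian symmetry. The only cosmetic difference is order of presentation---the paper writes the expectation as an integral against $m_t$ first and then inserts the Fourier inversion, whereas you state the pointwise Fourier representation of the payoff first---but the analytic content is identical.
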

We omit the proof of this lemma since it is standard.
\begin{Lemma} \label{RILemKey2} 
Let $q_t(z) = \E_t [ \exp ( \innerp{z}{Y})]$ be the moment generating function of $Y = [Y_1,Y_2]$, a random vector with conditional distribution $m_t$. Assume $R<-1$, $q_t(-R,1)<\infty$ and $\alpha >0 $. Then, 
\begin{equation*}
\E_t \left [ \e^{Y_2} (\alpha \e^{Y_1}  -1   )^+ \right ]	=  \frac{1}{ \pi} \int_{\R^+} \Re \frac{  \alpha ^{-(R+ i u) }   q_t(-(R+iu),1)}{(R+i u) (1+R+iu)} \diff u. 
\end{equation*} 
\end{Lemma}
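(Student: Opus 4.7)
The plan is to mimic the proof of Lemma~\ref{RILemKey3}, adjusting the cut-off and the admissible range of $R$ so that the relevant auxiliary function remains integrable. Because the payoff $(\alpha \e^{Y_1}-1)^+$ is supported on $\{Y_1 \ge -\ln \alpha\}$ rather than $\{Y_1 \le -\ln\alpha\}$, the dampening factor $\e^{R y_1}$ must now decay as $y_1 \to +\infty$, which is precisely why the hypothesis changes from $R>0$ to $R<-1$ (we need $R<-1$ so that both $\e^{(R+1)y}$ and $\e^{Ry}$ are integrable on $[-\ln\alpha,\infty)$).

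Concretely, I would introduce
\begin{equation*}
P_R(y) = \e^{R y}\bigl(\alpha \e^{y} - 1\bigr)\,\mathbf{1}\{y \ge -\ln \alpha\},
\end{equation*}
which for $R<-1$ is absolutely integrable on $\R$. A direct calculation of its Fourier transform gives
\begin{equation*}
\int_\R \e^{i u y} P_R(y)\, dy = \alpha^{-(R+iu)}\left(\frac{1}{R+iu} - \frac{1}{R+1+iu}\right) = \frac{\alpha^{-(R+iu)}}{(R+iu)(1+R+iu)},
\end{equation*}
which is absolutely integrable in $u$ thanks to the quadratic decay of the denominator. Applying Fourier inversion in the $y_1$-variable and writing $\e^{Y_2}(\alpha\e^{Y_1}-1)^+ = \e^{Y_2}\e^{-RY_1}P_R(Y_1)$ reduces the problem to interchanging the order of integration between the $m_t$-expectation and the inverse Fourier integral.

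The Fubini step is the only subtle point, but it is justified in exactly the same way as in Lemma~\ref{RILemKey3}: the product of $|\hat P_R(u)|$ (integrable in $u$) with $\e^{y_2 - R y_1}$ integrated against $m_t$ yields $\|\hat P_R\|_{L^1}\cdot q_t(-R,1)$, which is finite by assumption. After the swap, the inner integral collapses to $q_t(-(R+iu),1)$, yielding
\begin{equation*}
\E_t\!\left[\e^{Y_2}(\alpha \e^{Y_1}-1)^+\right] = \frac{1}{2\pi}\int_\R \frac{\alpha^{-(R+iu)} q_t(-(R+iu),1)}{(R+iu)(1+R+iu)}\, du.
\end{equation*}
Finally, since the left-hand side is real and the integrand is Hermitian in $u$ (the transform of a real function), the integral over $\R$ equals twice the real part of the integral over $\R^+$, producing the stated formula. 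I do not anticipate any genuine obstacle—the only care needed is to verify $R<-1$ gives integrability at $+\infty$ for both terms of $P_R$, which is what forces the strict inequality rather than merely $R<0$.
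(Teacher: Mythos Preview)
Your proposal is correct and follows essentially the same approach as the paper, which simply states that the proof is as in Lemma~\ref{RILemKey3} with the modified dampened payoff $P_R(y)=\e^{Ry}(\alpha\e^{y}-1)\mathbf{1}\{y\ge -\ln\alpha\}\in L^1(\R)$ whose Fourier transform is computed to be $\alpha^{-(R+iu)}/\bigl((R+iu)(1+R+iu)\bigr)\in L^1(\R)$. Your expanded treatment of the Fubini justification and the explanation of why $R<-1$ (rather than merely $R<0$) is needed are accurate and match the paper's intent.
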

\begin{Lemma} \label{RILemKey}
Let $q_t(z) = \E_t [ \exp ( \innerp{z}{Y})]$ be the conditional moment generating function of $Y = [Y_1,Y_2,Y_3]$, a random vector with conditional distribution $m_t$. Assume $R>0$ and that 
$
q_t(-R,1,1)+q_t(-R,1,0)<\infty.
$
Then 
\begin{equation*}
\E_t \left [ (1+ \e^{Y_3} ) \e^{Y_2} ( 1- \e^{Y_1 }  )^+ \right ]	=  \frac{1}{ \pi} \int_{\R^+} \Re \frac{  q_t(-(R+iu),1,0)+ q_t(-(R+iu),1,1) }{(R+i u) (1+R+iu)}   \diff u 
\end{equation*} 
\end{Lemma}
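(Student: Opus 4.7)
The statement is a direct extension of Lemma~\ref{RILemKey3} once one notices that the factor $(1+\e^{Y_3})$ simply splits the expectation into two pieces that are individually amenable to Fourier inversion. My plan is therefore to reduce the claim to the already-established Lemma~\ref{RILemKey3} (applied with $\alpha=1$) by linearity and a minor repackaging.

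First I would write
\begin{equation*}
\E_t\!\left[(1+\e^{Y_3})\e^{Y_2}(1-\e^{Y_1})^+\right]
= \E_t\!\left[\e^{Y_2}(1-\e^{Y_1})^+\right] + \E_t\!\left[\e^{Y_2+Y_3}(1-\e^{Y_1})^+\right].
\end{equation*}
For the first summand, I apply Lemma~\ref{RILemKey3} to the pair $(Y_1,Y_2)$ with $\alpha=1$; the integrability hypothesis $q_t(-R,1,0)<\infty$ is precisely the condition required there, and the output is the integral of $q_t(-(R+iu),1,0)/[(R+iu)(1+R+iu)]$ over $\R^+$. For the second summand, I repeat the Fourier argument from Lemma~\ref{RILemKey3} verbatim but with $Y_2$ replaced by $Y_2+Y_3$: the moment generating function of $(Y_1,Y_2+Y_3)$ at $(-(R+iu),1)$ equals $q_t(-(R+iu),1,1)$, and the integrability requirement reduces to $q_t(-R,1,1)<\infty$.

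Combining the two results and using linearity of the integral yields the stated formula. The only slightly non-routine point is to confirm that Fubini is applicable in the second term; this follows just as in Lemma~\ref{RILemKey3}, bounded by $\int_\R|\hat P_R(u)|\,\diff u\cdot q_t(-R,1,1)<\infty$ where $P_R(y)=\e^{Ry}(1-\e^y)\mathbf{1}_{\{y\leq 0\}}$ and $\hat P_R(u)=1/[(R+iu)(1+R+iu)]$. Since no steps beyond those already carried out in Lemma~\ref{RILemKey3} are needed, there is no real obstacle; the proof is essentially a bookkeeping exercise, and one can condense the write-up to a single sentence citing Lemma~\ref{RILemKey3} twice.
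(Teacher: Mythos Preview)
Your proposal is correct and matches the paper's proof essentially verbatim: the paper also splits the expectation as $\E_t[\e^{Y_2}(1-\e^{Y_1})^+] + \E_t[\e^{Y_2+Y_3}(1-\e^{Y_1})^+]$ and then invokes Lemma~\ref{RILemKey3} on each term. Your additional remarks on the Fubini justification for the second summand are accurate but go slightly beyond what the paper writes out.
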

\begin{proof}
Write 
\begin{equation*}
\E_t \left [ (1+ \e^{Y_3} ) \e^{Y_2} ( 1- \e^{Y_1 }  )^+ \right ] = \E_t \left [ \e^{Y_2} ( 1- \e^{Y_1 }  )^+ \right ]  + \E_t \left [  \e^{Y_3 + Y_2} ( 1- \e^{Y_1 }  )^+ \right ] 
\end{equation*}
 and apply Lemma~\ref{RILemKey3} to each term. 
\end{proof}
\begin{Lemma} \label{RILemBS}
Let $X \sim N(\mu_x, \sigma_x^2)$ and $Y\sim N(\mu_y,\sigma_y^2)$  with $\Cov[X,Y] = \sigma_{xy}$ and assume that $\alpha > 0$. Then
\begin{equation*}
\E\left[\e^X ( \alpha \e^Y -1)^+ \right]  = \alpha  \e^{ \delta + \mu_y + \tfrac{1}{2} ( 1+ 2b ) \sigma_y^2 } N(  d +\sigma_y ) -   \e^{ \delta  } N (    d  )  
 \end{equation*}
 where $a = \mu_x - \frac{\sigma_{xy}}{\sigma_y^2}  \mu_y+ (\sigma_x^2-\sigma^2_{xy}/\sigma_y^2)/2$, $b=\sigma_{xy}/\sigma_y^2$, $\delta= a + b \mu_y + (b \sigma_y)^2/2$ and $d = \tfrac{1}{\sigma_y}  (  \ln \alpha+ b\sigma_y^2 + \mu_y) $. 
\end{Lemma}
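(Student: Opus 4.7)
The plan is to reduce this to a one-dimensional Gaussian expectation by conditioning on $Y$, and then to perform a standard Black-Scholes-type calculation.

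First I would exploit the joint Gaussian structure: conditional on $Y$, the random variable $X$ is Gaussian with mean $\mu_x + \frac{\sigma_{xy}}{\sigma_y^2}(Y-\mu_y)$ and variance $\sigma_x^2 - \sigma_{xy}^2/\sigma_y^2$. Setting $b=\sigma_{xy}/\sigma_y^2$, the conditional Laplace transform gives
\begin{equation*}
\E[\e^X \mid Y] = \exp\!\left(\mu_x + b(Y-\mu_y) + \tfrac{1}{2}\bigl(\sigma_x^2 - \sigma_{xy}^2/\sigma_y^2\bigr)\right) = \e^{a + bY},
\end{equation*}
which is exactly the definition of $a$ in the statement. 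Since $(\alpha \e^Y - 1)^+$ is $\sigma(Y)$-measurable, the tower property yields
\begin{equation*}
\E\bigl[\e^X (\alpha \e^Y - 1)^+\bigr] = \E\bigl[\e^{a+bY}(\alpha \e^Y - 1)^+\bigr].
\end{equation*}

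Next, expanding $(\alpha \e^Y - 1)^+ = (\alpha \e^Y - 1)\mathbf{1}_{\{Y > -\ln\alpha\}}$ splits the right-hand side into $\alpha\,\E\bigl[\e^{(b+1)Y}\mathbf{1}_{\{Y>-\ln\alpha\}}\bigr] \e^{a} - \E\bigl[\e^{bY}\mathbf{1}_{\{Y>-\ln\alpha\}}\bigr] \e^{a}$. For each term I would use the standard identity
\begin{equation*}
\E\bigl[\e^{cY}\mathbf{1}_{\{Y>-\ln\alpha\}}\bigr] = \e^{c\mu_y + c^2\sigma_y^2/2}\,\normcdf\!\left(\tfrac{\ln\alpha + \mu_y + c\sigma_y^2}{\sigma_y}\right),
\end{equation*}
which follows either by completing the square in the Gaussian density or by the measure change with Radon–Nikodym derivative $\e^{cY}/\E[\e^{cY}]$ that shifts the mean of $Y$ to $\mu_y + c\sigma_y^2$.

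Finally, I would specialise the formula to $c=b+1$ and $c=b$ and verify the algebra. For $c=b$ the prefactor collapses to $\e^{\delta}$ by the very definition $\delta = a + b\mu_y + (b\sigma_y)^2/2$, and the argument of $\normcdf$ becomes $d$. For $c=b+1$ a short computation gives prefactor $\e^{\delta + \mu_y + (1+2b)\sigma_y^2/2}$ and argument $d + \sigma_y$, matching the claim exactly. There is no real obstacle here beyond bookkeeping the constants: the only mild subtlety is the degenerate case $\sigma_y = 0$ (in which $b$ and $d$ are undefined), but this can be dismissed by the tacit assumption $\sigma_y>0$ implicit in the statement of the quantities $b$ and $d$; otherwise the expectation reduces to a trivial deterministic payoff and can be handled separately.
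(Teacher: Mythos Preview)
Your proposal is correct and follows essentially the same route as the paper: condition on $Y$ to obtain $\E[\e^X\mid Y]=\e^{a+bY}$, apply the tower property, split the truncated expectation into two terms, and evaluate each by completing the square in the Gaussian density. The paper writes out the completing-the-square identity explicitly rather than quoting the standard formula $\E[\e^{cY}\mathbf{1}_{\{Y>-\ln\alpha\}}]=\e^{c\mu_y+c^2\sigma_y^2/2}\normcdf\bigl((\ln\alpha+\mu_y+c\sigma_y^2)/\sigma_y\bigr)$, but this is purely presentational.
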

\begin{proof}
By conditional of normals  $\E[\e^X \given   Y] = \e^{a+bY}$. Noting that $-\tfrac{ (x-\mu)^2}{2 \sigma^2 } + bx   = b \mu +\tfrac{b^2 \sigma^2}{2} -\tfrac{ (x-\mu- b \sigma^2)^2}{2 \sigma^2 }  $, the tower property yields
\begin{align*}
\E \left [\e^X ( \alpha \e^Y  - 1 )^+ \right  ]  = \E \left [\e^{a+bY} ( \alpha \e^Y -1) 1 _{ \{Y \geq \ln  \tfrac{1}{\alpha} \} } \right ]  =  \alpha  \e^{ \delta + \mu_y +  \tfrac{1}{2} ( 1+ 2b ) \sigma_y^2 } N(  d +\sigma_y ) -   \e^{ \delta  } N (    d  ),  
\end{align*}
as sought.
\end{proof}
\begin{Lemma} \label{RILemBS2}
Let $X,Y$ and $\alpha$ be as in  Lemma~\ref{RILemBS}. Then
\begin{equation*}
\E\left[\e^X ( 1 - \alpha \e^Y )^+ \right]  =    \e^{ \delta  } N (-d)   - \alpha  \e^{ \delta + \mu_y + \tfrac{1}{2} ( 1+ 2b ) \sigma_y^2 } N\left(-d-\sigma_y \right) 
 \end{equation*}
where $a,b, \delta$ and $d$ are as in Lemma~\ref{RILemBS}. 
\end{Lemma}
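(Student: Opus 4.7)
The strategy is essentially the same as in Lemma~\ref{RILemBS}, with the sign in the payoff reversed. First I would recall that, since $(X,Y)$ is jointly Gaussian, the conditional expectation is $\EM[\e^X\mid Y]=\e^{a+bY}$ where $a$ and $b$ are chosen exactly as in Lemma~\ref{RILemBS}; this uses only the formula for the conditional mean and variance of $X$ given $Y$ together with the Gaussian MGF, and the constants are rigged so that $a+bY$ absorbs the needed $\tfrac12\mathrm{Var}(X\mid Y)$ term.

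Next I would apply the tower property to reduce to a one-dimensional Gaussian calculation:
\begin{equation*}
\EM\bigl[\e^X(1-\alpha\e^Y)^+\bigr]
=\EM\bigl[\e^{a+bY}(1-\alpha\e^Y)\mathbf{1}_{\{Y\le -\ln\alpha\}}\bigr].
\end{equation*}
The only structural difference from Lemma~\ref{RILemBS} is that the indicator is now $\{Y\le -\ln\alpha\}$ rather than $\{Y\ge -\ln\alpha\}$, and the two terms appear with swapped signs. Splitting into two integrals and completing the square via the identity $-\tfrac{(y-\mu)^2}{2\sigma^2}+cy=c\mu+\tfrac{c^2\sigma^2}{2}-\tfrac{(y-\mu-c\sigma^2)^2}{2\sigma^2}$ (used once with $c=b$ and once with $c=b+1$, as in the proof of Lemma~\ref{RILemBS}) reduces each integral to a standard normal tail.

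The first integral contributes $\e^{a+b\mu_y+(b\sigma_y)^2/2}\,\normcdf\!\bigl((-\ln\alpha-\mu_y-b\sigma_y^2)/\sigma_y\bigr)=\e^{\delta}\normcdf(-d)$, using $\delta=a+b\mu_y+(b\sigma_y)^2/2$ and $d=\tfrac{1}{\sigma_y}(\ln\alpha+b\sigma_y^2+\mu_y)$. The second integral contributes $\alpha\,\e^{a+(b+1)\mu_y+((b+1)\sigma_y)^2/2}\,\normcdf\!\bigl((-\ln\alpha-\mu_y-(b+1)\sigma_y^2)/\sigma_y\bigr)=\alpha\,\e^{\delta+\mu_y+\tfrac{1}{2}(1+2b)\sigma_y^2}\normcdf(-d-\sigma_y)$, where the algebra in the exponent uses $(b+1)^2=b^2+2b+1$. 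Subtracting the second from the first yields exactly the claimed formula. There is no real obstacle here beyond bookkeeping of signs; the only point to watch is that $N$ evaluated at $-x$ replaces the $N(x)$ factors of Lemma~\ref{RILemBS}, reflecting the opposite half-line of integration.
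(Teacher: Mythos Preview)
Your proof is correct and matches the first of the two approaches the paper itself suggests, namely repeating the computation of Lemma~\ref{RILemBS} with the indicator on the opposite half-line. The paper also notes that one can reach the same formula by a put-call parity argument, i.e.\ by writing $(1-\alpha\e^Y)^+ = (1-\alpha\e^Y) + (\alpha\e^Y-1)^+$ and invoking Lemma~\ref{RILemBS} together with $\EM[\e^{X+Y}]$ and $\EM[\e^{X}]$, which avoids redoing the integrals.
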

\begin{proof}
As Lemma~\ref{RILemBS}.
\end{proof}

\bibliography{bibliography/bibliography}

\end{document}